\documentclass[11pt,letterpaper]{article}

\usepackage[margin=1in]{geometry}
\usepackage[affil-it]{authblk}
\usepackage{amsmath,amssymb,amsthm,mathtools}
\usepackage{graphicx}
\usepackage{microtype}
\usepackage{booktabs,array,enumitem}
\usepackage{algorithm,algpseudocode}
\usepackage{placeins,needspace}
\usepackage{xcolor}
\usepackage{aliascnt}
\usepackage[colorlinks=true,linkcolor=blue!55!black,citecolor=green!45!black,urlcolor=blue!65!black]{hyperref}
\usepackage[capitalize,noabbrev]{cleveref}
\crefname{appendix}{appendix}{appendices}
\Crefname{appendix}{Appendix}{Appendices}
\makeatletter
\providecommand*{\theHALG@line}{}
\renewcommand*{\theHALG@line}{\thealgorithm.\arabic{ALG@line}}
\makeatother
\usepackage{url}

\allowdisplaybreaks
\setlist[itemize]{leftmargin=1.6em,itemsep=0.25em,topsep=0.4em}
\setlist[enumerate]{leftmargin=1.8em,itemsep=0.25em,topsep=0.4em}

\newtheorem{theorem}{Theorem}[section]
\newaliascnt{proposition}{theorem}
\newtheorem{proposition}[proposition]{Proposition}
\aliascntresetthe{proposition}
\newaliascnt{lemma}{theorem}
\newtheorem{lemma}[lemma]{Lemma}
\aliascntresetthe{lemma}
\newaliascnt{corollary}{theorem}

\aliascntresetthe{corollary}
\theoremstyle{definition}
\newaliascnt{definition}{theorem}
\newtheorem{definition}[definition]{Definition}
\aliascntresetthe{definition}
\newaliascnt{problem}{theorem}
\newtheorem{problem}[problem]{Problem}
\aliascntresetthe{problem}
\newaliascnt{construction}{theorem}

\aliascntresetthe{construction}
\crefname{construction}{construction}{constructions}
\Crefname{construction}{Construction}{Constructions}
\crefname{problem}{problem}{problems}
\Crefname{problem}{Problem}{Problems}
\theoremstyle{remark}
\newaliascnt{remark}{theorem}

\aliascntresetthe{remark}

\newcommand{\C}{\mathbb C}
\newcommand{\R}{\mathbb R}
\newcommand{\cH}{\mathcal H}
\newcommand{\cK}{\mathcal K}
\newcommand{\cM}{\mathcal M}

\newcommand{\ket}[1]{\lvert #1\rangle}
\newcommand{\bra}[1]{\langle #1\rvert}
\newcommand{\proj}[1]{\lvert #1\rangle\!\langle #1\rvert}
\newcommand{\norm}[1]{\lVert #1\rVert}
\newcommand{\ip}[2]{\langle #1,#2\rangle}
\newcommand{\tr}{\operatorname{tr}}
\newcommand{\spanR}{\operatorname{span}_{\R}}
\newcommand{\diag}{\operatorname{diag}}
\newcommand{\Dtr}{D_{\mathrm{tr}}}
\newcommand{\Coarse}{\textsc{PrepareKernelComponent}}
\newcommand{\FilterKernel}{\textsc{FilterKernel}}
\newcommand{\CorrectSolution}{\textsc{CorrectSolution}}
\newcommand{\Solver}{\textsc{Solve}}
\newcommand{\rD}{\mathsf D}
\newcommand{\rG}{\mathsf G}
\newcommand{\rA}{\mathsf A}
\newcommand{\rF}{\mathsf F}
\newcommand{\rC}{\mathsf C}
\newcommand{\rS}{\mathsf S}
\newcommand{\rT}{\mathsf T}
\newcommand{\rB}{\mathsf B}
\newcommand{\rE}{\mathsf E}
\newcommand{\Finite}{\textsc{ImplementTransducer}}
\newcommand{\transduce}[1]{\stackrel{#1}{\rightsquigarrow}}
\newcommand{\DownTransduce}{\scalebox{0.7}{\rotatebox[origin=c]{270}{$\rightsquigarrow$}}}

\title{Simultaneously Query-Optimal Quantum Linear-System Algorithm}
\author[1,2,3]{Minbo Gao\thanks{%
  \href{mailto:gaomb@ios.ac.cn}{\nolinkurl{gaomb@ios.ac.cn}} or
  \href{mailto:gmb17@tsinghua.org.cn}{%
    \nolinkurl{gmb17@tsinghua.org.cn}}.}}
\author[4]{Zhengfeng Ji\thanks{%
  \href{mailto:jizhengfeng@tsinghua.edu.cn}{%
    \nolinkurl{jizhengfeng@tsinghua.edu.cn}}.}}
\author[1,2]{Chenghua Liu\thanks{%
  \href{mailto:liuch.russell@gmail.com}{%
    \nolinkurl{liuch.russell@gmail.com}}.}}
\affil[1]{Institute of Software, Chinese Academy of Sciences.}
\affil[2]{University of Chinese Academy of Sciences.}
\affil[3]{TraverseQuantum Co., Ltd.}
\affil[4]{Tsinghua University.}
\date{}

\begin{document}
\maketitle
\begin{abstract}

Quantum linear-system algorithms access the coefficient matrix through a
block-encoding oracle $U_A$ and the right-hand side through a
state-preparation oracle $U_b$.
The optimal query complexities of these two oracles are known separately:
$O(\kappa\log(1/\varepsilon))$ queries to $U_A$ and
$O(\kappa/s)$ queries to $U_b$, where
$s=\alpha\|A^{-1}b\|$.
Whether a single algorithm can achieve both bounds simultaneously has
remained open.

We resolve this question affirmatively.
Given a constant-factor estimate of $s$, we give a quantum linear-system
algorithm that prepares the normalized solution to error $\varepsilon$ using
simultaneously
$ O\left(\kappa\log\frac1\varepsilon\right)$ queries to $U_A$
and   $O\left({\kappa}/{s}\right)$ queries to $U_b$.
The algorithm separates state preparation from precision refinement, so only
the matrix-query count depends on the target accuracy.

We also prove a matching same-instance worst-case lower bound: for every
quantum linear-system algorithm and every solution-norm scale, there exists a
single Hermitian instance on which that algorithm requires both
$\Omega(\kappa\log(1/\varepsilon))$ matrix queries and
$\Omega(\kappa/s)$ state-preparation queries.
Thus the two oracle complexities can be optimized simultaneously, and the
resulting bounds are jointly optimal.

\end{abstract}

\newpage

\section{Introduction}\label{sec:introduction}
Linear systems are a basic primitive in computation.  They arise throughout
scientific computing, including in differential equations, optimization,
simulation, and data analysis.  Classically, solving a linear system means
computing an explicit description of the solution vector.  For high-dimensional
problems, however, even writing down this vector can be prohibitively expensive.
This makes linear systems a natural target for quantum algorithms, where a
high-dimensional vector can instead be represented as a quantum state.

The quantum linear-systems problem (QLS), introduced by Harrow, Hassidim, and
Lloyd~\cite{HHL09}, asks to prepare a quantum state proportional to $A^{-1}b$,
given quantum access to a matrix $A$ and a vector $b$.  The motivation is that
many applications do not require a classical description of every entry of the
solution.  Instead, the solution may be used as an intermediate state in a
larger quantum computation, or one may only need to estimate a property of it.
Quantum linear-system solvers have since been used as subroutines for
solving ordinary and partial differential equations
\cite{BCOW17,CLO21,Kro23,BC24},
evaluating matrix functions~\cite{TOSU20},
and solving generalized eigenvalue problems~\cite{SL22}.
These and other applications are discussed recently by Low and
Su~\cite{LowSu26}.

Since HHL, a central goal has been to reduce the cost of this subroutine.
Successive works improved its dependence on the condition number $\kappa$ and
the target error $\varepsilon$: variable-time techniques improved the
condition-number dependence~\cite{Ambainis12}, while later polynomial
approximation methods achieved exponentially better dependence on precision
\cite{CKS17}.  More recent algorithms based on block encodings, discrete
adiabatic evolution, and kernel reflection achieve the optimal matrix-query
complexity
\[
    O\!\left(\kappa\log\frac1\varepsilon\right)
\]
in the standard block-encoding model~\cite{Costa22,Dalzell24}.  Thus, for
queries to the matrix, the worst-case dependence on $\kappa$ and $\varepsilon$
is essentially understood.

There is, however, another resource in QLS.  In the block-encoding model,
the matrix is accessed through an oracle $U_A$, while the right-hand side is
provided by a separate state-preparation oracle $U_b$ satisfying
$U_b\ket{0^n}=\ket b$.  These two operations can have very different costs.
For example, $\ket b$ may itself be produced by a costly quantum subroutine,
so repeatedly preparing it can dominate the overall computation; in other
settings, access to $A$ may be the expensive part.  It is therefore natural
to count queries to $U_A$ and $U_b$ separately.

The number of queries to $U_b$ is governed by the norm
$\|A^{-1}b\|$ of the unnormalized solution.  For a normalized matrix with
condition number at most $\kappa$, this quantity lies between $1$ and
$\kappa$.  Recent work of Low and Su~\cite{LowSu26} showed that the optimal
state-preparation complexity is
\[
    O\!\left(\frac{\kappa}{\|A^{-1}b\|}\right)
\]
queries to $U_b$.  Their algorithm, however, uses more than the optimal
number of matrix queries.  Conversely, existing matrix-optimal QLS
algorithms use $O\!\left(\kappa\log\frac1\varepsilon\right)$
queries to $U_A$, but do not simultaneously achieve the optimal
state-preparation complexity.  Thus, the two oracle complexities were known
to be optimal separately, but no single QLS algorithm was known to attain
both optima simultaneously.

This leads to the central question of this work:

\begin{center}
\emph{Can a quantum linear-system algorithm be simultaneously
query-optimal with respect to both input oracles?}
\end{center}

We answer this question affirmatively.  We give a QLS algorithm that uses
$O(\kappa\log(1/\varepsilon))$ queries to $U_A$ and
$O(\kappa/\|A^{-1}b\|)$ queries to $U_b$, simultaneously attaining the
optimal query complexity for both input oracles.  The algorithm is also
gate-efficient, using
\[
    O\!\left(
        \kappa(a+1)\log\frac1\varepsilon
        +\frac{\kappa(1+\log d)}{\|A^{-1}b\|}
    \right)
\]
additional one- and two-qubit gates.  We further prove matching lower
bounds on a single hard instance, establishing simultaneous optimality of
the two query complexities.

\subsection{Main results and comparison with prior work}

We work in the standard block-encoding model.  Let $U_A$ be an exact $(\alpha,a,0)$-block-encoding of $A$, let $U_b$ prepare $\ket b$, and suppose that a constant-factor estimate $\widehat s$ of $s=\alpha\norm{A^{-1}b}$ is available.  Controlled calls and adjoints of both input unitaries are allowed.  The formal access and error conventions are stated in \cref{prob:qlsa}.

\begin{theorem}[Informal version of~\cref{thm:upper}]\label{thm:upper-summary}
Suppose that $\kappa\ge\alpha\norm{A^{-1}}$ and
$0<\varepsilon<1/2$.  There is a quantum algorithm that outputs
$\ket{\widetilde x}$ with probability at least $2/3$ such that
\[
 \left\|\widetilde x-
 \frac{A^{-1}b}{\norm{A^{-1}b}}\right\|\le\varepsilon.
\]
It uses $O(\kappa\log(1/\varepsilon))$ queries to $U_A$,
$O(\kappa/s)$ queries to $U_b$, and
$O(\kappa(a+1)\log(1/\varepsilon)+\kappa(1+\log d)/s)$ extra one- and
two-qubit gates.
\end{theorem}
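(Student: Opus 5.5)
The plan is a two-stage algorithm: a cheap coarse solve, then a precision refinement that never calls $U_b$ again. In the first stage I prepare a state $\ket{x_0}$ with constant error, say $\norm{x_0-x}\le c$ for a small constant $c$, where $x=A^{-1}b/\norm{A^{-1}b}$. This stage should use $O(\kappa)$ queries to $U_A$ and $O(\kappa/s)$ queries to $U_b$. In the second stage I boost $\ket{x_0}$ to error $\varepsilon$ using only matrix queries, at cost $O(\kappa\log(1/\varepsilon))$. The point of the split is that $\varepsilon$ enters only the matrix-query count.

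\emph{Coarse stage.} Take a polynomial $p$ of degree $O(\kappa)$ that approximates $1/(\kappa x)$ to constant relative accuracy on $[1/\kappa,1]$ and is bounded by $1$ on $[-1,1]$. QSVT on $U_A$ gives a block-encoding of $p(A/\alpha)$, and applying it to $\ket b$ flags a good branch with amplitude $\Theta(s/\kappa)$. Fixed-point or variable-time amplitude amplification, scheduled by the estimate $\widehat s$, needs $O(\kappa/s)$ rounds. Each round calls $U_b$ $O(1)$ times, which gives the $U_b$ bound. The difficulty is that each round also calls $U_A$ $O(\kappa)$ times, so the naive count is $\kappa^2/s$ matrix queries. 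To avoid this I would not amplify the full inverse. Instead I would prepare $\ket b$ as a kernel component of an augmented operator, in the style of the kernel-reflection / discrete-adiabatic constructions of \cite{Costa22,Dalzell24}. There, reflections about $\ket b$ (two $U_b$ calls each) are interleaved with $O(1)$-query walk steps, so the total work is $O(\kappa/s)$ reflections about $\ket b$ plus $O(\kappa)$ matrix queries.

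Concretely, I would use an eigenstate-filtering step on an operator $H$ with a gap of order $1/\kappa$ whose kernel contains $x$. A degree-$O(\kappa)$ filter, costing $O(\kappa)$ queries to $U_A$ and none to $U_b$, is applied once to an initial state. That initial state is obtained by amplitude-amplifying $\ket b$-dependent preparations that use only $O(1)$ matrix queries each and have success amplitude $\Omega(s/\kappa)$. This is the step I expect to be the main obstacle: decoupling the $\kappa/s$ amplification of the $b$-part from the $\kappa$-degree matrix polynomial, so the two costs add rather than multiply. I would first try a Low–Su-style construction, in which $\ket b$ is loaded into an augmented matrix $\begin{pmatrix}0&A\\ A^\dagger&0\end{pmatrix}$ together with a rank-one term. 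I would then check that its kernel overlap with a cheaply preparable state is $\Theta(s/\kappa)$ while its spectral gap stays $\Omega(1/\kappa)$.

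\emph{Refinement stage.} Given $\ket{x_0}$ with constant overlap with the kernel state of $H$, apply eigenstate filtering by a Chebyshev-based polynomial of degree $O(\kappa\log(1/\varepsilon))$. This costs $O(\kappa\log(1/\varepsilon))$ queries to $U_A$, and $H$ involves $\ket b$ only through the prepared state, not the oracle. If the construction forces $H$ itself to contain $\proj b$, I would replace it with a filter on $A^\dagger A$ applied to $\ket{x_0}$ plus a residual-correction step. The filter succeeds with constant probability, and a final constant number of repetitions gives success probability $2/3$.

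The gate counts come from the $O(a+1)$ extra gates per QSVT step in the matrix part and the $O(1+\log d)$ gates per reflection about $\ket{0^n}$ in the $U_b$ part.
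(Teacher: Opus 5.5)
Your coarse stage never gives a mechanism that makes the two costs add rather than multiply. You flag this yourself, and the plan you sketch does not close it.

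Amplitude amplification needs a reflection about the good subspace.
\begin{itemize}
\item If that subspace is the kernel of your gapped $H$, each reflection costs order $\kappa$ matrix queries by filtering, which restores the $\kappa^2/s$ product.
\item If it is defined by an $O(1)$-query flag, nothing forces the amplified state to have constant overlap with the kernel state.
\end{itemize}
The appeal to \cite{Costa22,Dalzell24} does not help either: those solvers use $O(\kappa\log(1/\varepsilon))$ vector queries.

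The missing idea in the paper is to realize the kernel reflection $R_P$ as a transducer (\cref{prop:kernel-transducer}). Its cost on an input $\xi$ is the instance-dependent quantity $2\alpha_H(\tau\norm{P\xi}^2+\norm{H^+\xi}^2/\tau)$, not a worst-case order-$\kappa$ charge. This transducer is nested, together with $R_e$, inside the fractional-linear transducer for $(U-rI)(I-rU)^{-1}$ with $U=-R_eR_P$. The outer catalyst $q$ satisfies $\norm q^2=O(\kappa/s)$. It lies in the real plane spanned by $u$ and $w=(I-P)e/\norm{(I-P)e}$, where $\norm{H^+w}^2\le2s^2$. With $\tau=\widehat s$, the kernel-reflection cost is $O(s\norm q^2)=O(\kappa)$ and the $R_e$ cost is $\norm q^2=O(\kappa/s)$. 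The compiler of \cref{thm:bjy} then charges the two oracles separately.

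Your refinement stage, as stated, fails. If $H$ is built from $A$ alone, so that filtering uses no $U_b$ queries, its kernel must contain the solution direction for every $b$. It is therefore at least $d$-dimensional; in the paper, $\ker H=\{(0,v,\kappa Av):v\in\C^d\}$. A coarse guarantee $\norm{x_0-x}\le c$ allows an error comparable to $c$ lying inside $\ker H$. Every kernel filter preserves that error, so the final error stays constant.

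Your fallback does not repair this. $A^\dagger A$ is $b$-independent and invertible, so none of its spectral subspaces singles out $x$. A residual correction needs fresh access to $b$, so it spends vector queries in the precision-dependent stage, which is exactly what you set out to avoid.

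The paper instead proves a structured coarse guarantee (\cref{prop:coarse}). The $\rA=0$ part of the finite transducer output has kernel projection exactly $\beta\ket u$ with $\beta\ge1/32$, even though the implementation error is a constant $10^{-3}$. This exactness is \cref{lem:alignment}. Every work unitary, oracle call, and real compiler gate preserves a real subspace built from $\spanR\{H^je:j\ge0\}$ and its $U_H$ image. Hence the $\rA=0$ output $y$ lies in that span, and $Py=c_0Pe$ because $PH=0$. Only with this property can a $b$-free filter reach precision $\varepsilon$. The filter is followed by the correction $(I+\kappa^{-2}A^{-2})/4$, which converts the $\rG=2$ kernel component $A(A^2+\kappa^{-2}I)^{-1}b$ into $A^{-1}b/4$.
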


These bounds are jointly optimal.  Separate lower bounds would still allow
the matrix-hard and vector-hard inputs to be different.  We show instead that,
at every solution-norm scale, one input forces both costs simultaneously.

\begin{theorem}[Informal version of~\cref{thm:lower}]
\label{thm:lower-summary}
Fix sufficiently large $\kappa$, sufficiently small $\varepsilon$, and a
solution-norm estimate $1\le\widehat s\le\kappa$. For any algorithm that solves
all promised Hermitian inputs with constant success probability, there is an
input $I_*$, with $\norm{A^{-1}b}=\Theta(\widehat s)$, on which the algorithm needs
simultaneously
$\Omega\!\left(\kappa\log\frac1\varepsilon\right)$
queries to $U_A$, and
 $\Omega\!\left(\kappa/{\widehat s}\right)$
 queries to $U_b$.
\end{theorem}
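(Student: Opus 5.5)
The plan is to build one instance $I_*$ as a direct sum of two hard gadgets: one forcing matrix queries, one forcing state-preparation queries. The key point is that each gadget stays hard after embedding. The matrix is $A=A_1\oplus A_2$, acting on a two-block Hilbert space, with $\norm{A}\le 1$ and $\norm{A^{-1}}\le\kappa$ (so $\alpha=1$). The right-hand side is $\ket b=\sqrt{1-\eta}\,\ket{b_1}\oplus\sqrt{\eta}\,\ket{b_2}$. The weights are chosen so that $\norm{A^{-1}b}=\Theta(\widehat s)$ and each block carries a constant fraction of the normalized solution. An algorithm that outputs the normalized solution to error $\varepsilon$ must then output each block's normalized solution to error $O(\varepsilon)$. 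Because $U_A$ and $U_b$ act block-wise, they can be simulated from oracles for the individual gadgets. This gives reductions in both directions.

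\textbf{Vector gadget (forces $\Omega(\kappa/\widehat s)$ queries to $U_b$).} Use the standard search/amplitude-amplification reduction, following the Low--Su lower bound. Let $b_1$ be a uniform superposition that hides a marked basis element $\ket{w}$ with tiny amplitude $\delta$. Let $A_1$ be diagonal, known, and query-free: it equals $1/\kappa$ on $\ket{w}$ and $1$ elsewhere. Then $A_1^{-1}b_1$ has amplitude $\kappa\delta$ on $\ket w$ against $O(1)$ on the rest. Setting $\kappa\delta\approx\widehat s$ makes the solution norm $\Theta(\widehat s)$ and the normalized solution concentrated on $\ket w$ with constant weight. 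Solving therefore finds $w$ with constant probability, or distinguishes $\delta$ from $0$. A reflection-type oracle $U_b$ (with controlled and inverse calls allowed) is a search oracle with amplitude $\delta\sim\widehat s/\kappa$. Optimality of amplitude estimation/search gives $\Omega(1/\delta)=\Omega(\kappa/\widehat s)$ queries to $U_b$. Since $A_1$ is fixed, $U_A$ queries reveal nothing about $w$.

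\textbf{Matrix gadget (forces $\Omega(\kappa\log(1/\varepsilon))$ queries to $U_A$).} Use the known Hermitian lower-bound family for matrix queries: a fixed $b_2$ together with a family of $A_2$ with condition number $\kappa$, such that outputting $A_2^{-1}b_2$ to precision $\varepsilon$ requires $\Omega(\kappa\log(1/\varepsilon))$ queries. This comes from the polynomial method: outputs of a $T$-query algorithm are degree-$O(T)$ polynomials in $A$, and approximating $1/x$ on $[1/\kappa,1]$ to accuracy $\varepsilon$ needs degree $\Omega(\kappa\log(1/\varepsilon))$. Here $b_2$ is a fixed known state, so $U_b$ queries give no help.

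\textbf{Combining.} Suppose an algorithm uses fewer than $c\,\kappa/\widehat s$ queries to $U_b$ on every input of the family. Then the search simulation yields a contradiction. The same holds for the $U_A$ budget. Since worst-case query count is per-input, the adversary argument needs one instance on which both bounds hold at once, so the hidden parameters are chosen jointly. A randomized choice of $w$ together with the hard $A_2$ makes both reductions succeed with constant probability, and averaging then fixes a single $I_*$.

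The main obstacle is this last step and the weight balancing. The $A_2$ block contributes norm up to $\kappa$, which could swamp the $\widehat s$-scale vector block. I would scale $\eta$ so that $\norm{A_2^{-1}\sqrt\eta\, b_2}=\Theta(\widehat s)$ while the relative precision needed on block 2 stays $\Theta(\varepsilon)$. The required precision then enters the matrix lower bound only through $\log(1/\varepsilon)$, which is unaffected by constant factors.
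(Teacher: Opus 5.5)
Your proposal has two genuine gaps, and the second one is the crux of the theorem.

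\textbf{The vector gadget does not isolate $U_b$.} You place the eigenvalue $1/\kappa$ of $A_1$ exactly at the hidden index $w$. So $A_1$, and hence $U_A$, depends on $w$, and the sentence ``since $A_1$ is fixed, $U_A$ queries reveal nothing about $w$'' contradicts your own construction. A search or amplitude-estimation bound then limits only the \emph{total} number of queries to oracles that hide $w$. The theorem instead needs a bound on $U_b$ queries alone, valid for solvers that may spend arbitrarily many matrix queries. Worse, take the gadget on its own with $\widehat s\gg1/\varepsilon$. Every $b$ meeting the norm promise for $A_1$ has normalized solution $\propto\kappa b_w\ket w+b_\perp$, with $\norm{b_\perp}\le1$, and this lies within trace distance $O(1/\widehat s)$ of $\proj w$. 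The answer is therefore essentially a function of $A_1$, and the gadget cannot force vector queries. The fixed-$w$ alternative, ``distinguish $\delta$ from $0$'', also fails: the $\delta=0$ input has solution norm $O(1)$, which violates the promise on $\widehat s$, so the solver owes you nothing there. Comparing two nonzero amplitudes does not help either, because both solutions are near $\ket w$. What you need is a pair of promise-respecting inputs with the \emph{same} $U_A$, vector oracles at distance $O(\widehat s/\kappa)$, and normalized solutions at constant distance. That requires the unperturbed solution to already carry norm $\Theta(\widehat s)$ outside the perturbed direction. The paper gets this from a fixed eigenvector $e$ with $A_ze=\kappa^{-1}e$, orthogonal to $b$ and to $A_z^{-1}b$, and sets $b'\propto b+\tfrac{5\widehat s}{4\kappa}e$. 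This keeps $\widehat s$ a valid estimate and moves the solution by trace distance at least $5/\sqrt{61}$.

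\textbf{Averaging does not produce a single hard input.} $Q_A(\mathcal A,I)$ and $Q_b(\mathcal A,I)$ are per-input maxima of adaptive stopping times. The lower bounds you invoke are worst case over a family. They give, for each $w$, some $z(w)$ with large $Q_A$, and for each $z$, some $w(z)$ with large $Q_b$. Nothing forces these to coincide, since $z(w)$ depends on the algorithm and could always land on vector-easy pairs. ``Both reductions succeed with constant probability'' is not a statement about query counts on a fixed input.

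The missing idea is a \emph{pointwise} vector lower bound that holds for every member of the family. Set $q=Q_b(\mathcal A,I_z)$ and truncate $\mathcal A$ just before its $(q+1)$st vector query. On $I_z$ nothing changes. On the perturbed input the abort probability is at most $q\delta$, so $\Dtr(\Omega,\Omega')\le2q\delta$, which gives $q=\Omega(\kappa/\widehat s)$ for that very $I_z$. With this lemma you simply select a matrix-hard member, and no direct sum is needed. The paper selects it by reducing unbounded-error parity to a history family whose solution norm is independent of $z$; that independence is also what would let you fix your weights in $b$ without depending on the hidden string.

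Separately, your justification of the matrix gadget through the degree of approximating $1/x$ only covers algorithms that apply polynomials of $A$ to $b$. A general $T$-query algorithm has amplitudes that are polynomials in the oracle entries, so you need a genuine reduction such as the parity one.
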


\Cref{tab:upper-comparison,tab:lower-bounds} compare our upper and lower
bounds with prior results.

\begin{table}[p]
\centering
\scriptsize
\renewcommand{\arraystretch}{1.10}
\setlength{\tabcolsep}{3.5pt}

\begin{tabular}{@{}
>{\raggedright\arraybackslash}p{0.12\linewidth}
>{\raggedright\arraybackslash}p{0.31\linewidth}
>{\raggedright\arraybackslash}p{0.22\linewidth}
>{\raggedright\arraybackslash}p{0.29\linewidth}@{}}
\toprule
Work
& Main technique
& Queries to $U_b$
& Queries to $U_A$
\\
\midrule

\multicolumn{4}{@{}l}{\textbf{Standard worst-case bounds}}\\
\addlinespace[2pt]

\cite{HHL09}
& Phase estimation + Hamiltonian simulation
& $O\!\left(1/\sqrt{p_{\rm succ}}\right)$
& $O\!\left(\frac{\kappa}{\varepsilon^2\sqrt{p_{\rm succ}}}\right)$ \\

\cite{Ambainis12}
& Variable-time amplitude amplification
& $O\!\left(\kappa\,\operatorname{polylog}
       (\kappa/\varepsilon)\right)$
& $O\!\left(
       \frac{\kappa}{\varepsilon^3}
       \log^3\!\frac{\kappa}{\varepsilon}
       \log^2\!\frac1\varepsilon\right)$ \\

\cite{CKS17}
& Linear combination of quantum walks
& $O\!\left(\frac{1}{\sqrt{p_{\rm succ}}}
       \log\frac{\kappa}{\varepsilon}\right)$
& $O\!\left(
       \frac{\kappa}{\sqrt{p_{\rm succ}}}\,
       \operatorname{polylog}
       (\kappa/\varepsilon)\right)$ \\

\cite{CKS17}
& Gapped phase estimation + quantum walk + VTAA
& $O\!\left(\kappa\,\operatorname{polylog}
       (\kappa/\varepsilon)\right)$
& $O\!\left(\kappa\,\operatorname{polylog}
       (\kappa/\varepsilon)\right)$ \\

\cite{SubasiSommaOrsucci19}
& Randomized adiabatic evolution
& $O\!\left(\frac{\kappa}{\varepsilon}\log\kappa\right)$
& $O\!\left(\frac{\kappa}{\varepsilon}\log\kappa\right)$ \\

\cite{CGJ19}
& Refined VTAA
& $O\!\left(\frac{\log\kappa}{\sqrt{p_{\rm succ}}}\right)$
& $O\!\left(
       \kappa\log\kappa
       \log^2\!\frac{\kappa}{\varepsilon}\right)$ \\

\cite{AnLin22}
& Continuous-time adiabatic evolution
& $O\!\left(\kappa\,\operatorname{polylog}
       (\kappa/\varepsilon)\right)$
& $O\!\left(\kappa\,\operatorname{polylog}
       (\kappa/\varepsilon)\right)$ \\

\cite{LinTong20}
& Eigenstate filtering
& $O\!\left(\kappa\log\frac{\kappa}{\varepsilon}\right)$
& $O\!\left(\kappa\log\frac{\kappa}{\varepsilon}\right)$ \\

\cite{Costa22}
& Discrete adiabatic evolution
& $O\!\left(\kappa\log\frac1\varepsilon\right)$
& {\boldmath$O\!\left(\kappa\log\frac1\varepsilon\right)$} \\

\cite{ChakrabortyMoroliaPeduri23}
& QSVT-based GPE + VTAA
& $O\!\left(\frac{\log\kappa}{\sqrt{p_{\rm succ}}}\right)$
& $O\!\left(
       \kappa\log\kappa
       \log\frac{\kappa}{\varepsilon}\right)$ \\

\cite{Dalzell24}
& Kernel reflection
& $O\!\left(\kappa\log\frac1\varepsilon\right)$
& {\boldmath$O\!\left(\kappa\log\frac1\varepsilon\right)$} \\

\cite{CunninghamRoland24}
& Poisson-distributed phase randomization
& $O\!\left(\kappa\log\frac1\varepsilon\right)$
& {\boldmath$O\!\left(\kappa\log\frac1\varepsilon\right)$} \\

\cite{LowSu26}
& Tunable VTAA + discretized inverse state
& {\boldmath$O\!\left(1/\sqrt{p_{\rm succ}}\right)$}
& $\begin{aligned}
   O\!\biggl(&\kappa\log\frac{2}{\sqrt{p_{\rm succ}}}\\[-2pt]
   &{}\times\left(
       \log\log\frac{4}{\sqrt{p_{\rm succ}}}
       +\log\frac1\varepsilon
       \right)\biggr)
   \end{aligned}$ \\

\cite{LowSu26}
& Block preconditioning
& $O\!\left(\kappa\log\frac1\varepsilon\right)$
& {\boldmath$O\!\left(\kappa\log\frac1\varepsilon\right)$} \\

\cite{Jennings25}
& Randomized adiabatic evolution + filtering
& $O\!\left(\kappa\log\frac1\varepsilon\right)$
& $O\!\left(\kappa\log\frac1\varepsilon\right)$ \\

\cite{YangYuZhang26}
& LC-Schr\"odingerization + block preconditioning
& $O\!\left(\kappa_A\log\frac1\varepsilon\right)$
& $O\!\left(\kappa_A\log\frac1\varepsilon\right)$ \\

\cite{ZhangYangYu26}
& Schr\"odingerization + block preconditioning
& $O\!\left(\kappa_A\log\frac1\varepsilon\right)$
& $O\!\left(\kappa_A\log\frac1\varepsilon\right)$ \\

\cite{Shang26}
& Dissipative Lindbladian
&
$O\!\left(
\kappa^2\log\frac1\varepsilon\,
\frac{\log(\kappa/\varepsilon)}
     {\log\log(\kappa/\varepsilon)}
\right)$
&
$O\!\left(
\kappa^2\log\frac1\varepsilon\,
\frac{\log(\kappa/\varepsilon)}
     {\log\log(\kappa/\varepsilon)}
\right)$
\\

\textbf{This work}
& \textbf{Transducer-based preparation + matrix-only refinement}
& {\boldmath$O\!\left(1/\sqrt{p_{\rm succ}}\right)$}
& {\boldmath$O\!\left(\kappa\log\frac1\varepsilon\right)$} \\

\midrule
\multicolumn{4}{@{}l}{
\textbf{Instance-dependent / beyond-$\kappa$ bounds}
\quad (not directly comparable to the worst-case bounds above)}
\\
\addlinespace[2pt]

\cite{Li25}
& Filtering with an effective gap
& $O(g/\varepsilon^2)$
& $O(g/\varepsilon^2)$ \\

\cite{DalzellLiSu26}
& Effective truncation
& $O\!\left(\kappa_{\rm eff}/\|A^{-1}b\|\right)$
& $\begin{aligned}
   O\!\biggl(&\kappa_{\rm eff}
   \log\frac{\kappa_{\rm eff}}{\|A^{-1}b\|}\\[-2pt]
   &{}\times\left[
   \log\log\frac{\kappa_{\rm eff}}{\|A^{-1}b\|}
   +\log\frac1\varepsilon\right]\biggr)
   \end{aligned}$ \\

\cite{DalzellLiSu26}
& Filtering with an effective gap
& $O\!\left(
       \frac{g}{\varepsilon}\log\frac1\varepsilon
       \right)$
& $O\!\left(
       \frac{g}{\varepsilon}\log\frac1\varepsilon
       \right)$ \\

\bottomrule
\end{tabular}

\caption{
Upper bounds for quantum linear-system algorithms.
Here $\|A\|\le 1$,
$p_{\rm succ}=\|A^{-1}b\|^2/\kappa^2$,
$\ket{x}=A^{-1}b/\|A^{-1}b\|$, and
$g=\|A^{-1\dagger}\ket{x}\|$.
For the Schr\"odingerization results
\cite{YangYuZhang26,ZhangYangYu26},
$\kappa_A:=\alpha_A\alpha_{A^{-1}}$ denotes the supplied upper bound on
the spectral condition number, with $\alpha_A\ge\|A\|$ and
$\alpha_{A^{-1}}\ge\|A^{-1}\|$.  When these bounds are tight,
$\kappa_A=\kappa$.
The lower part uses instance-dependent parameters and is not directly
comparable with the upper part.
}
\label{tab:upper-comparison}
\end{table}
\begin{table*}[t]
\centering
\small
\renewcommand{\arraystretch}{1.18}
\setlength{\tabcolsep}{4pt}

\begin{tabular}{@{}p{0.17\textwidth}
                    p{0.25\textwidth}
                    p{0.38\textwidth}
                    p{0.13\textwidth}@{}}
\toprule
Work
& Queries to $U_b$
& Queries to $U_A$
& Same input?
\\
\midrule

\cite{SommaSubasi21}
& $\Omega(\kappa/s)$ for state verification
& ---
& --- \\

\cite{OrsucciDunjko21}
& $\Omega(\min\{\kappa,N\})$
& $\Omega(\min\{\kappa,N\})$ for $U_A$ and sparse access
& Not shown \\

\cite{HarrowKothari}
& ---
& $\Omega\!\left(
    \kappa\log\frac1\varepsilon
  \right)$
& --- \\

\cite{LowSu26}
& $\Omega(\kappa/s)$
& ---
& --- \\

\cite{Mori26}
& ---
& $\Omega(\kappa\sqrt{d_{\rm sp}})$
  for constant $\varepsilon$ and sparse access
& --- \\

\cite{Patel26}
& ---
& $\Omega\!\left(
    \kappa\sqrt{d_{\rm sp}}
    \log\frac1\varepsilon
  \right)$ for sparse access
& --- \\

\textbf{This work}
& {\boldmath$\Omega(\kappa/s)$}
& {\boldmath$\Omega\!\left(
    \kappa\log\frac1\varepsilon
  \right)$}
& \textbf{Yes} \\

\bottomrule
\end{tabular}

\caption{
Selected lower bounds relevant to QLS.
Here $\|A\|\le1$, $s=\|A^{-1}b\|$, $N$ is the dimension, and
$d_{\rm sp}$ is the matrix sparsity.
Only our result proves both bounds on one input.
}
\label{tab:lower-bounds}
\end{table*}

\subsection{Technique overview}

The main difficulty is to make the dependence on the target precision
$\varepsilon$ appear only in queries to the matrix oracle.
A standard high-precision QLS procedure repeatedly uses both $U_A$ and $U_b$,
which is incompatible with the optimal $O(\kappa/s)$ bound for
state-preparation queries.

Our algorithm separates the computation into two stages.
First, using both oracles, we prepare only a \emph{constant-accuracy} state.
To control the two oracle costs separately, we formulate this preparation as a
canonical transducer.
Its finite implementation lets us charge matrix-dependent and
state-preparation-dependent query subspaces independently, yielding
$O(\kappa)$ queries to $U_A$ and $O(\kappa/s)$ queries to $U_b$.

The key point is that the resulting state has substantially more structure
than its constant-error guarantee suggests.
After projection onto a suitable auxiliary kernel, its direction is
\emph{exactly} the desired solution direction.
This exactness follows from a real Krylov-space invariant and is what makes
high-precision refinement possible without querying $U_b$ again.

In the second stage, a matrix-only polynomial filter suppresses the component
orthogonal to the kernel while preserving the desired kernel component.
Reducing the remaining error to $\varepsilon$ costs
$O(\kappa\log(1/\varepsilon))$ queries to $U_A$, but no additional queries to
$U_b$.
Thus the precision overhead is completely decoupled from state preparation.

For the lower bound, it is not enough to combine separate hard instances for
the two oracles.
We start from a family that is hard for matrix queries and prove that every
instance in the family is also hard for state-preparation queries.
A small perturbation of the right-hand side changes the solution by a constant
amount while changing the preparation oracle by only $O(s/\kappa)$.
A one-sided hybrid argument therefore gives an $\Omega(\kappa/s)$ lower bound
for the original instance itself.
Choosing a matrix-hard member of the same family yields a single instance that
simultaneously forces both lower bounds.

\subsection{Related work}\label{sec:related}

\paragraph{Recent developments in QLS.}
Recent work has explored several directions beyond the standard
worst-case QLS algorithms.  Low and Su~\cite{LowSu26} obtained the
optimal query complexity in the preparation of the right-hand side,
while retaining nearly optimal matrix-query complexity.  Dalzell, Li,
and Su~\cite{DalzellLiSu26} developed instance-dependent solvers whose
complexity can go beyond the worst-case condition number.  Other recent
approaches include a purely dissipative formulation of QLS
\cite{Shang26} and Schr\"odingerization-based solvers
\cite{YangYuZhang26,ZhangYangYu26}, the latter attaining optimal
matrix-query complexity without variable-time amplitude amplification.
Recent work has also studied constant factors and practical resource
estimates: Costa et al.~\cite{Costa25} numerically analyzed the
discrete-adiabatic solver of \cite{Costa22}.
These works address complementary aspects of QLS; our focus is on
simultaneously attaining the optimal query complexities with respect to
the two distinct input oracles.

\paragraph{Transducers and query-optimal quantum algorithms.}
Transducers were introduced by Belovs, Jeffery, and
Yolcu~\cite{BJY24} as an idealized model for designing
quantum algorithms with favorable composition and error properties.
Apers, Roland, and Zhang~\cite{ApersRolandZhang26} recently developed
zero-error transducers for electric-flow sampling and, more generally,
for reflections about intersections of subspaces, leading to improved
quantum-walk algorithms with optimal error dependence.
Transducers have also been used to obtain optimal query complexities
for time-dependent Hamiltonian simulation
\cite{ChenGaoWangZhou26}, guided ground-state energy estimation
\cite{JefferyWitteveen26}, and Lindbladian simulation
\cite{WangYe26,ChenGaoWangZhouLindblad26,Kitchen26}.
Related work has further studied transducer-based error reduction and
gate-efficient implementations
\cite{BelovsJeffery26,ChenGaoJiLiWangZhou26}.
Our work applies this framework to QLS, where the main challenge is to
optimize the costs of two distinct input oracles simultaneously.

\section{Problem and preliminaries}\label{sec:model}\label{sec:tools}

\subsection{Notations}

We use $\norm{\cdot}$ for the Euclidean norm of a vector and the operator norm of a matrix. For a column vector $v=(v_0,\ldots,v_{d-1})^{\intercal}$, write
\[
 \ket v_{\rD}=\sum_{j=0}^{d-1}v_j\ket j_{\rD}
\]
on a $\lceil\log_2d\rceil$-qubit data register $\rD$, with zero amplitudes on unused basis states. This notation does not normalize $v$: it represents a quantum state when $\norm v=1$. We use column vectors in matrix identities and kets for register states. The inner product is conjugate-linear in its first argument. All logarithms are natural unless a base is displayed. We write $M^+$ for the Moore--Penrose pseudoinverse and $\Pi_{\ker M}$ for the orthogonal projector onto the kernel of a Hermitian matrix \cite{Higham02,Watrous18}.

\begin{definition}[Block-encoding, {\cite[Definition 43]{GSLW19}}]\label[definition]{def:input}
A unitary $U_A$ is an $(\alpha,a,\delta_A)$-\emph{block-encoding} of $A\in\C^{d\times d}$ if
\begin{equation}\label{eq:block}
 \left\|A-\alpha\left(\bra{0^a}\otimes I\right)U_A\left(\ket{0^a}\otimes I\right)\right\|\le\delta_A,
\end{equation}
where $\alpha>0$, $a$ is a nonnegative integer, and $\delta_A\ge0$. The encoding is exact when $\delta_A=0$.
\end{definition}

\subsection{Problem definition}
\begin{problem}[Quantum linear systems with a solution-norm estimate]\label[problem]{prob:qlsa}
Let $A\in\C^{d\times d}$ be invertible and $b\in\C^d$ be a unit vector.
Given query access to an exact
$(\alpha,a,0)$-block-encoding $U_A$ of $A$ and a state-preparation unitary
$U_b$ preparing $\ket b$, together with $\kappa\ge2$, an estimate
$\widehat s$, and $0<\varepsilon<1/2$ satisfying
$\alpha\norm{A^{-1}}\le\kappa$ and
$\alpha\norm{A^{-1}b}/2\le\widehat s\le2\alpha\norm{A^{-1}b}$, 
the task is to 
output a state $\ket{\widetilde x}$ such
that
\[
 \left\|\widetilde x-
 \frac{A^{-1}b}{\norm{A^{-1}b}}\right\|\le\varepsilon
\]
with probability at least $2/3$.
\end{problem}

\begin{proposition}[Normalization and Hermitian reduction]\label[proposition]{prop:hermitian-reduction}\label{sec:dilation}
Let $(A,b,U_A,U_b,\alpha,a,\kappa,\widehat s,\varepsilon)$ be as in
\cref{prob:qlsa}, except that the estimate may satisfy the relaxed bounds
\[
 \frac{3\alpha\norm{A^{-1}b}}8\le\widehat s\le
 \frac{5\alpha\norm{A^{-1}b}}2,
\]
and define
\begin{equation}\label{eq:hermitian-dilation}
 \widehat A=\begin{pmatrix}0&A/\alpha\\ A^\dagger/\alpha&0\end{pmatrix},
 \qquad
 \widehat b=\binom b0,
\qquad
 U_{\widehat A}=\begin{pmatrix}0&U_A\\ U_A^\dagger&0\end{pmatrix},
\qquad
 U_{\widehat b}=I\otimes U_b.
\end{equation}
Then:
\begin{itemize}
\item $(\widehat A,\widehat b,U_{\widehat A},U_{\widehat b},1,a,\kappa,
\widehat s,\varepsilon/2)$ satisfies the same relaxed input conditions. In
addition, $\widehat A$ is invertible and Hermitian,
$U_{\widehat A}$ is an exact $(1,a,0)$-block-encoding of $\widehat A$,
$U_{\widehat b}$ prepares $\ket{\widehat b}$, and
\[
 \left\|\widehat A\right\|\le1,
 \qquad
 \left\|\widehat A^{-1}\right\|\le\kappa,
 \qquad
 \frac{3\left\|\widehat A^{-1}\widehat b\right\|}{8}
 \le\widehat s\le
 \frac{5\left\|\widehat A^{-1}\widehat b\right\|}{2}.
\]
\item Suppose that, on these parameters, one run of an algorithm succeeds with
probability at least $2/3$ and, on success, outputs $\ket{\widetilde y}$ satisfying
\[
 \left\|\widetilde y-
 \frac{\widehat A^{-1}\widehat b}
 {\left\|\widehat A^{-1}\widehat b\right\|}\right\|
 \le\frac{\varepsilon}{2},
\]
then at most three independent runs of the algorithm, each followed by
measurement of the first system qubit, output
$\ket{\widetilde x}$ with probability greater than $2/3$ such that
\[
 \left\|\widetilde x-
 \frac{A^{-1}b}{\left\|A^{-1}b\right\|}\right\|\le\varepsilon.
\]
\item If one run of the algorithm uses $q_A$ queries to $U_{\widehat A}$,
$q_b$ queries to $U_{\widehat b}$, and $g$ extra gates, the resulting costs
are $O(q_A)$ queries to $U_A$, $O(q_b)$ queries to $U_b$, and
$O(g+q_A+q_b)$ extra gates.
\end{itemize}
\end{proposition}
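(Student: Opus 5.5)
The plan is to verify each bullet by direct computation with the block structure of $\widehat A$. Throughout, the extra qubit indexing the two blocks is the \emph{first system qubit}, and the $a$ ancilla qubits of $U_A$ stay the ancillas of $U_{\widehat A}$.

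\emph{First bullet.} Since $U_A$ is an exact $(\alpha,a,0)$-block-encoding, we have $\norm A=\alpha\norm{(\bra{0^a}\otimes I)U_A(\ket{0^a}\otimes I)}\le\alpha$. The matrix $\widehat A$ is Hermitian by construction. Its eigenvalues are $\pm\sigma_j(A)/\alpha$, so $\norm{\widehat A}\le1$. Moreover $\widehat A$ is invertible with
\[
 \widehat A^{-1}=\begin{pmatrix}0&\alpha (A^\dagger)^{-1}\\ \alpha A^{-1}&0\end{pmatrix},
\]
which gives $\norm{\widehat A^{-1}}=\alpha\norm{A^{-1}}\le\kappa$. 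Applying this to $\widehat b$ gives $\widehat A^{-1}\widehat b=\binom{0}{\alpha A^{-1}b}$. Hence $\norm{\widehat A^{-1}\widehat b}=\alpha\norm{A^{-1}b}$, and the relaxed estimate condition on $\widehat s$ carries over verbatim. The normalized solution is $\ket1\otimes A^{-1}b/\norm{A^{-1}b}$.

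For the block-encoding, note that $U_{\widehat A}=(X\otimes I)\bigl(\proj0\otimes U_A^\dagger+\proj1\otimes U_A\bigr)$. This is unitary and uses one controlled $U_A$, one controlled $U_A^\dagger$, and an $X$ gate. Sandwiching it with $\ket{0^a}$ on the ancillas acts blockwise, giving $\begin{pmatrix}0&A/\alpha\\ A^\dagger/\alpha&0\end{pmatrix}$ exactly. Thus it is an exact $(1,a,0)$-block-encoding. The map $I\otimes U_b$ sends $\ket0\ket{0^n}$ to $\ket{\widehat b}$.

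\emph{Second bullet.} Write a successful output as $\ket{\widetilde y}=\ket0\ket{u_0}+\ket1\ket{u_1}$ and set $x=A^{-1}b/\norm{A^{-1}b}$. Then $\norm{u_1-x}\le\norm{\widetilde y-\ket1 x}\le\varepsilon/2$, so the outcome $1$ on the first qubit has probability $\norm{u_1}^2\ge(1-\varepsilon/2)^2\ge 9/16$. Conditioned on outcome $1$, the state is $u_1/\norm{u_1}$. The elementary inequality $\norm{u/\norm u-x}\le2\norm{u-x}$ for unit $x$ then gives error at most $\varepsilon$.

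A single run therefore yields an acceptable output, recognized as run success together with outcome $1$, with probability at least $\tfrac23\cdot\tfrac9{16}=\tfrac38$. I am assuming here, as the problem's ``success'' phrasing suggests, that success of a run is heralded. Repeating up to three times and keeping the first accepted output fails with probability at most $(5/8)^3<1/3$. This is the one step where I would double-check constants. If the bound were too tight, I would instead use the sharper estimate $\norm{u_1}^2\ge 1-\varepsilon^2/4\cdot O(1)$, which holds since $\varepsilon<1/2$. Alternatively, I would let the constant number of runs grow slightly, which does not affect the $O(\cdot)$ costs.

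\emph{Third bullet.} Each query to $U_{\widehat A}$ (or its adjoint or controlled version) costs $O(1)$ queries to $U_A$ plus $O(1)$ gates: a controlled-$X$ and the controls. Each query to $U_{\widehat b}$ is one query to $U_b$. Three repetitions and the final measurement add only constant factors. The total is therefore $O(q_A)$ queries to $U_A$, $O(q_b)$ queries to $U_b$, and $O(g+q_A+q_b)$ extra gates. None of these steps is a real obstacle. The only delicate points are keeping the register ordering consistent in the block-encoding check and the probability bookkeeping in the second bullet.
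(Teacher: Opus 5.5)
Your proposal is correct and follows essentially the same route as the paper. For the first bullet, both use the explicit inverse $\widehat A^{-1}$ and $\widehat A^{-1}\widehat b=\binom{0}{\alpha A^{-1}b}$. For the second, both bound the first-qubit component by $\norm{y_1/\norm{y_1}-x}\le 2\norm{y_1-x}$, take per-run success probability $\tfrac23\cdot\tfrac9{16}=\tfrac38$, and use $(5/8)^3<1/3$. For the third, both use $O(1)$ oracle calls and gates per call to $U_{\widehat A}$.

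Your constant check already closes, so the fallback is unnecessary. Your explicit factorization $U_{\widehat A}=(X\otimes I)(\proj0\otimes U_A^\dagger+\proj1\otimes U_A)$ is a useful detail that the paper leaves implicit.
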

\begin{proof}
The exact block-encoding gives $\norm A\le\alpha$, and direct calculation gives
\[
 \widehat A^2
 =\alpha^{-2}\diag\left(AA^\dagger,A^\dagger A\right),
 \qquad
 \widehat A^{-1}
 =\alpha\begin{pmatrix}0&\left(A^\dagger\right)^{-1}\\ A^{-1}&0\end{pmatrix}.
\]
Thus $\widehat A=\widehat A^\dagger$,
$\norm{\widehat A}=\norm A/\alpha\le1$, and
$\norm{\widehat A^{-1}}=\alpha\norm{A^{-1}}\le\kappa$.  Also,
\[
 \widehat A^{-1}\widehat b=\binom0{\alpha A^{-1}b},
\qquad
 \left\|\widehat A^{-1}\widehat b\right\|
 =\alpha\left\|A^{-1}b\right\|,
 \qquad
 \frac{\widehat A^{-1}\widehat b}
      {\left\|\widehat A^{-1}\widehat b\right\|}
 =\binom0{A^{-1}b/\left\|A^{-1}b\right\|}.
\]
Compressing the $a$ signal qubits of $U_{\widehat A}$ gives $\widehat A$,
while $U_{\widehat b}\ket0\ket{0^n}=\ket0\ket b$.  This proves the first item.

For a successful output, let $y_1=(\bra1\otimes I)\widetilde y$.  Then
\[
 \left\|y_1-\frac{A^{-1}b}{\left\|A^{-1}b\right\|}\right\|
 \le\frac{\varepsilon}{2},
 \qquad
 \left\|y_1\right\|\ge1-\frac{\varepsilon}{2}\ge\frac34.
\]
The first system qubit is therefore $1$ with probability at least $9/16$,
and its normalized remaining state satisfies
\begin{align*}
 \left\|\frac{y_1}{\left\|y_1\right\|}
       -\frac{A^{-1}b}{\left\|A^{-1}b\right\|}\right\|
 &\le \left|1-\left\|y_1\right\|\right|
      +\left\|y_1-\frac{A^{-1}b}{\left\|A^{-1}b\right\|}\right\|\\
 &\le2\left\|y_1-\frac{A^{-1}b}{\left\|A^{-1}b\right\|}\right\|
 \le\varepsilon.
\end{align*}
Each run succeeds with probability at least $(2/3)(9/16)=3/8$, so three
independent runs succeed with probability at least $1-(5/8)^3>2/3$.
Each call to $U_{\widehat A}$ uses $O(1)$ calls to $U_A$ or $U_A^\dagger$ and
$O(1)$ extra gates, and each call to $U_{\widehat b}$ uses one call to $U_b$.
The third item follows.
\end{proof}

By \cref{prop:hermitian-reduction}, throughout the upper-bound construction
we assume that $U_A$ is an exact $(1,a,0)$-block-encoding of a Hermitian
matrix $A$ with $\norm A\le1$ and $\norm{A^{-1}}\le\kappa$. We write
$s=\norm{A^{-1}b}$, so $1\le s\le\kappa$.

\subsection{Complexity and error conventions}

We count resources as follows.
\begin{itemize}
\item A call to $U_A$ or $U_A^\dagger$ or their controlled version is one \emph{matrix query}; a call to $U_b$, $U_b^\dagger$, or their controlled version is one \emph{vector query}.
\item An \emph{extra gate} is a one- or two-qubit gate outside the input-oracle calls, which allows exact rotations. 
\end{itemize}

For lower bounds we allow mixed conditional outputs and the weaker guarantee
\begin{equation}\label{eq:trace-guarantee}
 \Dtr\left(\rho_{\rm out},\proj x\right)\le\varepsilon,\qquad \text{ where }
 \Dtr\left(\rho,\sigma\right)=\tfrac12\norm{\rho-\sigma}_1.
\end{equation}
Success probability remains at least $2/3$.  We assume that success is
heralded by a classical output flag; conditioned on that flag, the output
state satisfies \eqref{eq:trace-guarantee}.  This is the success flag used in
the lower-bound reductions below.  The vector guarantee implies this
trace-distance guarantee, as $\Dtr(\proj u,\proj v)\le\norm{u-v}$ for unit
vectors \cite[Chapter 3]{Watrous18}.

\subsection{Polynomial matrix transformations}

\begin{lemma}[Eigenvalue transformation, {\cite[Corollary 18]{GSLW19}}]\label[lemma]{lem:qsvt}
Let $U_A$ be an $(1,a,0)$-block-encoding of a Hermitian matrix $A$. For a real even polynomial $p$ of degree at most $d$ with $|p(x)|\le1$ on $[-1,1]$, an exact block-encoding of $p(A)$ can be implemented using $O(d+1)$ queries to $U_A$ and $O((d+1)(a+1))$ extra gates.
\end{lemma}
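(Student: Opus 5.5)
We reduce to the general QSVT theorem for Hermitian block-encodings, stated for even real polynomials.

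\emph{Step 1: singular values versus eigenvalues.} Since $A$ is Hermitian with $\norm A\le1$, write $A=\sum_j\lambda_j\proj{v_j}$ with $\lambda_j\in[-1,1]$. Its singular value decomposition is $A=\sum_j|\lambda_j|\,\ket{v_j}\!\bra{\operatorname{sgn}(\lambda_j)v_j}$. The even singular value transformation is $p^{(SV)}(A)=\sum_j p(|\lambda_j|)\proj{v_j}$, which depends only on the right singular vectors. Because $p$ is even, $p(|\lambda_j|)=p(\lambda_j)$. Hence $p^{(SV)}(A)=p(A)$ exactly; the sign ambiguity in the singular vectors is harmless.

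\emph{Step 2: the QSVT circuit.} A real even polynomial with $|p|\le1$ on $[-1,1]$ need not be directly achievable by a single sequence of phases. We therefore use the standard decomposition from \cite{GSLW19}: there exist phase sequences $\Phi$ whose QSVT circuits $U_\Phi$, built from alternating $U_A$, $U_A^\dagger$ and projector-controlled rotations $e^{i\phi_k(2\Pi-I)}$ with $\Pi=\proj{0^a}\otimes I$, block-encode $P_\pm$. Here $P_\pm$ are complex even polynomials with $\operatorname{Re}P_\pm$ related to $p$, and their average gives $p$.

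Concretely, \cite[Corollary 10]{GSLW19} yields $P$ with $\operatorname{Re}P=p$. Running $U_\Phi$ and $U_{-\Phi}$ under one extra control qubit prepared by Hadamards (the LCU trick) gives an exact $(1,a+1,0)$-block-encoding of $(P(A)+P^*(A))/2=p(A)$. Each circuit uses $d$ applications of $U_A$ or $U_A^\dagger$. The control can be made to act only on the phase rotations, so the query count stays $O(d+1)$ rather than doubling in a way that matters.

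\emph{Step 3: gate count.} Each projector-controlled rotation is implemented by an $a$-qubit controlled-NOT onto an ancilla, a single-qubit $Z$-rotation, and uncomputation. An $a$-controlled NOT costs $O(a+1)$ one- and two-qubit gates using one clean ancilla. With $d+1$ such rotations plus $O(1)$ Hadamards, the total is $O((d+1)(a+1))$ extra gates.

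\emph{Main obstacle.} The delicate point is exactness. The existence of exact phases for the complex completion $P$ requires $|P|\le1$ on $[-1,1]$ together with the parity conditions, which \cite[Corollary 10, Theorem 3]{GSLW19} provide exactly when the paper's convention of exact rotations is allowed. If one insisted on computing the phases, one would only need existence here, since the gate model permits exact rotations. I would cite \cite[Corollary 18]{GSLW19} for this step and verify only the Hermitian specialization in Step 1.
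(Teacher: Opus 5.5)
Your proposal is correct and follows essentially the same route as the paper, which gives no independent proof and imports the lemma directly from \cite[Corollary 18]{GSLW19}. Your Steps 2--3 reproduce that corollary's construction: real-QSP phases, and a controlled combination of $U_\Phi$ and $U_{-\Phi}$ that yields $(P+P^*)/2=p$. Your Step 1 supplies the one extra observation the paper leaves implicit, namely that for even $p$ the singular-value transform of a Hermitian block coincides with $p(A)$.
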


\begin{lemma}[Polynomial approximations of negative power functions, {\cite[Corollary 67]{GSLW19}}]\label[lemma]{lem:inverse-square}
For $0<\delta\le1/2$ and $0<\varepsilon<1/2$, there is a real even
polynomial $p_{\delta,\varepsilon}$ such that
\begin{itemize}
\item $p_{\delta,\varepsilon}$ has degree
$O(\delta^{-1}\log(1/\varepsilon))$;
\item $|p_{\delta,\varepsilon}(x)|\le1$ for $|x|\le1$;
\item $|p_{\delta,\varepsilon}(x)-\delta^2/(2x^2)|\le\varepsilon$ for
$\delta\le|x|\le1$.
\end{itemize}
\end{lemma}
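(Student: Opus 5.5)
We would prove \cref{lem:inverse-square} directly, by truncating a geometric series and then compressing its degree with Chebyshev approximation. The proof would go in three steps.

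\textbf{Step 1: truncated series.} Write $y=x^2$. For $\delta\le|x|\le1$ we have $0\le 1-x^2\le 1-\delta^2$, so
\[
 \frac{\delta^2}{2x^2}=\frac{\delta^2}{2}\sum_{k\ge0}(1-x^2)^k .
\]
Truncating after $K=\lceil\delta^{-2}\log(4/\varepsilon)\rceil$ terms leaves a tail of at most
\[
 \frac{\delta^2}{2}\cdot\frac{(1-\delta^2)^{K}}{\delta^2}\le\frac{\varepsilon}{8}.
\]
Call the truncation $S_K(x)=\frac{\delta^2}{2}\sum_{k<K}(1-x^2)^k$. It is an even polynomial, but its degree $2K=O(\delta^{-2}\log(1/\varepsilon))$ is too large.

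\textbf{Step 2: degree reduction.} Substitute $z=1-2x^2\in[-1,1]$, so that $(1-x^2)^k=((1+z)/2)^k$. By the standard Chebyshev-truncation estimate for monomials (Sachdeva--Vishnoi), each such term has a polynomial approximant in $z$ of degree $D=O\bigl(\sqrt{K\log(1/\eta)}\bigr)$ with uniform error $\eta$ on $[-1,1]$. Summing $K$ terms with weight $\delta^2/2$ gives total error $\delta^2K\eta/2$. Choosing $\eta=\varepsilon/(8\log(4/\varepsilon))$ keeps this below $\varepsilon/8$, and then
\[
 D=O\bigl(\delta^{-1}\sqrt{\log(1/\varepsilon)\log(\log(1/\varepsilon)/\varepsilon)}\bigr)=O\bigl(\delta^{-1}\log(1/\varepsilon)\bigr).
\]
Since the approximant is a polynomial in $z=1-2x^2$, the result $q$ is automatically even and real of degree $2D$, and it approximates $\delta^2/(2x^2)$ to error $\varepsilon/4$ on $\delta\le|x|\le1$. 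On all of $[-1,1]$ it satisfies $|q|\le \delta^2K/2+\varepsilon=O(\log(1/\varepsilon))$.

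\textbf{Step 3: global boundedness.} This is the main obstacle: near $x=0$ the approximant $q$ grows to $\Theta(\log(1/\varepsilon))$ instead of staying below $1$. We would fix it by multiplying by an even polynomial approximation $r$ of the rectangle function. We take $r$ with values in $[0,1]$, within $\varepsilon/8$ of $1$ on $|x|\ge\delta$, and within $\varepsilon/(8\log(4/\varepsilon))$ of $0$ on $|x|\le\delta/(1+c)$ for a constant $c$ such as $c=1/4$. Such an $r$ exists with degree $O(\delta^{-1}\log(1/\varepsilon))$ (the GSLW sign/rectangle approximations; the extra $\log\log$ in the precision is absorbed as in Step 2).

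We then check the product $qr$ region by region:
\begin{itemize}
\item On $|x|\ge\delta$, the product is within $\varepsilon/2$ of the target, and it is at most $1/2+\varepsilon<1$.
\item On $|x|\le\delta/(1+c)$, the product is at most $O(\log(1/\varepsilon))\cdot\varepsilon/\log(1/\varepsilon)$, which is small.
\item On the transition band $\delta/(1+c)\le|x|\le\delta$, we need the bound $|q(x)|\le\delta^2/(2x^2)+\varepsilon$ there, so that $|q|\le(1+c)^2/2+\varepsilon<1$. To obtain it, we would run Steps 1--2 with $\delta$ replaced by $\delta/(1+c)$, which changes the degree only by a constant factor. Note that $S_K$ itself never exceeds $\delta^2/(2x^2)$, so the bound holds up to the Chebyshev error.
\end{itemize}

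Finally, we set $p_{\delta,\varepsilon}=(1-\varepsilon/2)\,qr$. The rescaling guarantees $|p_{\delta,\varepsilon}|\le1$ on $[-1,1]$ while costing at most an additional $\varepsilon/4$ on $\delta\le|x|\le1$. This delivers all three properties after adjusting constants in $\varepsilon$.
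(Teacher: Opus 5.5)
Your proof is correct, and it takes a different, more specialized route than the paper. The paper gives no proof of its own; it imports \cite[Corollary 67]{GSLW19}. That corollary is proved generically for $\tfrac{\delta^c}{2}x^{-c}$ with any $c>0$. It Taylor-expands about $x_0=1$ and applies a general result for functions whose Taylor series has a bounded weighted coefficient sum on an enlarged interval (their Corollary 66), which already contains the Chebyshev compression and the rectangle-function damping. Parity is handled separately afterwards.

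You instead expand in the even variable $1-x^2$ from the start, so parity is automatic. The substitution $z=1-2x^2$ keeps the whole construction inside $[-1,1]$, which gives the global $O(\log(1/\varepsilon))$ bound on $q$ directly. Positivity of the geometric coefficients gives $0\le S_K(x)\le\delta^2/(2x^2)$ for every $K$ and every $0<|x|\le1$, and this is what controls the transition band. The GSLW route buys generality (all exponents $c$, and odd versions). Yours buys a short self-contained argument that needs only Sachdeva--Vishnoi and a rectangle filter.

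Three points to tighten:
\begin{itemize}
\item In Step~2, $((1+z)/2)^k$ is not a monomial in $z$. Write it as $w^{2k}$ with $z=T_2(w)=2w^2-1$. The Chebyshev truncation of $w^{2k}$ contains only $T_{2j}(w)=T_j(z)$, so the Sachdeva--Vishnoi bound transfers with $D=O(\sqrt{k\log(1/\eta)})$.
\item In Step~3, the GSLW rectangle polynomial $P'$ is only guaranteed $|P'|\le1$ in the transition band. So $1-P'$ can reach $2$ there, which would break your band estimate. Take, e.g., $r=1-P'^2$, which lies in $[0,1]$ everywhere on $[-1,1]$ and keeps the required accuracies.
\item Rerunning Steps~1--2 with $\delta/(1+c)$, and the final rescaling by $1-\varepsilon/2$, are both unnecessary. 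The monotone bound already gives $|q(x)|\le\delta^2/(2x^2)+\varepsilon/8$ on the band, hence $|qr|\le 25/32+\varepsilon/8<1$ there. On the other two regions you already have $|qr|<1$.
\end{itemize}
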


\subsection{Transducers}\label{sec:finite-tool}

\begin{definition}[Transducer and catalyst, {\cite[Section 5.1]{BJY24}}]\label[definition]{def:transducer}
Let $\cH$ and $\mathcal L$ be finite-dimensional Hilbert spaces, called the
\emph{public space} and \emph{private space}, respectively. A \emph{transducer}
is a unitary $S$ on $\cH\oplus\mathcal L$. For $\xi,\zeta\in\cH$ and
$\omega\in\mathcal L$, if
\begin{equation}\label{eq:transducer-definition}
 S\left(\xi\oplus\omega\right)=\zeta\oplus\omega.
\end{equation}
then $\omega$ is a \emph{catalyst} for $\xi$, and we write
$\xi\transduce{S}\zeta$ (or $S\colon\xi\transduce{}\zeta$). The induced map
on $\cH$ is the \emph{transduction action} of $S$, denoted by
$S\DownTransduce_{\cH}$.
\end{definition}

By \cite[Theorem 5.1]{BJY24}, in finite dimensions
$S\DownTransduce_{\cH}$ is a well-defined unitary on
$\cH$, and for every $\xi\in\cH$ there exists $\omega_\xi\in\mathcal L$
such that
\[
 S(\xi\oplus\omega_\xi)
 =\bigl(S\DownTransduce_{\cH}\xi\bigr)\oplus\omega_\xi
\]
The vector $\omega_\xi$ need not be normalized.

\begin{definition}[Canonical transducer, {\cite[Sections 7.1--7.2]{BJY24}}]\label[definition]{def:canonical}
Let
$\mathcal L=\mathcal L_0\oplus\mathcal L_1\oplus\mathcal L_2$, where
$\mathcal L_0$ is the non-query subspace, and let $O_i$ be a unitary input
oracle on $\mathcal L_i$ for $i=1,2$. A \emph{canonical transducer} is
\begin{equation}\label{eq:canonical-definition}
 S=S^\circ\left(I_{\cH}\oplus I_0\oplus O_1\oplus O_2\right),
\end{equation}
where $I_0$ is the identity on $\mathcal L_0$. The \emph{work unitary}
$S^\circ$ is independent of the input quantum oracles.
\end{definition}

The work unitary $S^\circ$ is an explicit circuit rather than an input
oracle.  The finite compiler uses a controlled circuit for $S^\circ$, whose
gates are counted as extra gates; only the calls to $O_1$ and $O_2$ are
counted as input queries.

Each $O_i$ is controlled on its query label; multiplicity spaces share the same
oracle call. For a chosen catalyst
$\omega=\omega_0\oplus\omega_1\oplus\omega_2$, define
\begin{equation}\label{eq:transducer-complexities}
 W:=\norm\omega^2,
 \qquad L_i:=\norm{\omega_i}^2\quad(i=1,2).
\end{equation}
Following \cite[Section 7.1]{BJY24}, $W$ is the \emph{transduction complexity},
and $L_i$ is the $i$th \emph{partial Las Vegas query complexity}.

\begin{lemma}[Fixed-accuracy implementation, {\cite[Theorem 7.1]{BJY24}}]\label[lemma]{thm:bjy}
Let $S$ be a canonical transducer with input oracles $O_1$ and $O_2$, represented
using a constant-size label register and a shared workspace. For each unit
input $\xi$ under consideration, choose a catalyst
$\omega(\xi)=\omega_0(\xi)\oplus\omega_1(\xi)\oplus\omega_2(\xi)$. Suppose
known $W,L_1,L_2$ satisfy
\[
 \norm{\omega(\xi)}^2\le W,
 \qquad \norm{\omega_i(\xi)}^2\le L_i,
 \qquad 1\le L_i\le W
 \quad (i=1,2).
\]
Suppose also that a controlled circuit for $S^\circ$ uses at most $g\ge1$
extra gates.
For every fixed $0<\varepsilon_0<1$, there is a unitary $U_{\rm fin}$ such that
\begin{equation}\label{eq:fixed-transducer-error}
 \norm{U_{\rm fin}\left(\xi\otimes\ket0_{\rm aux}\right)
       -\left(S\DownTransduce_{\cH}\xi\right)\otimes\ket0_{\rm aux}}
 \le\varepsilon_0.
\end{equation}
It uses
\begin{itemize}
\item $O(L_i)$ controlled queries to each $O_i$;
\item $O(W(g+1))$ extra gates and $O(\log(2+W))$ auxiliary qubits.
\end{itemize}
The same $U_{\rm fin}$ works for all such $\xi$: it depends on the bounds but not
on $\omega(\xi)$. Its auxiliary label and control operations can be chosen real.
\end{lemma}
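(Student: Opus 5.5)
The plan is to follow the implementation strategy of \cite[Section 7]{BJY24}: simulate the transduction action by an iterated, ``time-spread'' version of $S$. Then we refine the accounting so that each oracle $O_i$ is charged only for the squared norm $L_i$ of the catalyst component that actually passes through its query subspace, rather than for the total transduction complexity $W$. Throughout, $\varepsilon_0$ is a fixed constant, so all constants may depend on it.

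\emph{Step 1 (exact identity behind the construction).} Fix a unit $\xi$ and its catalyst $\omega=\omega(\xi)$, and write $\zeta=S\DownTransduce_{\cH}\xi$. By \cref{def:transducer}, $S(\xi\oplus\omega)=\zeta\oplus\omega$. I would take $K=\lceil C\,W/\varepsilon_0^2\rceil$ and introduce a clock register $\ket t$, $t\in\{0,\dots,K\}$. The target is a unitary $V$ that sends the public input at time $0$ to time $K$ while applying $S$ at each tick. The intermediate catalyst is distributed as $\omega/\sqrt K$ over the clock values, so the input $\xi\ket0+\sum_t(\omega/\sqrt K)\ket t$ is mapped exactly to $\zeta\ket K+\sum_t(\omega/\sqrt K)\ket t$.

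\emph{Step 2 (removing the catalyst and bounding the error).} The catalyst part has squared norm $\norm\omega^2\le W$, and each clock value carries only weight $W/K\le\varepsilon_0^2/C$. Hence starting from $\xi\otimes\ket0_{\rm aux}$ instead of the catalytic state costs an error of order $\sqrt{W/K}=O(\varepsilon_0)$. A final, input-independent relabelling of the clock then gives \eqref{eq:fixed-transducer-error}. Since $V$ and $K$ depend only on $W,L_1,L_2$, the same $U_{\rm fin}$ works for every $\xi$. The clock uses $O(\log(2+W))$ qubits. The gate count is $O(K(g+1))=O(W(g+1))$, since each tick uses one controlled copy of $S^\circ$ plus a clock increment; the increment and the controls can be written with real gates.

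\emph{Step 3 (query accounting, the main obstacle).} A naive version of Step 2 uses $K=\Theta(W)$ calls to each $O_i$, one per tick. To reach $O(L_i)$, I would exploit that the oracle $O_i$ acts only on the label-$i$ subspace $\mathcal L_i$, where the catalyst weight is at most $L_i$. I would try the following first. Replace each per-tick oracle call by a controlled ``deferred'' query that swaps the label-$i$ content into a dedicated query buffer. Then apply $O_i$ to that buffer only at $K_i=\Theta(L_i/\varepsilon_0^2)$ equally spaced ticks, so the buffered content is queried on a coarser schedule matched to its weight. The error analysis must show that delaying and coarsening these calls perturbs the state by $O(\sqrt{L_i/K_i})=O(\varepsilon_0)$. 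This holds because the label-$i$ component at any tick has squared norm at most $L_i/K$, and the hypothesis $1\le L_i\le W$ keeps the schedules compatible.

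If this buffering argument does not go through cleanly, the fallback is to invoke directly the partial-complexity implementation of \cite[Theorem 7.1]{BJY24}, which gives exactly $O(L_i)$ queries per oracle. In that case I would only verify three things: that its constants and gate bounds specialize to those stated here for constant $\varepsilon_0$, that its auxiliary label and control operations can be taken real, and that it remains valid with a constant-size label register and shared workspace.
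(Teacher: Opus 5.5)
Your fallback is exactly the paper's proof. The paper invokes \cite[Theorem 7.1]{BJY24} with separate budgets: $K=\Theta(W)$ for the work unitary and $K_i=\Theta(L_i)$ for each oracle, chosen as powers of two with $1\le K_i\le K$ (this is where $1\le L_i\le W$ is used). For fixed $\varepsilon_0$ this gives $O(L_i)$ queries, $O(W)$ controlled copies of $S^\circ$ costing $O(Wg)$ gates, and $O(W)$ further compiler gates. Realness comes from \cref{lem:real-compiler}: the clock is prepared with Hadamards, and the label-conditioned increments $t\mapsto t+K/K_i$ are built from NOT, CNOT and Toffoli gates. The self-contained route in Steps 1--3, however, contains genuine errors and should not be offered as the argument.

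In Steps 1--2, the error of a catalytic construction is controlled by the \emph{total} catalyst norm. If $V(\xi\otimes\ket0+\Omega)=\zeta\otimes\ket0+\Omega$, then $\norm{V(\xi\otimes\ket0)-\zeta\otimes\ket0}=\norm{\Omega-V\Omega}\le2\norm\Omega$. Your $\Omega=\sum_t(\omega/\sqrt K)\ket t$ has $\norm\Omega^2\approx W$, so the small weight per clock value is irrelevant and the bound you get is only $O(\sqrt W)$.

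The claimed exact identity is also false as described. $\omega/\sqrt K$ is a catalyst for $\xi/\sqrt K$, not for $\xi$. Moreover, a full public input that passes through $S$ at each of $K$ ticks would come out as $(S\DownTransduce_{\cH})^K\xi$, not $\zeta$. The time-spreading that works is the reverse: prepare the clock uniformly so the input is split into $K$ slices $\xi/\sqrt K$, let each slice meet $S$ exactly once, and reuse a single shared catalyst $\omega/\sqrt K$, whose squared norm is $W/K$.

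In Step 3, the bound $L_i/K$ on the label-$i$ weight holds only along the ideal catalytic trajectory. Omitting or deferring an oracle call breaks the catalytic identity by up to $2\sqrt{L_i/K}$ at that tick. A hybrid bound over the $\Theta(K)$ affected ticks therefore yields only $O(\sqrt{KL_i})$, not $O(\sqrt{L_i/K_i})$. Getting the partial costs requires a redesigned circuit with its own exact catalytic identity, whose catalyst has squared norm $O(W/K+\sum_iL_i/K_i)$. That is precisely the content of \cite[Theorem 7.1]{BJY24}, which the paper imports rather than reproves.
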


\begin{proof}
Choose the compiler and oracle budgets separately in
\cite[Theorem 7.1]{BJY24}.  For fixed $\varepsilon_0$, its error bound allows
$O(L_i)$ queries to $O_i$.  Its $O(W)$ uses of the controlled circuit for
$S^\circ$ contribute $O(Wg)$ gates, and the remaining compiler overhead is
$O(W)$.  All compiler gates other than the controlled $S^\circ$ circuits and
oracle queries can be chosen real by
\cref{lem:real-compiler}.
\end{proof}


\section{Algorithm overview}\label{sec:algorithm}

We give the solver for these inputs, then construct its preparation and refinement circuits in the next two sections. Only the preparation uses the state-preparation oracle.

\subsection{Notations for the Algorithm}

We use a data register $\rD$ to store the input vector as
$\ket b_{\rD}=\sum_{j=0}^{d-1}b_j\ket j_{\rD}$. A two-qubit register $\rG$
stores a component label: $\ket g_{\rG}$ denotes component $g$ for
$g\in\{0,1,2\}$. The remaining basis state $\ket3_{\rG}$ is unused. On $\rG\rD$, define
\begin{equation}\label{eq:graph}
\begin{aligned}
 H_{\rG\rD}
 &=\left(\ket0\bra1+\ket1\bra0\right)_{\rG}\otimes A_{\rD}
   -\kappa^{-1}\left(\ket0\bra2+\ket2\bra0\right)_{\rG}
      \otimes I_{\rD},\\
 \ket e_{\rG\rD}&=\ket1_{\rG}\ket b_{\rD},\qquad
 P=\Pi_{\ker H},\qquad
 \ket u_{\rG\rD}=\frac{P\ket e_{\rG\rD}}{\norm{P\ket e_{\rG\rD}}}.
\end{aligned}
\end{equation}

Besides $\rG\rD$, the algorithm uses disjoint ancillary registers $\rA$, $\rF$,
and $\rC$ for preparation, filtering, and correction, respectively. For any
register $\mathsf{R}$, $\ket0_{\mathsf{R}}$ denotes its all-zero state. A
circuit subscript lists the registers on which it acts.

The acceptance projector is
\begin{equation}\label{eq:success-projector}
 \Pi_{\rm succ}
 =\proj0_{\rA}\otimes\proj0_{\rF}\otimes\proj0_{\rC}
       \otimes\proj2_{\rG}\otimes I_{\rD}.
\end{equation}
The solver uses the kernel of $H$ to connect the input $\ket b_{\rD}$ to the
solution: it prepares a component along $\ket u_{\rG\rD}$, filters out the
orthogonal component, and converts the $\rG=2$ component into $\ket x_{\rD}$.

\subsection{Solver and subroutines}

\paragraph{Description of the algorithm.}
A \emph{run} starts with all five registers set to zero, applies the three
subroutines below, and ends with the measurement of
$\{\Pi_{\rm succ},I-\Pi_{\rm succ}\}$. Ideally, the subroutines perform the
following tasks.
\begin{itemize}
\item \emph{Preparation.}
$\Coarse$ prepares $\ket\Psi_{\rA\rG\rD}$ such that
\[
 \begin{aligned}
 \left(\bra0_{\rA}\otimes I_{\rG\rD}\right)
 \ket\Psi_{\rA\rG\rD}
 &=\beta\ket u_{\rG\rD}+\ket h_{\rG\rD},\\
 P\ket h_{\rG\rD}&=0,\qquad \beta\in[1/32,1].
 \end{aligned}
\]
Thus $\beta\ket u$ is the desired kernel component, while $\ket h$ is removed
by the next step.

\item \emph{Kernel filtering.}
$\FilterKernel(U_A;\kappa,\eta)_{\rF\rG\rD}$ is an
$\eta$-approximate block-encoding of $P$, with signal register $\rF$. It maps
$\beta\ket u+\ket h$ to $\beta\ket u+O(\eta)$ in its $\rF=0$ block.

\item \emph{Solution correction.}
$\CorrectSolution(U_A;\kappa,\eta)_{\rC\rD}$ is an
$\eta/2$-approximate block-encoding of
$C_\star=(I+\kappa^{-2}A^{-2})/4$, with signal register $\rC$.
By \cref{lem:geometry}, the $\rG=2$ component of $\ket u_{\rG\rD}$ is
proportional to $A(A^2+\kappa^{-2}I)^{-1}b$, and
\[
 C_\star\ket{A(A^2+\kappa^{-2}I)^{-1}b}_{\rD}
 =\frac14\ket{A^{-1}b}_{\rD}=\frac{s}{4}\ket x_{\rD}.
\]
\end{itemize}

Finally, $\Pi_{\rm succ}$ selects $\rA=\rF=\rC=0$ and $\rG=2$, leaving
$\rD$ as the output register. The preparation uses $O(\kappa)$ matrix queries
and $O(\kappa/s)$ vector queries; the filter and correction each
use $O(\kappa\log(1/\eta))$ matrix queries. \Cref{prop:refinement} proves the
error and success probability, and $72000$ independent runs reduce the total
failure probability below $1/3$.

\begin{algorithm}[H]
\caption{$\Solver(U_A,U_b;\kappa,\widehat s,\varepsilon)$}\label{alg:solver}
\begin{algorithmic}[1]
\Require A normalized Hermitian input with
         $3s/8\le\widehat s\le5s/2$, where $s=\norm{A^{-1}b}$.
\Ensure On success, an $\varepsilon$-approximation to $\ket x$ in $\rD$.
\State $\eta\gets\varepsilon/1024$.
\For{$j=1,\ldots,72000$}
  \State Initialize $\rA,\rF,\rC,\rG,\rD$ to zero.
  \State Apply $\Coarse(U_A,U_b;\kappa,\widehat s)_{\rA\rG\rD}$.
  \State Apply $\FilterKernel(U_A;\kappa,\eta)_{\rF\rG\rD}$.
  \State Apply $\CorrectSolution(U_A;\kappa,\eta)_{\rC\rD}$.
  \State Measure $\{\Pi_{\rm succ},I-\Pi_{\rm succ}\}$.
  \If{the outcome is $\Pi_{\rm succ}$} \State \Return $\rD$ and success. \EndIf
\EndFor
\State \Return failure.
\end{algorithmic}
\end{algorithm}



\section{Preparing the kernel component}\label{sec:graph}\label{sec:transducer}

This section presents \cref{alg:coarse} and proves its correctness.

\begin{algorithm}[H]
\caption{$\Coarse(U_A,U_b;\kappa,\widehat s)$}\label{alg:coarse}
\begin{algorithmic}[1]
\Require A normalized Hermitian input and an estimate
         $\widehat s$ satisfying $3s/8\le\widehat s\le5s/2$, where
         $s=\norm{A^{-1}b}$.
\Ensure A state $\ket\Psi_{\rA\rG\rD}$ satisfying
        \eqref{eq:coarse-guarantee}.
\State Initialize $\rA,\rG,\rD$ to zero.
\State Construct the preparation transducer $S$ from
       \cref{prop:preparation-transducer}.
\State $U_{\rm fin}\gets\Finite(S;\kappa,\widehat s)$
       \Comment{See \cref{prop:finite-preparation}}
\State Prepare $\ket1_{\rG}\ket b_{\rD}$, apply $U_{\rm fin}$,
       and \Return $\ket\Psi_{\rA\rG\rD}$.
\end{algorithmic}
\end{algorithm}

The action needed from $S$ is
\[
 \psi=\left(S\DownTransduce_{\rG\rD}\right)e,
 \qquad \norm\psi=1,\qquad \langle u,\psi\rangle>\frac1{30}.
\]
The routine $\Finite$ implements this action by interleaving controlled
applications of the work part of $S$ with calls to its two oracle parts. For
$U_{\rm fin}=\Finite(S;\kappa,\widehat s)$, it guarantees
\[
 \left\|U_{\rm fin}\left(\ket0_{\rA}\ket e_{\rG\rD}\right)
       -\ket0_{\rA}\ket\psi_{\rG\rD}\right\|<10^{-3}.
\]
For fixed classical parameters $\kappa$ and $\widehat s$, the circuit uses
$O(\kappa)$ matrix queries and $O(\kappa/s)$ vector queries; see
\cref{prop:preparation-transducer,prop:finite-preparation}.

The goal of this section is to prove the following proposition.

\begin{proposition}[Kernel-component preparation]\label[proposition]{prop:coarse}
Let $s=\norm{A^{-1}b}$, and suppose the estimate $\widehat s$
satisfies $3s/8\le\widehat s\le5s/2$.
For $P$ and $\ket u$ defined in \cref{eq:graph}, \cref{alg:coarse}
prepares a unit vector $\ket\Psi_{\rA\rG\rD}$ satisfying
\begin{equation}\label{eq:coarse-guarantee}
\begin{aligned}
 \left(\bra0_{\rA}\otimes I_{\rG\rD}\right)
 \ket\Psi_{\rA\rG\rD}
 &=\beta\ket u_{\rG\rD}+\ket h_{\rG\rD},\\
 P\ket h_{\rG\rD}&=0,\qquad \beta\in[1/32,1].
\end{aligned}
\end{equation}
It uses $O(\kappa)$ matrix queries, $O(\kappa/s)$ vector queries, and
\begin{equation*}
       O\!\left(\kappa\left(a+1\right)+\frac\kappa s\left(n+1\right)\right)
\end{equation*}
extra one- and two-qubit gates.
\end{proposition}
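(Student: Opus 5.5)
The plan is to obtain the guarantee \eqref{eq:coarse-guarantee} from two largely independent facts. The first is a \emph{quantitative} fact: the $\rA=0$ block of $U_{\rm fin}(\ket0_{\rA}\ket e)$ has overlap at least $1/32$ with $\ket u$. The second is a \emph{structural} fact: the part of that block outside the span of $\ket u$ is annihilated by $P$.

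For the structural fact, I would first compute $\ker H$ explicitly from \eqref{eq:graph}. Writing a vector as $(v_0,v_1,v_2)$ by the label $\rG$, we have $H(v_0,v_1,v_2)=(Av_1-\kappa^{-1}v_2,\;Av_0,\;-\kappa^{-1}v_0)$. Hence $\ker H=\{(0,w,\kappa Aw)\}$, and
\[
P\ket e\propto\bigl(0,\,(I+\kappa^2A^2)^{-1}b,\,\kappa A(I+\kappa^2A^2)^{-1}b\bigr).
\]
This is consistent with \cref{lem:geometry}. Since $\operatorname{ran}H=(\ker H)^\perp$, any vector of the form $c\ket e+H\ket y$ with $c\in\R$ has $P$-image $c\,P\ket e=c\norm{P\ket e}\ket u$. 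So the target invariant is
\[
(\bra0_{\rA}\otimes I)\,U_{\rm fin}\ket0_{\rA}\ket e\in \R\,\ket e+\operatorname{ran}H .
\]
Once this holds, I set $\beta=c\norm{Pe}=\langle u,\Psi_0\rangle$ and $\ket h=\Psi_0-\beta\ket u$, where $\Psi_0$ denotes this block; then $P\ket h=0$ exactly.

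To prove the invariant, I would track the whole computation inside the real Krylov-type space spanned by $\ket g\,q(A)\ket b$, with real polynomials $q$ and labels $g\in\{0,1,2\}$. The key observation is that every scalar the circuit can generate from the oracles is of the form $\bra b q(A)\ket b$, which is real because $A$ is Hermitian and $q$ is real. Moreover, by \cref{thm:bjy} all compiler gates other than $S^\circ$ and the oracle calls can be chosen real. The work unitary $S^\circ$ of the transducer from \cref{prop:preparation-transducer} should be designed as a reflection built from the real structure of $H$: the query spaces $\mathcal L_1,\mathcal L_2$ carry the edges $A$ and $U_b$, and the non-query part carries the $-\kappa^{-1}$ edge. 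With that design, each application of $S^\circ$ and of the oracles maps real combinations of $\ket e$ and $H$-images back into the same real space. This is exactly the flow/potential picture of a canonical transducer: the transduction action preserves $e$ modulo $\operatorname{ran}H$ up to a real scalar.

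The quantitative part is then routine. From \cref{prop:preparation-transducer} we have $\psi=(S\DownTransduce)e$ with $\langle u,\psi\rangle>1/30$. From \cref{prop:finite-preparation} (i.e.\ \cref{thm:bjy} with $\varepsilon_0=10^{-3}$) the block $\Psi_0$ is within $10^{-3}$ of $\psi$. Hence $\beta=\langle u,\Psi_0\rangle\ge 1/30-10^{-3}\ge1/32$, and $\beta\le\norm{\Psi_0}\le1$.

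For the costs, I would take catalyst bounds $W,L_1=O(\kappa)$ and $L_2=O(\kappa/s)$ from \cref{prop:preparation-transducer}, using that $\widehat s=\Theta(s)$ fixes the edge weights. Then \cref{thm:bjy} gives $O(\kappa)$ matrix queries and $O(\kappa/s)$ vector queries. Each controlled $U_A$ call costs $O(a+1)$ gates in the wiring and each $U_b$ label costs $O(n+1)$, which yields the stated gate bound.

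The main obstacle is the exactness step: proving that the \emph{finite} implementation, not just the ideal transduction action, keeps the $\rA=0$ block in $\R\ket e+\operatorname{ran}H$. The $10^{-3}$ compilation error might a priori leak a non-parallel kernel component. My first attempt would be to show that every intermediate state of $U_{\rm fin}$, restricted to the public register, lies in the real span of $\ket e$ and $H$-images, using linearity and realness rather than closeness to $\psi$.
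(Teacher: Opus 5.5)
Your quantitative step ($\beta=\ip{u}{\Psi_0}>1/30-10^{-3}>1/32$, $\beta\le\norm{\Psi_0}\le1$) matches the paper. So do your cost accounting and your reduction of exactness to placing $\Psi_0$ in a real set whose $P$-image is $\R\,Pe$. The gap is the exactness step itself, which the paper proves as \cref{lem:alignment}. Your sketch does not establish it, and the invariant you propose fails in two directions.

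First, the real span $\mathcal V$ of the vectors $\ket g q(A)\ket b$ is preserved by $H$ and $R_e$, but it is too large. For example, $\ket2\ket b\in\mathcal V$ has $P$-image with $\rG=1$ component $\kappa(I+\kappa^2A^2)^{-1}Ab$. This is not parallel to the $\rG=1$ component $(I+\kappa^2A^2)^{-1}b$ of $Pe$ unless $Ab\propto b$.

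Second, your target set $\R e+\operatorname{ran}H$ is not $R_e$-invariant, because $\operatorname{ran}H$ is a complex subspace. For $y=iHe$ one has $\ip{e}{Hy}=i\norm{Ab}^2$, so $R_e(Hy)$ has $P$-image $2i\norm{Ab}^2Pe$. (A complex invariant such as $\C e+\operatorname{ran}H$ would only give $P\Psi_0\in\C u$, not $\beta\in[1/32,1]$.)

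What you need is a real subspace of $\R e+\operatorname{ran}H$ on which $\ip{e}{\cdot}$ is real. The paper uses $\cK=\spanR\{H^je:j\ge0\}$: there $\ip{e}{H^je}\in\R$ gives $R_e\cK=\cK$, and $PH=0$ gives $P\cK=\R\,Pe$. (Your ingredients would also support $\R e+H\mathcal V$, but that is not the space you track.)

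Two further ingredients are missing.

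\emph{The signal register.} The private query spaces carry the signal register $\rB$ unrestricted. The $\rB\neq0$ components created by one query are fed back through $Q$, $I-2Q$ and later queries. An observation about scalars $\bra b q(A)\ket b$ does not control these components. The paper queries the Hermitian unitary $U_H$, so $U_H^2=I$, which a raw block-encoding $U_A$ need not satisfy. It then works with $\cM=(\ket0_{\rB}\otimes\cK)+U_H(\ket0_{\rB}\otimes\cK)$, for which $U_H\cM=\cM$ and $\proj0_{\rB}\cM\subseteq\ket0_{\rB}\otimes\cK$.

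\emph{The actual transducer.} The invariant cannot be tracked on the public register alone, nor for a redesigned ``edge'' transducer. \cref{alg:coarse} uses the specific $S$ of \cref{prop:preparation-transducer}. There $\widetilde B_{r,\rS}$ mixes the public space with the non-query space holding $q$, and $X_{\rS}$ swaps labels $1$ and $4$. You must do three things:
\begin{enumerate}
\item fix a joint real space over all labels, with labels $0,1,4$ in $\ket0_{\rB}\otimes\cK$ and labels $2,3$ in $\cM$;
\item check that $\mathcal O$, $\widetilde S_P^\circ(\widehat s)$, $X_{\rS}$, $Z_{\rS}$ and $\widetilde B_{r,\rS}$ each preserve it;
\item tensor it with the real span of the clock and ancilla basis, so that the real compiler of \cref{lem:real-compiler} preserves it as well.
\end{enumerate}
Only then does projecting onto $\ket0_{\rA}$ give $\Psi_0\in\cK$, i.e.\ exact alignment despite the $10^{-3}$ compilation error.
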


\subsection{Auxiliary matrix and reflections}

\begin{lemma}[Kernel projection and spectral gap]\label[lemma]{lem:geometry}
For $H,e,P,u$ in \eqref{eq:graph}, the following properties hold.
\begin{itemize}
\item \emph{Projection and pseudoinverse.}
\begin{align}
 Pe&=\left(0,\ \kappa^{-2}\left(A^2+\kappa^{-2}I\right)^{-1}b,
             \ \kappa^{-1}A\left(A^2+\kappa^{-2}I\right)^{-1}b\right),\label{eq:graph-identities}\\
 H^+e&=\left(A\left(A^2+\kappa^{-2}I\right)^{-1}b,0,0\right).\notag
\end{align}
\item \emph{Norm bounds.} The projection satisfies
\begin{equation}\label{eq:gamma}
 \frac{s^2}{2\kappa^2}\le\norm{Pe}^2\le\frac{s^2}{\kappa^2},
 \qquad 0<\norm{Pe}^2\le\frac12,
\end{equation}
so $u$ is well defined. Moreover,
\begin{equation}\label{eq:pseudoinverse-w}
 H^+u=0,\qquad \norm{H^+e}\le\norm{A^{-1}b}.
\end{equation}
\item \emph{Spectrum.} Every nonzero eigenvalue $\lambda$ of $H$ satisfies
\[
 \frac{\sqrt2}{\kappa}\le|\lambda|\le\sqrt{1+\kappa^{-2}}.
\]
\end{itemize}
\end{lemma}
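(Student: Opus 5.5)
The plan is to reduce everything to the eigenbasis of the Hermitian matrix $A$. Write a vector on $\rG\rD$ as $(x_0,x_1,x_2)$ by component label (the unused label $3$ lies in the kernel of $H$ but is orthogonal to everything relevant, so I would suppress it or restrict $H$ to labels $0,1,2$). Then
\[
 H(x_0,x_1,x_2)=\bigl(Ax_1-\kappa^{-1}x_2,\ Ax_0,\ -\kappa^{-1}x_0\bigr).
\]
Hence $\ker H=\{(0,y,\kappa Ay)\}$: the third component forces $x_0=0$, and the first forces $x_2=\kappa Ax_1$.

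\textbf{Projection and pseudoinverse.} I will guess an orthogonal decomposition $e=Pe+Hv$ and verify it.

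1. For the kernel part, take $Pe=(0,y,\kappa Ay)$ with
\[
 y=(I+\kappa^2A^2)^{-1}b=\kappa^{-2}(A^2+\kappa^{-2}I)^{-1}b .
\]
This is the minimizer of $\|e-(0,y,\kappa Ay)\|$ over $y$.

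2. For the range part, take $v=(v_0,0,0)$ with $v_0=\kappa^2Ay=A(A^2+\kappa^{-2}I)^{-1}b$. Then
\[
 Hv=(0,Av_0,-\kappa^{-1}v_0)=(0,\ \kappa^2A^2y,\ -\kappa Ay).
\]

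3. Since $y+\kappa^2A^2y=b$, we get $e=Pe+Hv$. Moreover $Hv\in\operatorname{range}H=(\ker H)^\perp$ automatically, so $Pe$ is indeed the orthogonal projection of $e$. This gives \eqref{eq:graph-identities} for $Pe$.

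4. For $H^+e$, note $v\perp\ker H$, since its only nonzero component is label $0$ and kernel vectors vanish there. Thus $v\in\operatorname{range}H$ and $H^+e=H^+Hv=v$, as claimed.

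\textbf{Norm bounds.} I will use $\|Pe\|^2=\langle e,Pe\rangle=\langle b,y\rangle=\kappa^{-2}\langle b,(A^2+\kappa^{-2}I)^{-1}b\rangle$. Since $\kappa^{-2}I\preceq A^2$, we have
\[
 A^2\preceq A^2+\kappa^{-2}I\preceq 2A^2,
\]
so $\tfrac12A^{-2}\preceq(A^2+\kappa^{-2}I)^{-1}\preceq A^{-2}$. Because $\langle b,A^{-2}b\rangle=s^2$, this gives \eqref{eq:gamma}. The bound $\|Pe\|^2\le1/2$ follows eigenvalue-wise from $\kappa^{-2}/(\lambda^2+\kappa^{-2})\le1/2$ when $\lambda^2\ge\kappa^{-2}$. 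Positivity holds because $b\neq0$ and the operator is positive definite.

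For \eqref{eq:pseudoinverse-w}, $u\in\ker H=\ker H^+$ gives $H^+u=0$. The bound $\|H^+e\|\le\|A^{-1}b\|$ follows from the scalar inequality $|\lambda|/(\lambda^2+\kappa^{-2})\le1/|\lambda|$ applied in the eigenbasis of $A$.

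\textbf{Spectrum.} For each eigenpair $A\phi=\lambda\phi$, the span of $\ket g\ket\phi$ for $g=0,1,2$ is invariant under $H$. On it, $H$ acts as
\[
 \begin{pmatrix}0&\lambda&-\kappa^{-1}\\ \lambda&0&0\\ -\kappa^{-1}&0&0\end{pmatrix}.
\]
Its characteristic polynomial is $-t\bigl(t^2-\lambda^2-\kappa^{-2}\bigr)$, so its eigenvalues are $0$ and $\pm\sqrt{\lambda^2+\kappa^{-2}}$. With $\kappa^{-1}\le|\lambda|\le1$, the nonzero eigenvalues lie in $[\sqrt2/\kappa,\sqrt{1+\kappa^{-2}}]$.

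The only mildly delicate step is justifying $H^+e=v$, which needs $v\perp\ker H$; the rest is routine spectral calculus.
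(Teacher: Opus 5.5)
Your proposal is correct and follows essentially the same route as the paper. You use the same explicit decomposition $e=Pe+Hv$ with $v=(A(A^2+\kappa^{-2}I)^{-1}b,0,0)$ and $v\perp\ker H$ for the pseudoinverse, the same sandwich $A^2\preceq A^2+\kappa^{-2}I\preceq 2A^2$ for the norm bounds, and the same $3\times 3$ invariant-block computation for the spectrum. Your explicit description of $\ker H$, including the remark that the unused label $3$ only adds kernel directions orthogonal to $e$, and the shortcut $\norm{Pe}^2=\langle e,Pe\rangle$ are minor conveniences rather than a different argument.
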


\begin{proof}
\emph{Projection and pseudoinverse.}
Let $D=A^2+\kappa^{-2}I$.
Set
\[
 p=\left(0,\kappa^{-2}D^{-1}b,\kappa^{-1}AD^{-1}b\right),
 \qquad
 y=\left(AD^{-1}b,0,0\right).
\]
Since $D$ is a polynomial in $A$, the operators $A$ and $D^{-1}$ commute.
Consequently,
\[
 Hp=\left(\kappa^{-2}AD^{-1}b
              -\kappa^{-2}AD^{-1}b,0,0\right)=0,
\]
and hence $p\in\ker H$. The identity $D=A^2+\kappa^{-2}I$ also gives
$b-\kappa^{-2}D^{-1}b=A^2D^{-1}b$, so
\[
 e-p
 =\left(0,A^2D^{-1}b,-\kappa^{-1}AD^{-1}b\right)
 =Hy\in\operatorname{Ran}H.
\]
Because $H$ is Hermitian, $\operatorname{Ran}H=(\ker H)^\perp$. Thus
$e=p+(e-p)$ is the orthogonal decomposition of $e$ into its kernel and range
components, which proves $Pe=p$ and the first identity in
\eqref{eq:graph-identities}.

To identify the pseudoinverse, write a vector in $\ker H$ as
$\xi=(\xi_0,\xi_1,\xi_2)$. The second component of $H\xi=0$ is
$A\xi_0=0$. Since $A$ is invertible, $\xi_0=0$, and therefore $y$ is
orthogonal to every vector in $\ker H$. Hence $y\in\operatorname{Ran}H$ and,
since $Pe=p$ and $e-p=Hy$, we have $Hy=(I-P)e$.  The spectral
decomposition of the Hermitian matrix $H$ gives
$H^+P=0$ and $H^+H=\Pi_{\operatorname{Ran}H}$. Therefore
\[
\begin{aligned}
 H^+e
 &=H^+Pe+H^+(I-P)e\\
 &=H^+Hy\\
 &=(H^+H)y\\
 &=\Pi_{\operatorname{Ran}H}y
 =y,
\end{aligned}
\]
which proves the second identity in \eqref{eq:graph-identities}.

\emph{Norm bounds.}
Using the commutation of $A$ and $D^{-1}$, the squared projection norm is
\[
\begin{aligned}
 \norm{Pe}^2
 &=\kappa^{-4}\langle b,D^{-2}b\rangle
   +\kappa^{-2}\langle b,A^2D^{-2}b\rangle\\
 &=\kappa^{-2}\langle b,D^{-2}
          \left(\kappa^{-2}I+A^2\right)b\rangle
 =\kappa^{-2}\langle b,D^{-1}b\rangle.
\end{aligned}
\]
The bound $\kappa\ge\norm{A^{-1}}$ implies
$A^2\succeq\kappa^{-2}I$. Therefore
$A^2\preceq D\preceq2A^2$ and $D\succeq2\kappa^{-2}I$. Inverting these
positive-definite operator inequalities reverses their order and gives
\[
 \tfrac12 A^{-2}\preceq D^{-1}\preceq A^{-2},\qquad
 D^{-1}\preceq\tfrac12\kappa^2I.
\]
Taking the expectation in the unit vector $b$, multiplying by
$\kappa^{-2}$, and using $\langle b,A^{-2}b\rangle=s^2$ proves
\[
 \frac{s^2}{2\kappa^2}\le\norm{Pe}^2\le\frac{s^2}{\kappa^2},
 \qquad \norm{Pe}^2\le\frac12.
\]
The lower bound is positive, so $u$ is well defined. Since $u\in\ker H$,
we have $H^+u=0$. For every eigenvalue $\lambda$ of $A$, the operator
$AD^{-1}$ has eigenvalue $\lambda/(\lambda^2+\kappa^{-2})$, whose absolute
value is at most $1/|\lambda|$. Hence
\[
 \norm{H^+e}^2=\norm{AD^{-1}b}^2
 \le\langle b,A^{-2}b\rangle=\norm{A^{-1}b}^2,
\]
which proves \eqref{eq:pseudoinverse-w}.

\emph{Spectrum.}
Because $A$ is Hermitian, its
eigenvectors form an orthonormal basis. For an eigenvector $\phi$ with
$A\phi=\lambda\phi$, the three-dimensional space
\[
 \operatorname{span}\{\ket0\otimes\phi,\ket1\otimes\phi,
 \ket2\otimes\phi\}
\]
is invariant under $H$. In the displayed order, the restriction of $H$ to
this space is the matrix
\[
 H_\lambda=\begin{pmatrix}
  0&\lambda&-\kappa^{-1}\\
  \lambda&0&0\\
  -\kappa^{-1}&0&0
 \end{pmatrix}.
\]
Its characteristic polynomial is
\[
 \det\!\left(zI-H_\lambda\right)
 =z\left(z^2-\lambda^2-\kappa^{-2}\right),
\]
so its eigenvalues are $0$ and $\pm\sqrt{\lambda^2+\kappa^{-2}}$.
The normalization of $A$ and the bound $\kappa\ge\norm{A^{-1}}$ give
$1/\kappa\le|\lambda|\le1$. Thus every nonzero eigenvalue of $H$ has
absolute value at least $\sqrt2/\kappa$, while every eigenvalue has absolute
value at most $\sqrt{1+\kappa^{-2}}$. Taking the union over an eigenbasis of
$A$ proves the spectral bounds.
\end{proof}

\begin{proposition}[Block-encoding of the auxiliary matrix]\label[proposition]{prop:graph-encoding}
There is a Hermitian unitary $U_H$ on $\rB\rG\rD$, with $a+O(1)$ signal qubits in $\rB$, such that
\begin{equation}\label{eq:graph-encoding}
 \left(\bra0_{\rB}\otimes I_{\rG\rD}\right)U_H
 \left(\ket0_{\rB}\otimes I_{\rG\rD}\right)=\frac{H}{\alpha_H},
 \qquad \alpha_H=1+\kappa^{-1}.
\end{equation}
Each allowed call to $U_H$ uses $O(1)$ matrix queries and $O(1)$ extra gates.
\end{proposition}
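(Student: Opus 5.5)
We build $U_H$ as a linear combination of unitaries (LCU) of two Hermitian unitaries, one for each term of $H$. Write
\[
 H=X_{01}\otimes A-\kappa^{-1}X_{02}\otimes I,\qquad
 X_{01}=(\ket0\bra1+\ket1\bra0)_{\rG},\quad
 X_{02}=(\ket0\bra2+\ket2\bra0)_{\rG}.
\]
The operators $X_{01}$ and $X_{02}$ are not unitary on the four-dimensional space of $\rG$. They become Hermitian unitaries once completed on the unused labels: $V_1=X_{01}+X_{23}$ and $V_2=X_{02}+X_{13}$ are permutation matrices, where $X_{23}=\ket2\bra3+\ket3\bra2$ and $X_{13}=\ket1\bra3+\ket3\bra1$. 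The extra terms may act nontrivially on $\ket3_{\rG}$. This would spoil the block structure, so we handle it in one of two ways. Either we multiply the extra terms by an operator that vanishes in the block, or, more simply, we accept the extra terms and note that $H$ only needs to be encoded on the span of $\ket0,\ket1,\ket2$. I would prefer an exact encoding on all of $\rG\rD$. To get one, I would instead use an extra flag qubit in $\rB$ that marks the label-$3$ completion terms as "garbage" in a non-zero signal state.

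\textbf{Step 1 (Hermitian block-encoding of $A$).} The oracle $U_A$ need not be Hermitian. The standard trick takes $\widetilde U_A=\begin{pmatrix}0&U_A^\dagger\\U_A&0\end{pmatrix}$, controlled on one extra qubit, or equivalently $(X\otimes I)\,(\text{controlled-}U_A\text{ and }U_A^\dagger)$. Its $\ket{+}$-compressed block is $(A+A^\dagger)/2=A$, since $A$ is Hermitian. The result is a Hermitian unitary using $O(1)$ queries whose $0$-block is $A$.

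\textbf{Step 2 (encode $X_{01}\otimes A$ exactly).} Take $W_1=\ket{0}\bra{0}_{f}\otimes(\text{a Hermitian unitary on }\rG)\ldots$. Concretely, define the Hermitian unitary
\[
 W_1=(X_{01})_{\rG}\otimes\widetilde U_A+(\Pi_{23})_{\rG}\otimes Z_f\otimes I,
\]
where $Z_f$ acts on a flag whose $0$ state is part of the signal. This is Hermitian and unitary because the two pieces act on orthogonal $\rG$-subspaces and each squares to the projector onto its own subspace. Its signal block is $X_{01}\otimes A+\Pi_{23}\otimes\bra0 Z_f\ket0$, which still contains a garbage term. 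To remove it, I would use a two-valued flag and average the two choices $\pm$: sum $W_1$ with its sign-flipped version through a Hadamard-controlled LCU. Because $Z_f$ is used in a superposition with $-Z_f$, the $\Pi_{23}$ piece cancels while the $X_{01}$ piece survives. Doing the same for $W_2=X_{02}\otimes I+\Pi_{13}\otimes(\pm)$ gives $X_{02}\otimes I$ exactly.

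\textbf{Step 3 (LCU).} We combine the two pieces with weights $1$ and $\kappa^{-1}$. The sign of the second term is absorbed as $-I$. We use a one-qubit prepare rotation $R$ with $R\ket0=(\ket0+\kappa^{-1/2}\ket1)/\sqrt{1+\kappa^{-1}}$ up to the required sign, and the select is controlled-$W_1$ / controlled-$W_2$. The LCU $(R^\dagger\otimes I)\,\mathrm{SEL}\,(R\otimes I)$ has block $H/(1+\kappa^{-1})$, which gives $\alpha_H=1+\kappa^{-1}$. Because the weights are real and $\mathrm{SEL}$ is a Hermitian unitary, the standard symmetrization makes the whole circuit a Hermitian unitary: use $R$ real and sandwich with the reflection structure, or take $U_H=(R^\dagger\otimes I)\mathrm{SEL}(R\otimes I)$, which is Hermitian since $R$ is real orthogonal and $\mathrm{SEL}$ is Hermitian. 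The signal register $\rB$ consists of the $a$ block-encoding qubits plus $O(1)$ flag and LCU qubits. The construction uses $O(1)$ queries and $O(1)$ gates, since the permutations on the two-qubit register $\rG$ are constant size.

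\textbf{Main obstacle.} The hard part is keeping the unitary both exactly Hermitian and exactly encoding $H$ despite the unused label $\ket3$. I expect the cancellation trick in Step 2 to be the delicate bookkeeping: the intermediate operators must all be Hermitian, and the garbage terms must vanish exactly in the signal block.
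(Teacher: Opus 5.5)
Your construction is correct and is essentially the paper's: Hermitianize $U_A$, realize $X_{01}\otimes A$ and $-X_{02}\otimes I_{\rD}$ as signal blocks of Hermitian unitaries with $O(1)$ extra signal qubits, and combine them by a two-term LCU with weights $1$ and $\kappa^{-1}$. Your $\pm$-averaging in Step 2 is one realization of the paper's constant-size Hermitian dilation of $X_{0j}$ (after the Hadamard conjugation it equals $X_{01}\otimes\widetilde U_A+X_c\otimes\Pi_{23}\otimes Z_f$), so you could get it in one step by using $X_f$ instead of $Z_f$, which makes the completion term vanish in the $\ket0_f$ block.
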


\begin{proof}
Hermitianizing $U_A$ as in the proof of
\cref{prop:hermitian-reduction} gives a Hermitian unitary block-encoding of
$(A+A^\dagger)/2=A$ using one additional signal qubit. For
$X_{0j}=(\ket0\bra j+\ket j\bra0)_{\rG}$, a constant-size Hermitian unitary
dilation encodes $X_{0j}$. Hence tensor products give Hermitian unitary
block-encodings of $X_{01}\otimes A$ and
$-X_{02}\otimes I_{\rD}$.

Applying the two-term linear-combination construction
\cite[Lemma 52]{GSLW19} with weights $1$ and $\kappa^{-1}$ gives a Hermitian
unitary $U_H$ whose signal block is
\[
 \frac{X_{01}\otimes A-\kappa^{-1}X_{02}\otimes I_{\rD}}
      {1+\kappa^{-1}}=\frac{H}{\alpha_H}.
\]
The construction adds $O(1)$ signal qubits and uses $O(1)$ matrix queries and
$O(1)$ extra gates.
\end{proof}

\subsection{Ideal overlap}

We first realize the reflection about $\ker H$, then combine it with the
reflection about $e$ and a fractional-linear transducer on two copies of the
public space.

\subsubsection{A canonical transducer for reflection about the kernel}\label{sec:local-transducer}

Let $\mathcal H_0=\mathcal H_{\rG\rD}$ and
$\mathcal K=\mathcal H_{\rB\rG\rD}$. The transducer acts on the direct sum
$\mathcal H_0\oplus\mathcal K_1\oplus\mathcal K_2$, where
$\mathcal K_1,\mathcal K_2$ are two copies of $\mathcal K$. A separate label
register distinguishes these three spaces, which use one shared
$\rB\rG\rD$ workspace as follows:
\[
\begin{array}{c|c|c}
 \text{space} & \text{role} & \text{state in the shared workspace}\\ \hline
 \mathcal H_0 & \text{public}
   & \ket0_{\rB}\otimes\xi,\quad \xi\in\mathcal H_{\rG\rD}\\
 \mathcal K_1 & \text{first private query space}
   & \phi_1\in\mathcal H_{\rB\rG\rD}\\
 \mathcal K_2 & \text{second private query space}
   & \phi_2\in\mathcal H_{\rB\rG\rD}
\end{array}
\]
Thus $\rB$ is the signal register of $U_H$, not the label distinguishing the
spaces. It is fixed to $\ket0_{\rB}$ in the public space and is unrestricted
in either private space. Write
\[
 Q=\proj0_{\rB}\otimes I_{\rG\rD}
\]
for the projector onto the zero-signal subspace of either private space.

\begin{proposition}[Kernel-reflection transducer]\label[proposition]{prop:kernel-transducer}
Let $R_P=2P-I$. For every known $\tau>0$, there is a canonical transducer
\[
 S_P(\tau):=S_P^\circ(\tau)\left(I\oplus U_H\oplus U_H\right)
\]
on $\mathcal H_0\oplus\mathcal K_1\oplus\mathcal K_2$. Its transduction
action on the public register $\rG\rD$ is
$S_P(\tau)\DownTransduce_{\mathcal H_0}=R_P$.
For each $\xi\in\mathcal H_0$, one can choose catalyst components
$\omega_{i,\xi}\in\mathcal K_i$ with total $U_H$-query complexity
\begin{equation}\label{eq:local-weight}
 L_H\left(\xi\right):=\norm{\omega_{1,\xi}}^2+\norm{\omega_{2,\xi}}^2
 =2\alpha_H\left(\tau\norm{P\xi}^2+
                                \frac{\norm{H^+\xi}^2}{\tau}\right).
\end{equation}
A controlled implementation of $S_P^\circ(\tau)$ uses $O(a+1)$ extra gates.
\end{proposition}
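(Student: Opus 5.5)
The plan is to construct $S_P^\circ(\tau)$ explicitly and verify the transduction equation separately on the two orthogonal pieces of the public input. By linearity and the orthogonal decomposition $\xi=P\xi+(I-P)\xi$, it suffices to treat $\xi=p\in\ker H$, where the required action is $p\mapsto p$, and $\xi=r\in\operatorname{Ran}H$, where it is $r\mapsto -r$.

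For the range piece, write $r=Hy$ with $y=H^+r\in\operatorname{Ran}H$. The natural catalyst is built from $\ket0_{\rB}\otimes y$ scaled by $\sqrt{\alpha_H/\tau}$. For the kernel piece the natural catalyst is built from $\ket0_{\rB}\otimes p$ scaled by $\sqrt{\alpha_H\tau}$. In each case I would place one copy in $\mathcal K_1$ and the image of that copy under $U_H$ in $\mathcal K_2$. This accounts for the factor $2\alpha_H$ in \eqref{eq:local-weight}: each copy has squared norm $\alpha_H(\tau\norm{P\xi}^2+\norm{H^+\xi}^2/\tau)$, and $U_H$ preserves norms. The cross terms vanish because $p\perp y$: indeed $y\in\operatorname{Ran}H=(\ker H)^\perp$, and $U_H$ is unitary.

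Next I will choose the work unitary $S_P^\circ(\tau)$ so that it is independent of $\xi$ and uses only $\tau$, $\alpha_H$, the label register and the projector $Q$. The rough form is a fixed $2\times 2$ rotation, with angle set by $\tau$ and $\alpha_H$, acting between the public label and the $Q$-part of $\mathcal K_1$. This is followed by a label permutation that feeds the output of the first $U_H$ call into $\mathcal K_2$, together with a reflection $2Q-I$ on the signal register. The key identity is
\[
 Q\,U_H\left(\ket0_{\rB}\otimes y\right)=\ket0_{\rB}\otimes Hy/\alpha_H=\ket0_{\rB}\otimes r/\alpha_H .
\]
This is how the oracle converts the catalyst component $y$ into the public component $r$, and the chosen scaling must produce the sign $-1$ on that piece.

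For $p$, we have $Q U_H(\ket0\otimes p)=\ket0\otimes Hp/\alpha_H=0$. So $U_H(\ket0\otimes p)$ lies entirely in the $(I-Q)$-part, and the reflection $2Q-I$ followed by the second $U_H$ call (using $U_H^2=I$) should return it unchanged, giving the eigenvalue $+1$. I then plan to check the two equations $S(p\oplus\omega_p)=p\oplus\omega_p$ and $S(r\oplus\omega_r)=-r\oplus\omega_r$ by direct block computation. Since by \cite[Theorem 5.1]{BJY24} the transduction action is well defined, exhibiting these catalysts already identifies the action as $R_P$.

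I expect the main obstacle to be the leakage term for the range piece, namely $(I-Q)U_H(\ket0\otimes y)$. This term is not controlled by $H$ alone, and the second query in $\mathcal K_2$ must cancel it exactly, so that the catalyst reappears unchanged. My first attempt is to let $S^\circ$ apply $2Q-I$ between the two queries. Then the second $U_H$ maps $(2Q-I)U_H\ket0 y$ back to a vector whose $Q$-part and non-$Q$-part are both expressible through $U_H^2=I$ and $H/\alpha_H$. I would tune the rotation angle, as a function of $\tau$ and $\alpha_H$, so that the public and catalyst coefficients close up.

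The gate count is straightforward. $S_P^\circ$ consists of a controlled rotation on a constant-size label register, a label swap, and a reflection about $\ket{0^{a+O(1)}}_{\rB}$. Its controlled version therefore costs $O(a+1)$ extra gates.
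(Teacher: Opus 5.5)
\textbf{There is a genuine gap.} The step that carries all the content is missing: you never fix $S_P^\circ(\tau)$ and never check the transduction identity. The catalyst you do commit to also cannot be completed by a work unitary of the shape you describe.

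Write $y=H^+\xi$ and $r=(I-P)\xi=Hy$. You place $\ket0_{\rB}\otimes p$ and $\ket0_{\rB}\otimes y$ both in $\mathcal K_1$, and their $U_H$-images in $\mathcal K_2$, so $\omega_2=U_H\omega_1$. After the oracle layer, the $Q$-parts of the two slots are:
\begin{itemize}
\item slot $1$: $0$ for the kernel piece (since $HP=0$), and $\ket0_{\rB}\otimes r/\alpha_H$ for the range piece;
\item slot $2$: $\ket0_{\rB}\otimes p$ and $\ket0_{\rB}\otimes y$, respectively.
\end{itemize}
A fixed mixing of the public label with the $Q$-part of slot $1$ cannot couple to $p$ at all. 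So to get $p\mapsto p$ it must fix the public label, and then $r\mapsto r$. A mixing with slot $2$ gives public output $c_1r+c_2y$ in the range case. Since $y=H^+r$ is not a fixed multiple of $r$, producing $-r$ forces $c_2=0$ and $c_1=-1$, which then sends $p\mapsto -p$.

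Your sequential picture (query, apply $2Q-I$, query again) also does not match how a transducer works. Each slot is queried once per application of $S$, and whatever $S^\circ$ writes into $\mathcal K_2$ must equal $\omega_2$. Feeding the queried $\mathcal K_1$ content into $\mathcal K_2$ through $\pm(I-2Q)$ therefore forces $\omega_2=\pm(I-2Q)U_H\omega_1$, which differs from your $U_H\omega_1$ exactly on the range piece. No choice of angle repairs this.

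\textbf{The missing idea.} The two pieces must sit in \emph{opposite} slots, with a sign:
\[
\begin{aligned}
 \omega_{1,\xi}&=\sqrt{\alpha_H\tau}\,\ket0_{\rB}\otimes P\xi
 +\sqrt{\alpha_H/\tau}\,U_H\bigl(\ket0_{\rB}\otimes H^+\xi\bigr),\\
 \omega_{2,\xi}&=\sqrt{\alpha_H\tau}\,U_H\bigl(\ket0_{\rB}\otimes P\xi\bigr)
 -\sqrt{\alpha_H/\tau}\,\ket0_{\rB}\otimes H^+\xi .
\end{aligned}
\]
Equivalently, $\omega_{2,\xi}=(I-2Q)U_H\omega_{1,\xi}$. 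Take $S_P^\circ$ to act as follows:
\begin{itemize}
\item $\mathcal K_1\to\mathcal K_2$ by $I-2Q$;
\item $(I-Q)\mathcal K_2\to(I-Q)\mathcal K_1$ by $-I$;
\item $\mathcal H_0\oplus Q\mathcal K_2\to\mathcal H_0\oplus Q\mathcal K_1$ by the reflection
\[
 M_\mu=\frac1{1+\mu}\begin{pmatrix}1-\mu&2\sqrt\mu\\ 2\sqrt\mu&\mu-1\end{pmatrix},
 \qquad \mu=\alpha_H\tau .
\]
\end{itemize}

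Now check the pieces:
\begin{itemize}
\item After the queries, the $Q$-part of slot $2$ is $\ket0_{\rB}\otimes\bigl(\sqrt\mu\,P\xi-(I-P)\xi/\sqrt\mu\bigr)$.
\item The vectors $(1,\sqrt\mu)$ and $(1,-1/\sqrt\mu)$ are the $\pm1$ eigenvectors of $M_\mu$. Their orthogonality is exactly what your scalings $\sqrt{\alpha_H\tau}$ and $\sqrt{\alpha_H/\tau}$ provide. Hence $M_\mu$ outputs $R_P\xi$ together with the $Q$-part of $\omega_{1,\xi}$.
\item The $-I$ block returns the leakage $-\sqrt{\alpha_H/\tau}\,(I-Q)U_H(\ket0_{\rB}\otimes H^+\xi)$ to $\mathcal K_1$ with the correct sign.
\item $I-2Q$ turns the queried slot-$1$ content $\sqrt{\alpha_H\tau}\,U_H(\ket0_{\rB}\otimes P\xi)+\sqrt{\alpha_H/\tau}\,\ket0_{\rB}\otimes H^+\xi$ into $\omega_{2,\xi}$, because $QU_H(\ket0_{\rB}\otimes P\xi)=0$.
\end{itemize}
Your norm count carries over unchanged; the cross terms now vanish via $\langle P\xi,HH^+\xi\rangle=0$. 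Your $O(a+1)$ gate count also carries over unchanged.
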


\begin{proof}
Put $\mu=\alpha_H\tau$. With respect to the decomposition into the public
space $\mathcal H_0$ and the two private spaces $\mathcal K_1,\mathcal K_2$,
define $S_P^\circ(\tau)$ by
\begin{equation}\label{eq:local-work}
 S_P^\circ(\tau)=
 \begin{pmatrix}
  \dfrac{1-\mu}{1+\mu}I_{\mathcal H_0} & 0 &
      \dfrac{2\sqrt\mu}{1+\mu}
       \left(\bra0_{\rB}\otimes I_{\rG\rD}\right)\\[2mm]
  \dfrac{2\sqrt\mu}{1+\mu}
       \left(\ket0_{\rB}\otimes I_{\rG\rD}\right) & 0 &
      \dfrac{2\mu}{1+\mu}Q-I_{\mathcal K}\\[2mm]
  0 & I_{\mathcal K}-2Q & 0
 \end{pmatrix}.
\end{equation}
For $i=1,2$, the projector $Q$ gives the orthogonal decomposition
\[
 \mathcal K_i
 =Q\mathcal K_i\oplus(I_{\mathcal K}-Q)\mathcal K_i,
 \qquad
 Q\mathcal K_i
 =\left\{\ket0_{\rB}\otimes\xi:\xi\in\mathcal H_0\right\}.
\]
Identifying $\xi\in\mathcal H_0$ with
$\ket0_{\rB}\otimes\xi\in Q\mathcal K_i$, the restriction from
$\mathcal H_0\oplus Q\mathcal K_2$ to
$\mathcal H_0\oplus Q\mathcal K_1$ is
\[
 M_\mu:=\frac1{1+\mu}
 \begin{pmatrix}1-\mu&2\sqrt\mu\\
                 2\sqrt\mu&\mu-1\end{pmatrix}.
\]
The remaining parts act by $I_{\mathcal K}-2Q$ from $\mathcal K_1$ to
$\mathcal K_2$ and by $-I$ from
$(I_{\mathcal K}-Q)\mathcal K_2$ to
$(I_{\mathcal K}-Q)\mathcal K_1$. These input and output subspaces are
mutually orthogonal. Since $M_\mu^2=I$, all three actions are unitary, and so
is $S_P^\circ(\tau)$.

Choose the catalyst components
\begin{equation}\label{eq:local-catalyst}
 \begin{split}
 \omega_{1,\xi}
   &=\sqrt{\alpha_H\tau}\left(\ket0_{\rB}\otimes P\xi\right)
     +\sqrt{\alpha_H/\tau}\,
       U_H\left(\ket0_{\rB}\otimes H^+\xi\right),\\
 \omega_{2,\xi}
   &=\sqrt{\alpha_H\tau}\,
       U_H\left(\ket0_{\rB}\otimes P\xi\right)
     -\sqrt{\alpha_H/\tau}\left(\ket0_{\rB}\otimes H^+\xi\right).
 \end{split}
\end{equation}
Using $U_H^2=I$, $HH^+=I-P$, $HP=0$, and
\eqref{eq:graph-encoding}, we obtain
\begin{align*}
 \left(\bra0_{\rB}\otimes I_{\rG\rD}\right)
 U_H\left(\ket0_{\rB}\otimes P\right)
 &=\alpha_H^{-1}HP=0,\\
 \left(\bra0_{\rB}\otimes I_{\rG\rD}\right)
 \left(\ket0_{\rB}\otimes H^+\right)
 &=H^+,\\
 \left(\bra0_{\rB}\otimes I_{\rG\rD}\right)U_H\omega_{2,\xi}
 &=\sqrt\mu\,P\xi-\frac1{\sqrt\mu}(I-P)\xi,\\
 \frac{1-\mu}{1+\mu}\xi
 +\frac{2\sqrt\mu}{1+\mu}
   \left(\bra0_{\rB}\otimes I_{\rG\rD}\right)U_H\omega_{2,\xi}
 &=P\xi-(I-P)\xi=R_P\xi,\\
 \frac{2\sqrt\mu}{1+\mu}\left(\ket0_{\rB}\otimes\xi\right)
 +\left(\frac{2\mu}{1+\mu}Q-I_{\mathcal K}\right)U_H\omega_{2,\xi}
 &=2\sqrt\mu\left(\ket0_{\rB}\otimes P\xi\right)
   -U_H\omega_{2,\xi}=\omega_{1,\xi},\\
 \left(I_{\mathcal K}-2Q\right)U_H\omega_{1,\xi}
 &=\sqrt\mu\,U_H\left(\ket0_{\rB}\otimes P\xi\right)\\
 &\quad-\sqrt{\alpha_H/\tau}
   \left(\ket0_{\rB}\otimes H^+\xi\right)
   =\omega_{2,\xi}.
\end{align*}
Therefore
\begin{equation}\label{eq:local-identity}
 S_P^\circ\left(\tau\right)\left(I\oplus U_H\oplus U_H\right)
 \left(\xi\oplus\omega_{1,\xi}\oplus\omega_{2,\xi}\right)
 =R_P\xi\oplus\omega_{1,\xi}\oplus\omega_{2,\xi}.
\end{equation}
Thus $\omega_{1,\xi},\omega_{2,\xi}$ are catalyst witnesses for the transduction action
$R_P$. Moreover,
\[
 \ip{\ket0_{\rB}\otimes P\xi}
     {U_H\left(\ket0_{\rB}\otimes H^+\xi\right)}
 =\alpha_H^{-1}\ip{P\xi}{HH^+\xi}
 =\alpha_H^{-1}\ip{P\xi}{\left(I-P\right)\xi}=0.
\]
Since $U_H$ is Hermitian, the other cross term vanishes as well. Therefore
\[
 \norm{\omega_{1,\xi}}^2=\norm{\omega_{2,\xi}}^2
 =\alpha_H\tau\norm{P\xi}^2
   +\frac{\alpha_H}{\tau}\norm{H^+\xi}^2,
\]
which proves \eqref{eq:local-weight}. Finally, in the shared-workspace
encoding described above, the factors $\ket0_{\rB}$, $\bra0_{\rB}$, and $Q$
in \eqref{eq:local-work} only indicate transitions between the labeled spaces and
a test of $\rB=0$. Thus the work unitary uses a constant number of label
rotations and signs, plus controls on the $a+O(1)$ signal qubits in $\rB$.
Its controlled implementation therefore uses $O(a+1)$ extra gates.
\end{proof}

\begin{proposition}[Input-reflection transducer]\label[proposition]{prop:input-transducer}
Let $R_e=2\proj e-I$. There is a canonical transducer
\[
 S_e=S_e^\circ\left(I\oplus R_e\right)
\]
with transduction action $S_e\DownTransduce_{\rG\rD}=R_e$. For each public
vector $\xi$, it has a catalyst with $R_e$-query complexity
$L_e(\xi)=\norm\xi^2$. A controlled call to $R_e$ uses two vector
queries and $O(n+1)$ extra gates, while a controlled $S_e^\circ$ uses $O(1)$
extra gates.
\end{proposition}

\begin{proof}
Since $\ket e=\ket1_{\rG}\ket b_{\rD}$,
\[
 R_e=\left(I_{\rG}\otimes U_b\right)
       \left(2\proj1_{\rG}\otimes\proj0_{\rD}-I_{\rG\rD}\right)
       \left(I_{\rG}\otimes U_b^\dagger\right),
\]
which gives the stated implementation cost. Let $S_e^\circ$ exchange the
public and private labels. Then
\[
 S_e\left(\xi\oplus\xi\right)
 =S_e^\circ\left(\xi\oplus R_e\xi\right)
 =R_e\xi\oplus\xi.
\]
Thus $\xi$ is a catalyst and $L_e(\xi)=\norm\xi^2$.
\end{proof}

\subsubsection{Increasing the kernel overlap}

We first obtain constant overlap by a fractional-linear function of the two-reflection unitary. We then compose the local transducers to bound both query costs with one catalyst. The finite-circuit invariant and the preparation guarantee are proved in \cref{sec:finite}.

\begin{proposition}[Fractional-linear transducer]\label[proposition]{prop:transducer}
Let $U$ be unitary on a finite-dimensional space $\cH$ and let $r\in(-1,1)$. On $\cH\oplus\cH$, the first copy public and the second private, define
\begin{equation}\label{eq:transducer-def}
 B_r=\begin{pmatrix}-rI&\sqrt{1-r^2}I\\
                  \sqrt{1-r^2}I&rI\end{pmatrix}.
\end{equation}
The canonical transducer $B_r(I\oplus U)$ has unitary transduction action $(U-rI)(I-rU)^{-1}$. For every $\xi\in\cH$, the private vector $q=\sqrt{1-r^2}(I-rU)^{-1}\xi$ satisfies
\begin{equation}\label{eq:transducer-identity}
 B_r\left(I\oplus U\right)\left(\xi\oplus q\right)
   =\left(U-rI\right)\left(I-rU\right)^{-1}\xi\oplus q.
\end{equation}
\end{proposition}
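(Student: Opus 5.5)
We verify the catalyst identity \eqref{eq:transducer-identity} directly; well-definedness and unitarity of the transduction action then follow from it together with \cite[Theorem 5.1]{BJY24}.

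\textbf{Invertibility and unitarity of $B_r$.} Since $U$ is unitary, its spectrum lies on the unit circle. Because $|r|<1$, the operator $I-rU$ has eigenvalues $1-r\lambda$ with $|1-r\lambda|\ge1-|r|>0$, so it is invertible and $q$ is well defined. The matrix $B_r$ is a real symmetric block matrix. It satisfies $B_r^2=I$, since
\[
 r^2+(1-r^2)=1
 \qquad\text{and}\qquad
 -r\sqrt{1-r^2}+r\sqrt{1-r^2}=0,
\]
so $B_r$ is unitary. Hence $B_r(I\oplus U)$ is unitary on $\cH\oplus\cH$ and is a canonical transducer with work unitary $B_r$ and oracle $U$.

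\textbf{The catalyst identity.} Write $T=(I-rU)^{-1}$; it commutes with $U$. Then
\[
 B_r(I\oplus U)(\xi\oplus q)=\bigl(-r\xi+\sqrt{1-r^2}\,Uq\bigr)\oplus\bigl(\sqrt{1-r^2}\,\xi+rUq\bigr).
\]
For the private output, substitute $q=\sqrt{1-r^2}\,T\xi$ to get $\sqrt{1-r^2}\,(I+rUT)\xi$. Since $I+rUT=\bigl((I-rU)+rU\bigr)T=T$, this equals $q$, so $q$ is indeed a catalyst.

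For the public output, we obtain $\bigl(-r+(1-r^2)UT\bigr)\xi$. Multiplying through by $(I-rU)$ on the right gives
\[
 -r(I-rU)+(1-r^2)U=-rI+r^2U+U-r^2U=U-rI.
\]
Therefore the public output is $(U-rI)T\xi$, which proves \eqref{eq:transducer-identity}.

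\textbf{Unitarity of the action.} The map $(U-rI)(I-rU)^{-1}$ is a function of the normal operator $U$. On each eigenvalue $\lambda$ with $|\lambda|=1$ it acts by $(\lambda-r)/(1-r\lambda)$. Because $r$ is real, $|1-r\lambda|=|\bar\lambda-r|=|\lambda-r|$, so this Möbius factor has modulus one. The action is therefore unitary, consistent with \cite[Theorem 5.1]{BJY24}. By uniqueness of the transduction action in finite dimensions, it equals $B_r(I\oplus U)\DownTransduce_{\cH}$.

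The only mildly delicate step is the algebraic simplification using $T(I-rU)=I$ and the commutation of $T$ with $U$; everything else is routine.
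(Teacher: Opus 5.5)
Your proof is correct and follows essentially the same route as the paper. Both arguments check the private and public components of $B_r(I\oplus U)(\xi\oplus q)$ by direct algebra from $(I-rU)q=\sqrt{1-r^2}\,\xi$, verify $B_r^2=I$, and obtain unitarity of the action from the spectral theorem via $|\lambda-r|=|1-r\lambda|$ for $|\lambda|=1$. The differences are cosmetic: you show $I-rU$ is invertible spectrally rather than by a Neumann series, and you state explicitly the appeal to uniqueness of the transduction action, which the paper leaves implicit.
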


\begin{proof}
Since $\norm{rU}=|r|<1$, the Neumann series $\sum_{j=0}^{\infty}r^jU^j$ converges in operator norm and is the inverse of $I-rU$. Thus $q$ is well defined. Multiplying its definition by $I-rU$ gives
\[
 (I-rU)q=\sqrt{1-r^2}\,\xi,
 \qquad\text{or equivalently}\qquad
 q=\sqrt{1-r^2}\,\xi+rUq.
\]
After $I\oplus U$ is applied, the input to $B_r$ is $\xi\oplus Uq$.
By \eqref{eq:transducer-def}, the second component of the output is therefore
$\sqrt{1-r^2}\,\xi+rUq=q$. The first component equals
\begin{align*}
 -r\xi+\sqrt{1-r^2}\,Uq
 &=\bigl[-r\left(I-rU\right)+\left(1-r^2\right)U\bigr]\left(I-rU\right)^{-1}\xi\\
 &=\left(U-rI\right)\left(I-rU\right)^{-1}\xi.
\end{align*}
Direct multiplication shows that $B_r^\dagger=B_r$ and $B_r^2=I$. To check the public action, take an eigenvalue $z$ of $U$, with $|z|=1$. Both $|z-r|^2$ and $|1-rz|^2$ equal $1+r^2-2r\operatorname{Re}z$. Their ratio has modulus one, and the spectral theorem shows that $(U-rI)(I-rU)^{-1}$ is unitary. 
\end{proof}

\paragraph{Action on two reflections.}

The next lemma needs only the projection bounds established above. We later apply it to $P=\Pi_{\ker H}$ and $e=(0,b,0)$.

\begin{lemma}[Two-reflection overlap and catalyst norm]\label[lemma]{lem:transducer-geometry}
Let $P$ be an orthogonal projector and $e$ a unit vector. Suppose $\kappa\ge2$, $1\le s\le\kappa$, and $3s/8\le\widehat s\le5s/2$ satisfy
\[
 \frac{s^2}{2\kappa^2}\le\norm{Pe}^2
 \le\min\!\left\{\frac{s^2}{\kappa^2},\frac12\right\}.
\]
Set $U=-(2\proj e-I)(2P-I)$ and
\begin{equation}\label{eq:mixing-parameter}
 r=\frac{\kappa-16\widehat s}{\kappa+16\widehat s}.
\end{equation}
Then $r\in(-1,1)$, and the transduction output of $B_r(I\oplus U)$ on $e$
has real overlap with $Pe/\norm{Pe}$ bounded below by an absolute positive
constant. Its catalyst in \cref{prop:transducer} lies in
$\operatorname{span}_{\R}\{Pe,(I-P)e\}$ and has squared norm
$O(\kappa/\widehat s)$.
\end{lemma}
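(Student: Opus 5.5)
The plan is to reduce everything to the real two-dimensional invariant subspace $V=\spanR\{p,q\}$, where
\[
 p=\frac{Pe}{\norm{Pe}},\qquad q=\frac{(I-P)e}{\norm{(I-P)e}}.
\]
Write $c=\norm{Pe}$ and $\sigma=\norm{(I-P)e}=\sqrt{1-c^2}$, so that $e=cp+\sigma q$. The hypothesis gives $c\in(0,1/\sqrt2]$, hence $\sigma>0$ and both $p$ and $q$ are well defined. Both $R_P=2P-I$ and $R_e=2\proj e-I$ preserve $V$. In the basis $(p,q)$ they act as real orthogonal reflections, with $R_P=\diag(1,-1)$. Consequently $U=-R_eR_P$ restricts to a real rotation of $V$.

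First I will compute this rotation explicitly. We have $Up=-R_ep=p-2ce=(1-2c^2,\,-2c\sigma)$. So $U|_V$ is the rotation by angle $-\alpha$, where $\cos\alpha=1-2c^2$, i.e.\ $\sin(\alpha/2)=c$. Its eigenvalues are $e^{\pm i\alpha}$.

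That $r\in(-1,1)$ is immediate, since $\kappa,\widehat s>0$. Next I apply the fractional-linear map $f(z)=(z-r)/(1-rz)$ from \cref{prop:transducer}. On the unit circle it sends $e^{i\alpha}$ to $e^{i\beta}$ with
\[
 \tan\frac\beta2=\frac{1+r}{1-r}\tan\frac\alpha2=\frac{\kappa}{16\widehat s}\tan\frac\alpha2 .
\]
The action $f(U)$ is a polynomial-type function of $U$ (via the Neumann series), so it commutes with $U$ and preserves $V$. Because $(1+r)/(1-r)>0$, it is the real rotation of $V$ by $-\beta$, with $\beta$ of the same sign as $\alpha$.

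From $c\in[s/(\sqrt2\kappa),\,s/\kappa]$ and $c\le1/\sqrt2$ we get $\tan(\alpha/2)=c/\sqrt{1-c^2}\in[s/(\sqrt2\kappa),\,\sqrt2\,s/\kappa]$. Combining this with $3s/8\le\widehat s\le5s/2$ traps $\tan(\beta/2)$ in an absolute interval $[1/(40\sqrt2),\,\sqrt2/6]$. Hence $\beta\in[\beta_0,\pi/2)$ for an absolute constant $\beta_0>0$.

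The output is $f(U)e$, i.e.\ $e$ rotated by $-\beta$, which rotates it toward $p$. Let $\psi$ be the angle of $e$ with $\cos\psi=c$. Then
\[
 \langle p,f(U)e\rangle=\cos(\psi-\beta)=c\cos\beta+\sigma\sin\beta\ge\sigma\sin\beta\ge\frac{\sin\beta_0}{\sqrt2}.
\]
This is a real quantity bounded below by an absolute constant. The sign bookkeeping here is the step I expect to need the most care: the overlap must increase rather than decrease, and this hinges on the minus sign in $U$ and on the positivity of $(1+r)/(1-r)$.

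For the catalyst $q_\star=\sqrt{1-r^2}(I-rU)^{-1}e$, the Neumann series shows that it is a real combination of the vectors $U^je\in V$. Therefore it lies in $\spanR\{Pe,(I-P)e\}$. For the norm bound, on $V$ we have
\[
 \norm{q_\star}^2\le\frac{1-r^2}{\min_{\pm}|1-re^{\pm i\alpha}|^2}.
\]
If $r\ge0$, use $|1-rz|\ge1-r$ to get the bound $(1+r)/(1-r)=\kappa/(16\widehat s)$. If $r<0$, use $|1-rz|\ge1-|r|$ to get $(1+|r|)/(1-|r|)=16\widehat s/\kappa\le40s/\kappa\le40$. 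In the second case, $\widehat s\le5s/2\le5\kappa/2$ gives $\kappa/\widehat s\ge2/5$, so the constant $40$ is itself $O(\kappa/\widehat s)$. In both cases $\norm{q_\star}^2=O(\kappa/\widehat s)$.
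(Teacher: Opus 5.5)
Your proof is correct and follows essentially the same route as the paper. You restrict to the real plane spanned by $Pe$ and $(I-P)e$, identify $U$ there as a rotation by twice $\arcsin\norm{Pe}$, use the half-angle identity $\tan(\beta/2)=\frac{\kappa}{16\widehat s}\tan(\alpha/2)$ to trap the angle in the same interval $[1/(40\sqrt2),\sqrt2/6]$, and place the catalyst in the real plane via invariance.

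The only difference is in the final estimates. The paper substitutes $r$ into the exact value $\norm{q_\star}^2=(1-r^2)/(1+r^2-2r\cos\alpha)$ and uses $\cos^2(\alpha/2)\ge 1/2$ to get $\norm{q_\star}^2\le\kappa/(8\widehat s)$, and it bounds the overlap by $\sin\beta>1/30$. Your cruder steps drop $c\cos\beta$ and, for $r<0$, use $|1-rz|\ge 1-|r|$ even though $\cos\alpha\ge0$. They prove the lemma as stated, but they would not give the explicit constants $1/30$ and $\kappa/(8\widehat s)$ that the paper uses later in the preparation-transducer and finite-budget bounds.
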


\begin{proof}
Since $\widehat s>0$, we have $|r|<1$. Put
\[
 \gamma=\norm{Pe},\qquad
 u=\frac{Pe}{\gamma},\qquad
 w=\frac{(I-P)e}{\sqrt{1-\gamma^2}},\qquad
 \sin\theta=\gamma.
\]
Then $0<\theta\le\pi/4$, $e=\sin\theta\,u+\cos\theta\,w$, and on the
real plane $\operatorname{span}_{\R}\{u,w\}$,
\[
 U=\begin{pmatrix}\cos2\theta&\sin2\theta\\
                  -\sin2\theta&\cos2\theta\end{pmatrix}.
\]
Thus this plane is invariant under the real operators $U$ and
$(I-rU)^{-1}$. In particular, the catalyst
$q=\sqrt{1-r^2}(I-rU)^{-1}e$ lies in the claimed real span.

Set
\[
 t=\frac{1+r}{1-r}\tan\theta
   =\frac{\kappa\tan\theta}{16\widehat s},
 \qquad \varphi=2\arctan t.
\]
For an eigenvalue $e^{2i\theta}$ of $U$,
\[
 \frac{e^{2i\theta}-r}{1-re^{2i\theta}}
 =\frac{1+it}{1-it}=e^{i\varphi}.
\]
Hence the transduction output is
$\psi=\sin(\theta+\varphi)u+\cos(\theta+\varphi)w$.
The hypotheses give
\[
 \frac{s}{\sqrt2\kappa}\le\tan\theta\le\frac{\sqrt2s}{\kappa},
 \qquad
 \frac1{40\sqrt2}\le t\le\frac{\sqrt2}{6}<\tan\frac\pi8.
\]
In particular, $t^2\le1/18$, $0<\varphi<\pi/4$, and
$\theta+\varphi<\pi/2$. Therefore
\[
 \ip u\psi
 \ge\sin\varphi
 =\frac{2t}{1+t^2}
 \ge\frac{36}{19}\cdot\frac1{40\sqrt2}
 =\frac9{190\sqrt2}>\frac1{30}.
\]

Finally, $U+U^\dagger=2\cos(2\theta)I$ on this plane, so
\begin{align*}
 \norm q^2
 &=\frac{1-r^2}{1+r^2-2r\cos2\theta}\\
 &=\frac{16\kappa\widehat s}
         {\left(16\widehat s\right)^2\cos^2\theta+\kappa^2\sin^2\theta}
 \le\frac{\kappa}{8\widehat s}\le\frac{\kappa}{3s}.
\end{align*}
Here the equality substitutes $r$, while the inequalities use
$\cos^2\theta\ge1/2$ and $\widehat s\ge3s/8$.
\end{proof}

\subsection{Preparation transducer and separate oracle costs}\label{sec:register-implementation}

The fractional-linear transducer of \cref{prop:transducer} requires the
unitary $U=-R_eR_P$. We realize its two reflections separately: the
kernel-reflection transducer of \cref{prop:kernel-transducer} uses $U_H$,
whereas the input-reflection transducer of \cref{prop:input-transducer} uses
$R_e$.

Set $\tau=\widehat s$, choose $r$ as in \eqref{eq:mixing-parameter}, and write
\[
 U=-R_eR_P,\qquad
 q=\sqrt{1-r^2}(I-rU)^{-1}e,\qquad
 \psi=(U-rI)(I-rU)^{-1}e.
\]
Let $\omega_{1,q},\omega_{2,q}$ be the local catalysts from
\eqref{eq:local-catalyst}, with $\tau=\widehat s$, and set
\begin{equation}\label{eq:full-catalyst}
 \omega=q\oplus\omega_{1,q}\oplus\omega_{2,q}\oplus R_Pq.
\end{equation}

A three-qubit register $\rS$ labels the public space and the four private
spaces. For the augmented vector $e\oplus\omega$, their roles are
\begin{equation}\label{eq:space-labels}
\begin{array}{c|l|c|c}
 \rS & \text{role} & \text{component} & \text{oracle action}\\ \hline
 0 & \text{public} & e & I\\
 1 & \text{non-query private} & q & I\\
 2 & \text{first }U_H\text{-query part} & \omega_{1,q} & U_H\\
 3 & \text{second }U_H\text{-query part} & \omega_{2,q} & U_H\\
 4 & R_e\text{-query part} & R_Pq & R_e
\end{array}
\end{equation}
The labels $5,6,7$ are unused.  All circuit operators below act as the
identity on their span; the direct-sum notation records only the five active
spaces.  All active spaces share $\rG\rD$. The signal register $\rB$ is
unrestricted for $\rS=2,3$ and is fixed to $\ket0_{\rB}$ otherwise.

In the order of \eqref{eq:space-labels}, the oracle layer is
\begin{equation}\label{eq:oracle-layer}
\begin{aligned}
 \mathcal O
 &=I\oplus I\oplus U_H\oplus U_H\oplus R_e\\
 &=\left(I_{\rS}-\sum_{j=2}^4\proj j_{\rS}\right)
      \otimes I_{\rB\rG\rD}
   +\left(\proj2_{\rS}+\proj3_{\rS}\right)\otimes U_H\\
 &\quad+\proj4_{\rS}\otimes\left(I_{\rB}\otimes R_e\right).
\end{aligned}
\end{equation}

Define the label unitaries
\[
\begin{aligned}
 \widetilde B_{r,\rS}
 &=-r\proj0_{\rS}
   +\sqrt{1-r^2}\left(\ket0\!\bra1+\ket1\!\bra0\right)_{\rS}
   +r\proj1_{\rS}+\sum_{j=2}^7\proj j_{\rS},\\
 X_{\rS}
 &=\left(\ket1\!\bra4+\ket4\!\bra1\right)_{\rS}
   +\sum_{j\in\{0,2,3,5,6,7\}}\proj j_{\rS},
 \qquad Z_{\rS}=I-2\proj1_{\rS}.
\end{aligned}
\]
In the order of \eqref{eq:space-labels}, also set
$\widetilde S_P^\circ(\widehat s)=I\oplus S_P^\circ(\widehat s)\oplus I$.
The work unitary and the resulting canonical transducer are
\begin{equation}\label{eq:transducer-work}
\begin{aligned}
 S^\circ_{\rS\rB}
 &=\left(\widetilde B_{r,\rS}\otimes I_{\rB}\right)
   \left(Z_{\rS}\otimes I_{\rB}\right)
   \left(X_{\rS}\otimes I_{\rB}\right)
   \widetilde S_P^\circ(\widehat s),\\
 S^\circ
 &=S^\circ_{\rS\rB}\otimes I_{\rG\rD},\\
 S
 &=S^\circ\mathcal O
  =S^\circ\left(I\oplus I\oplus U_H\oplus U_H\oplus R_e\right).
\end{aligned}
\end{equation}
The factors act from right to left. The first factor applied uses the
signal-controlled extension from \cref{prop:kernel-transducer}.

\begin{proposition}[Preparation transducer]\label[proposition]{prop:preparation-transducer}
The transducer $S$ in \eqref{eq:transducer-work} satisfies
\begin{equation}\label{eq:preparation-transduction}
 S(e\oplus\omega)=\psi\oplus\omega,
 \qquad \norm\psi=1,
 \qquad \ip u\psi>\frac1{30}.
\end{equation}
Its catalyst costs obey
\begin{equation}\label{eq:transducer-cost}
 L_H:=\norm{\omega_{1,q}}^2+\norm{\omega_{2,q}}^2<8\kappa,
 \qquad
 L_e:=\norm{R_Pq}^2\le\frac{\kappa}{8\widehat s}
        \le\frac{\kappa}{3s},
 \qquad
 W:=\norm\omega^2<9\kappa.
\end{equation}
A controlled implementation of $S^\circ$ uses $O(a+1)$ extra gates.
\end{proposition}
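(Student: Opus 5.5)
The plan is to verify \eqref{eq:preparation-transduction} by tracking the augmented vector $e\oplus\omega$ through the factors of $S=S^\circ\mathcal O$ from right to left. Then I bound the catalyst norms using the real two-dimensional structure from \cref{lem:transducer-geometry} together with the pseudoinverse bound \eqref{eq:pseudoinverse-w}.

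\emph{Transduction identity.} Applying $\mathcal O$ from \eqref{eq:oracle-layer} to $e\oplus q\oplus\omega_{1,q}\oplus\omega_{2,q}\oplus R_Pq$ gives
\[
 e\oplus q\oplus U_H\omega_{1,q}\oplus U_H\omega_{2,q}\oplus R_eR_Pq .
\]
The factor $\widetilde S_P^\circ(\widehat s)$ acts on labels $1,2,3$ exactly as $S_P^\circ(\widehat s)$ acts on $\mathcal H_0\oplus\mathcal K_1\oplus\mathcal K_2$. By \eqref{eq:local-identity} with $\xi=q$ and $\tau=\widehat s$, it maps $q\oplus U_H\omega_{1,q}\oplus U_H\omega_{2,q}$ to $R_Pq\oplus\omega_{1,q}\oplus\omega_{2,q}$. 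The swap $X_{\rS}$ then exchanges labels $1$ and $4$, which yields
\[
 e\oplus R_eR_Pq\oplus\omega_{1,q}\oplus\omega_{2,q}\oplus R_Pq .
\]
Next, $Z_{\rS}$ negates label $1$, turning $R_eR_Pq$ into $-R_eR_Pq=Uq$. Finally, $\widetilde B_{r,\rS}$ acts as $B_r$ on labels $0,1$. The computation in the proof of \cref{prop:transducer}, with $\xi=e$, maps $e\oplus Uq$ to $\psi\oplus q$. Hence $S(e\oplus\omega)=\psi\oplus\omega$. Unitarity of the transduction action gives $\norm\psi=1$, since $\norm e=1$. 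The overlap bound $\ip u\psi>1/30$ is exactly \cref{lem:transducer-geometry}, whose hypotheses are supplied by \eqref{eq:gamma}.

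\emph{Costs.} By \cref{lem:transducer-geometry}, $q=c_1u+c_2w$ with $|c_1|^2+|c_2|^2=\norm q^2\le\kappa/(8\widehat s)$. Because $R_P$ is unitary, $L_e=\norm{R_Pq}^2=\norm q^2$, which gives the stated bound on $L_e$.

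For $L_H$, I apply \eqref{eq:local-weight} with $\tau=\widehat s$. First, $\norm{Pq}^2\le\norm q^2$. Next, $H^+u=0$ by \eqref{eq:pseudoinverse-w}, so $H^+q=c_2H^+w$. Writing $H^+w=H^+e/\sqrt{1-\gamma^2}$ and using $\gamma^2\le1/2$ together with $\norm{H^+e}\le s$, we get $\norm{H^+q}^2\le2s^2\norm q^2$. Substituting these two bounds gives
\[
 L_H\le2\alpha_H\norm q^2\Bigl(\widehat s+\frac{2s^2}{\widehat s}\Bigr)
 \le2\alpha_H\Bigl(\frac{\kappa}{8}+\frac{\kappa s^2}{4\widehat s^2}\Bigr).
\]
Using $\widehat s\ge3s/8$ and $\alpha_H\le3/2$, this is at most $3(1/8+16/9)\kappa<6\kappa<8\kappa$. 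Adding the pieces, $W=\norm q^2+L_H+L_e\le\kappa/8+6\kappa+\kappa/8<9\kappa$.

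\emph{Gate count.} The factors $\widetilde B_{r,\rS}$, $Z_{\rS}$ and $X_{\rS}$ are constant-size unitaries on the label register $\rS$. The factor $\widetilde S_P^\circ$ is a label-controlled version of $S_P^\circ$, which costs $O(a+1)$ gates by \cref{prop:kernel-transducer}. Adding one further control keeps every factor within these bounds, so a controlled $S^\circ$ uses $O(a+1)$ extra gates.

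The only delicate step is the ordering of the label bookkeeping. One must check that the swap $X_{\rS}$ and the sign $Z_{\rS}$ turn $R_eR_Pq$ into exactly $Uq$ at label $1$, as $B_r$ requires, while returning $R_Pq$ to label $4$. The norm estimates are routine once one uses that $H^+$ annihilates the $u$-component of $q$.
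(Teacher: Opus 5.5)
Your proposal is correct and follows essentially the same route as the paper. The proof tracks the labels through $\mathcal O$, $\widetilde S_P^\circ$, the $1\leftrightarrow4$ swap, the sign and $B_r$, and bounds $L_H$ via $q\in\spanR\{u,w\}$, $H^+u=0$ and $\norm{H^+w}^2\le2s^2$; keeping $\widehat s$ and cancelling it against $\norm q^2\le\kappa/(8\widehat s)$, rather than converting to $s$ first, is only cosmetic. One small slip: in the bound on $W$ you use $\norm q^2\le\kappa/8$, which needs $\widehat s\ge1$, but only $\widehat s\ge3s/8\ge3/8$ is guaranteed. Using $\norm q^2\le\kappa/(3s)\le\kappa/3$ instead still gives $W\le2\kappa/3+6\kappa<9\kappa$.
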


\begin{proof}
Write $z=R_Pq$. Starting from
$e\oplus\omega$, the oracle and work operations act as follows:
\begin{align*}
 e\oplus q\oplus\omega_{1,q}\oplus\omega_{2,q}\oplus z
 &\xrightarrow{\mathcal O}
 e\oplus q\oplus U_H\omega_{1,q}\oplus U_H\omega_{2,q}\oplus R_ez\\
 &\xrightarrow{S_P^\circ(\widehat s)}
 e\oplus z\oplus\omega_{1,q}\oplus\omega_{2,q}\oplus R_ez\\
 &\xrightarrow{1\leftrightarrow4}
 e\oplus R_ez\oplus\omega_{1,q}\oplus\omega_{2,q}\oplus z\\
 &\xrightarrow{\text{sign on }1}
 e\oplus Uq\oplus\omega_{1,q}\oplus\omega_{2,q}\oplus z\\
 &\xrightarrow{B_r}
 \psi\oplus q\oplus\omega_{1,q}\oplus\omega_{2,q}\oplus z.
\end{align*}
The second line is \eqref{eq:local-identity}; the fourth uses
$-R_ez=-R_eR_Pq=Uq$; and the last is
\eqref{eq:transducer-identity}. Thus every private component is restored,
which proves the transduction identity. The overlap bound follows from
\cref{lem:transducer-geometry}.

It remains to bound the catalyst. Put
$w=(I-P)e/\norm{(I-P)e}$. The projection bounds in
\cref{lem:geometry} give $\norm{(I-P)e}^2\ge1/2$, and hence
\[
 \norm{H^+w}^2
 =\frac{\norm{H^+e}^2}{1-\norm{Pe}^2}\le2s^2.
\]
By \cref{lem:transducer-geometry}, $q$ lies in the real span of the
orthonormal pair $u,w$. Therefore
$Pq=u\langle u,q\rangle$ and
$H^+q=H^+w\langle w,q\rangle$. Substituting these identities and
$\tau=\widehat s$ into \eqref{eq:local-weight} gives
\begin{align*}
 L_H
 &=2\alpha_H\left(\widehat s\,|\langle u,q\rangle|^2
      +\frac{\norm{H^+w}^2}{\widehat s}|\langle w,q\rangle|^2\right)\\
 &\le4\left(\frac{5s}{2}|\langle u,q\rangle|^2
      +\frac{16s}{3}|\langle w,q\rangle|^2\right)
 \le\frac{64s}{3}\norm q^2
 \le\frac{64}{9}\kappa<8\kappa.
\end{align*}
Here we used $\alpha_H\le2$, the estimate bounds, and
$\norm q^2\le\kappa/(8\widehat s)\le\kappa/(3s)$ from
\cref{lem:transducer-geometry}. Since $R_P$ is unitary,
\[
 L_e=\norm{R_Pq}^2=\norm q^2\le\frac{\kappa}{8\widehat s},
 \qquad
 W=L_H+2\norm q^2<9\kappa.
\]

Finally, the signal-controlled $S_P^\circ(\widehat s)$ costs $O(a+1)$
gates. The swap, sign, and $B_r$ act on the constant-size label register, so
an external control preserves the $O(a+1)$ bound.
\end{proof}

\paragraph{Finite implementation.}
Let the resulting circuit from \cref{prop:finite-preparation} be
\begin{equation}\label{eq:preparation-circuit}
 U_{\rm fin}=\Finite(S;\kappa,\widehat s).
\end{equation}
It implements the public action of $S$ with error less than $10^{-3}$ and
uses $O(\kappa)$ controlled calls
to $U_H$ and $O(\kappa/s)$ controlled calls to $R_e$. By
\cref{prop:graph-encoding,prop:input-transducer}, these become
$O(\kappa)$ matrix queries, $O(\kappa/s)$ vector queries, and
\[
 O\!\left(\kappa(a+1)+\frac\kappa s(n+1)\right)
\]
extra one- and two-qubit gates.

\subsection{Exact direction under finite implementation}\label{sec:finite}

\begin{lemma}[Exact kernel alignment]\label[lemma]{lem:alignment}
Let
\[
 P=\Pi_{\ker H},\qquad
 \ket e_{\rG\rD}=\ket1_{\rG}\ket b_{\rD},\qquad
 \ket u_{\rG\rD}=\frac{P\ket e}{\norm{P\ket e}}.
\]
Let the auxiliary register and its all-zero state be
\[
 \rA=\rS\rB\rT\rE,
 \qquad
 \ket0_{\rA}=\ket0_{\rS}\ket0_{\rB}\ket0_{\rT}\ket0_{\rE},
\]
where $\rS,\rB,\rT,\rE$ are respectively the label, signal, clock, and
ancilla registers.
Let $S$ be the preparation transducer in \eqref{eq:transducer-work}, and let
$U_{\rm fin}=\Finite(S;\kappa,\widehat s)$ be its real finite implementation.
Define
\[
 \ket\Psi_{\rA\rG\rD}
 =U_{\rm fin}\left(\ket0_{\rA}\ket e_{\rG\rD}\right).
\]
Then there is a real number $\beta$ such that
\begin{equation}\label{eq:exact-alignment}
 \left(\bra0_{\rA}\otimes P\right)\ket\Psi
 =\beta\ket u.
\end{equation}
\end{lemma}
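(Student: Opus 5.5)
The plan is to prove \eqref{eq:exact-alignment} by an invariant-subspace argument that follows the state through the circuit. The circuit $U_{\rm fin}$ from \cref{thm:bjy} is a finite product of four kinds of factor: controlled calls to $U_H$, controlled calls to $R_e$, controlled copies of the work unitary $S^\circ$ from \eqref{eq:transducer-work}, and compiler gates on $\rS\rT\rE$. By \cref{thm:bjy} the compiler gates and controls can be chosen real, and every entry of $S^\circ$ in the label basis is real: $r$, $\sqrt{1-r^2}$, $\sqrt\mu$, signs, and the projector $Q$. I will define a subspace $\mathcal V$ of the full space $\rA\rG\rD$ and check three things: $\ket0_{\rA}\ket e\in\mathcal V$, each factor maps $\mathcal V$ into itself, and every $\phi\in\mathcal V$ satisfies $(\bra0_{\rA}\otimes P)\phi\in\R u$. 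Together these give the lemma.

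The first ingredient is the geometry of $U_H$ relative to $\ker H$. Let $J=U_H(\ket0_{\rB}\otimes P)$. Since $U_H$ is a Hermitian unitary whose zero-signal block is $H/\alpha_H$ and $HP=0$, the range of $J$ has zero $\rB=0$ component. Because $U_H^2=I$, the operator $U_H$ exchanges the two orthogonal subspaces $\ket0_{\rB}\otimes\ker H$ and $J(\ker H)$. Hence $U_H$ also preserves their common orthogonal complement $\mathcal Z$. On the data side, let $T=\operatorname{span}_{\C}\{u,w\}$ with $w=(I-P)e/\norm{(I-P)e}$. Both $P$ and $R_e$ preserve $T$ and $T^\perp$, and both act by real matrices in the basis $u,w$, exactly as in \cref{lem:transducer-geometry}.

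I would take $\mathcal V$ to be the set of vectors of the form
\[
 \sum_k\ket k_{\rS\rT\rE}\otimes\bigl(\ket0_{\rB}\otimes(a_ku+b_kw)+J(c_ku)+z_k\bigr),
 \qquad a_k,b_k,c_k\in\R,\quad z_k\in\mathcal Z.
\]
Here $z_k$ must additionally satisfy a compatibility condition ensuring that $R_e$ and the $Q$-tests cannot create complex multiples of $u$ out of it. For any such vector, the $\rA=0$, kernel-projected part is $a_0u$, so the last of the three checks is immediate. Closure under $U_H$ follows from the exchange property above. Closure under the real label operations follows because they act only on $k$ and on the $Q$ versus $I-Q$ split of $\rB$, and they do so with real coefficients.

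The main obstacle is closure under $R_e$ applied to the $\mathcal Z$-part. For $y\in\operatorname{Ran}H\setminus T$ we have $PR_ey=2\langle e,y\rangle Pe$, so this step produces a $u$-component with coefficient $\langle e,y\rangle$. That coefficient must be real, and in fact should vanish on the relevant branch. My first attempt would be to show that the $\mathcal Z$-components that ever reach label $4$ with $\rB=0$ are orthogonal to $e$. The reason to expect this is that they arise from $(I-Q)$-branches of $U_H$ that are reached only from $\ket0_{\rB}\otimes(\operatorname{Ran}H\cap T)$, and $H$ maps $w$ into $\operatorname{Ran}H$. I would refine $\mathcal Z$ accordingly, into the part generated by real Krylov words in $U_H$ and $R_e$ applied to $\ket0_{\rB}\otimes w$. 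I would then verify that the Gram entries with $e$ are real by using the Hermiticity of both $U_H$ and $R_e$ together with the real transducer coefficients.

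If this direct bookkeeping becomes unwieldy, the fallback is a complex-conjugation argument. Choose a basis of $\rG\rD$ in which the real Krylov vectors $\{(U_HR_e)^jU_H^{\epsilon}\ket0\ket e\}$ are real, and argue that every factor of $U_{\rm fin}$ commutes with the corresponding antiunitary on the reachable subspace. This again forces $\beta\in\R$ and removes any kernel component outside $\R u$.
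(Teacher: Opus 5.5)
There is a genuine gap. It sits exactly at the step you flag as the main obstacle, and that step is the whole content of the lemma; no adjustment of the $T\oplus\mathcal Z$ bookkeeping closes it.

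\emph{Your $\mathcal V$ is not invariant.} With $z_k$ ranging over the complex space $\mathcal Z$, closure under $R_e$ fails. Since $w\in\operatorname{Ran}H$, we have $\ket0_{\rB}\otimes w\in\mathcal Z$, so $i\ket0_{\rB}\otimes w\in\mathcal V$. But $R_e$ maps it to $i(\sin2\theta\,u+\cos2\theta\,w)$, where $\sin\theta=\norm{Pe}$, which has a non-real $u$-coefficient.

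If you split $\ket0_{\rB}\otimes w$ off $\mathcal Z$, closure under $U_H$ fails instead. The vector $Hw=He/\norm{(I-P)e}$ lies in the $\rG=0$ block and is orthogonal to $T$. Hence $z=iU_H(\ket0_{\rB}\otimes w)$ belongs to the reduced $\mathcal Z$, yet $U_Hz=i\ket0_{\rB}\otimes w$. In that version, orthogonality of the $\rB=0$ part of $\mathcal Z$ to $e$ is automatic (since $e\in T$), so it is not the real issue.

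What is missing is a reality constraint on the $\mathcal Z$-part. The same example shows that the finite circuit leaves $T$ at once. The plane of \cref{lem:transducer-geometry} describes only the ideal transduction action, so it cannot carry the argument.

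\emph{Your proposed fixes do not supply the constraint.} Hermiticity of $U_H$ and $R_e$ only gives $\overline{\ip{v}{X_1\cdots X_nv}}=\ip{v}{X_n\cdots X_1v}$. That is reality for palindromic words only, and the compiled circuit produces non-palindromic words in $U_H$, $Q$, and $R_e$. The antiunitary fallback presupposes a real structure that every factor preserves on the reachable set. That structure is precisely the invariant you have not identified.

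\emph{The missing idea.} The paper keeps the $\rB=0$ data in the real Krylov space $\cK=\spanR\{H^je:j\ge0\}$, and the query-space data in $\cM=(\ket0_{\rB}\otimes\cK)+U_H(\ket0_{\rB}\otimes\cK)$. Three facts then close everything.
\begin{enumerate}
\item $U_H\cM=\cM$, because $U_H^2=I$.
\item $\proj0_{\rB}\cM\subseteq\ket0_{\rB}\otimes\cK$, because $\proj0_{\rB}U_H(\ket0_{\rB}\otimes k)=\ket0_{\rB}\otimes Hk/\alpha_H$ and $\cK$ is $H$-invariant.
\item $R_e\cK=\cK$, because $\ip{e}{H^je}\in\R$. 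This is the only place Hermiticity is needed, and only for single powers of $H$.
\end{enumerate}
With the real coefficients of $S^\circ$ and of the compiler, the following real set is invariant: vectors carrying $\ket0_{\rB}\otimes\cK$ on labels $0,1,4$ and $\cM$ on labels $2,3$, tensored with real clock and ancilla amplitudes.

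Hence $(\bra0_{\rA}\otimes I)\ket\Psi=\sum_jc_jH^je$ with real $c_j$. Since $PH=0$, this gives $(\bra0_{\rA}\otimes P)\ket\Psi=c_0Pe\in\R u$ directly. There is no need to separate $u$, $w$, and $\mathcal Z$, or to check Gram entries.
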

\begin{proof}
Let $y=(\bra0_{\rA}\otimes I_{\rG\rD})\ket\Psi$, and set
\begin{equation}\label{eq:krylov}
 \cK=\spanR\{H^je:j\ge0\},\qquad
 \cM=\left(\ket0_{\rB}\otimes\cK\right)+U_H\left(\ket0_{\rB}\otimes\cK\right).
\end{equation}
It is enough to prove $y\in\cK$. Indeed, writing
$y=\sum_jc_jH^je$ with $c_j\in\R$ would give
\[
 Py=c_0Pe=c_0\norm{Pe}\,u
\]
because $PH=0$, proving the claim with $\beta=c_0\norm{Pe}$.

The proof is based on the following three invariance relations:
\begin{equation}\label{eq:space-identities}
 U_H\cM=\cM,\qquad
 \proj0_{\rB}\cM\subseteq\ket0_{\rB}\otimes\cK,\qquad
 R_e\cK=\cK.
\end{equation}
The first follows from $U_H^2=I$. For the second, $\cK$ is $H$-invariant
and the block encoding gives
\[
 \proj0_{\rB}U_H\left(\ket0_{\rB}\otimes\cK\right)
 =\ket0_{\rB}\otimes\left(H/\alpha_H\right)\cK
 \subseteq\ket0_{\rB}\otimes\cK.
\]
Finally, if $z=\sum_jc_jH^je\in\cK$, then
$\langle e,z\rangle\in\R$ and
$R_ez=2e\langle e,z\rangle-z\in\cK$. Since $R_e^2=I$, the last inclusion is
an equality.

Let $\mathcal L$ consist of all real vectors of the form
\begin{equation}\label{eq:assigned-space}
\begin{aligned}
 \mathcal L=\Bigl\{&
   \sum_{j\in\{0,1,4\}}\ket j_{\rS}\ket0_{\rB}\otimes k_j
   +\ket2_{\rS}\otimes m_2+\ket3_{\rS}\otimes m_3:\\
 & k_j\in\cK,\quad m_2,m_3\in\cM\Bigr\}.
\end{aligned}
\end{equation}
Using \eqref{eq:space-identities}, the definition of $\cM$, and the block
formula \eqref{eq:local-work}, the invariance checks are
\[
\begin{aligned}
 U_H\cM=\cM,\quad R_e\cK=\cK
 &\quad\Longrightarrow\quad \mathcal O\mathcal L=\mathcal L,\\
 (\bra0_{\rB}\otimes I_{\rG\rD})\cM\subseteq\cK,\quad
 (\ket0_{\rB}\otimes I_{\rG\rD})\cK\subseteq\cM
 &\quad\Longrightarrow\quad
 \left(\widetilde S_P^\circ(\widehat s)\otimes I_{\rG\rD}\right)
 \mathcal L=\mathcal L,\\
 \left(X_{\rS}\otimes I_{\rB\rG\rD}\right)\mathcal L
 &=\left(Z_{\rS}\otimes I_{\rB\rG\rD}\right)\mathcal L\\
 &=\left(\widetilde B_{r,\rS}\otimes I_{\rB\rG\rD}\right)\mathcal L
 =\mathcal L.
\end{aligned}
\]
Together with \eqref{eq:transducer-work}, these relations give
$S^\circ\mathcal L=\mathcal L$.

Tensor $\mathcal L$ with the real span of the computational basis of
$\rT\rE$. The controlled calls in $\Finite(S;\kappa,\widehat s)$ preserve
this enlarged space, as do the real clock and control operations of
\cref{lem:real-compiler}. Since the input is $\ket0_{\rA}\ket e_{\rG\rD}$,
projecting the output onto $\ket0_{\rA}$ therefore gives
\[
 y=\left(\bra0_{\rA}\otimes I_{\rG\rD}\right)\ket\Psi\in\cK.
\]
This completes the proof. 
\end{proof}


We now prove the main result in this section.

\begin{proof}[Proof of \Cref{prop:coarse}]
The circuit in \cref{alg:coarse} prepares $\ket1_{\rG}\ket b_{\rD}$ on
$\rG\rD$, with all auxiliary registers initialized to zero, and applies
$U_{\rm fin}$ from \eqref{eq:preparation-circuit}. Write
$\psi$ for its ideal public output. By
\cref{lem:transducer-geometry}, $\langle u,\psi\rangle>1/30$. The real
implementation in \cref{prop:finite-preparation} has error less than
$10^{-3}$ and returns all registers coherently.

The output is a unit vector, and contraction with $\bra0_{\rA}$ gives
\[
 \left\|\left(\bra0_{\rA}\otimes I_{\rG\rD}\right)\Psi\right\|\le1,
 \qquad
 \left\|\left(\bra0_{\rA}\otimes I_{\rG\rD}\right)\Psi-\psi\right\|
 <10^{-3}.
\]
By \cref{lem:alignment},
$\left(\bra0_{\rA}\otimes P\right)\ket\Psi=\beta\ket u$ for real
$\beta$. Consequently
\[
 \beta
 =\left\langle u,\left(\bra0_{\rA}\otimes I_{\rG\rD}\right)\Psi\right\rangle
 \ge\langle u,\psi\rangle
 -\left\|\left(\bra0_{\rA}\otimes I_{\rG\rD}\right)\Psi-\psi\right\|
 >\frac1{30}-10^{-3}>\frac1{32}.
\]
The upper bound follows from
$|\beta|\le\left\|\left(\bra0_{\rA}\otimes I_{\rG\rD}\right)\Psi\right\|
\le1$.
Define
\[
 \ket h_{\rG\rD}
 =\left(\bra0_{\rA}\otimes\left(I_{\rG\rD}-P\right)\right)
   \ket\Psi_{\rA\rG\rD}.
\]
Since $P^2=P$, we have $P\ket h=0$, and
\[
 \left(\bra0_{\rA}\otimes I_{\rG\rD}\right)\ket\Psi
 =\left(\bra0_{\rA}\otimes P\right)\ket\Psi+\ket h
 =\beta\ket u+\ket h.
\]
This proves \eqref{eq:coarse-guarantee}.

By \cref{thm:bjy,prop:preparation-transducer}, the controlled $S^\circ$
circuits and compiler overhead contribute $O(\kappa(a+1))$ extra gates.  The
$O(\kappa)$ controlled
queries to $U_H$ use $O(\kappa)$ matrix queries and extra gates by
\cref{prop:graph-encoding}.  The
$O(1+\kappa/\widehat s)=O(\kappa/s)$ controlled queries to $R_e$ use
$O(\kappa/s)$ vector queries and $O((\kappa/s)(n+1))$ extra gates by
\cref{prop:input-transducer}. The initial preparation uses one vector query,
absorbed since $\kappa/s\ge1$. Finally, $\rS\rB$ uses $O(a+1)$ qubits, $\rT$
uses $O(\log\kappa)$, and $\rE$ uses $O(a+\log\kappa)$ and returns to zero.
These give all stated costs.
\end{proof}

\section{Matrix-only refinement}\label{sec:refinement}


\subsection{Polynomial eigenstate filtering}

\begin{lemma}[Kernel-filter polynomial, {\cite[Lemma 2]{LinTong20}}]\label[lemma]{lem:filter-polynomial}
For $0<\delta\le1/\sqrt{12}$ and $0<\eta<1/2$, set
\[
 \ell=\left\lceil\frac{\log(2/\eta)}{\sqrt2\delta}\right\rceil
\]
and define
\begin{equation}\label{eq:chebyshev-filter}
 R_{\delta,\eta}(x)=
 \frac{T_\ell\!\left(-1+2\dfrac{x^2-\delta^2}{1-\delta^2}\right)}
 {T_\ell\!\left(-\dfrac{1+\delta^2}{1-\delta^2}\right)},
\end{equation}
where $T_\ell(\cos\theta)=\cos(\ell\theta)$ is the Chebyshev polynomial of
the first kind. Then $R_{\delta,\eta}$ is a real even polynomial of degree
$2\ell=O(\delta^{-1}\log(1/\eta))$. It satisfies
$R_{\delta,\eta}(0)=1$, $|R_{\delta,\eta}(x)|\le1$ for $|x|\le1$, and
$|R_{\delta,\eta}(x)|\le\eta$ for $\delta\le|x|\le1$.
\end{lemma}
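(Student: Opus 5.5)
The argument reduces the statement to standard facts about Chebyshev polynomials via the substitution $y=-1+2(x^2-\delta^2)/(1-\delta^2)$. This map sends $x^2\in[\delta^2,1]$ onto $y\in[-1,1]$. It sends $x=0$ to $y_0:=-(1+\delta^2)/(1-\delta^2)<-1$, and $x^2\in[0,\delta^2]$ onto $[y_0,-1]$. Because $R_{\delta,\eta}$ is a polynomial in $x^2$ of degree $\ell$, it is real and even of degree $2\ell$. The normalization immediately gives $R_{\delta,\eta}(0)=T_\ell(y_0)/T_\ell(y_0)=1$. The bound on $\ell$ gives degree $O(\delta^{-1}\log(1/\eta))$. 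The denominator is nonzero because $|T_\ell(y)|\ge1$ for $|y|\ge1$.

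For the sup bounds, I would use two standard facts: $|T_\ell(y)|\le1$ on $[-1,1]$, and $|T_\ell|$ is monotone increasing in $|y|$ outside $[-1,1]$. The second follows from $T_\ell(\cosh t)=\cosh(\ell t)$ together with parity. On $\delta\le|x|\le1$ this gives $|R|\le1/|T_\ell(y_0)|$. On $|x|\le\delta$ we have $|y|\le|y_0|$, so $|R|\le1$. Hence $|R|\le1$ on $[-1,1]$, and it remains to show $|T_\ell(y_0)|\ge1/\eta$.

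This lower bound is the main computational step. Write $|y_0|=\cosh t$ with $t=\operatorname{arccosh}\frac{1+\delta^2}{1-\delta^2}$. Then
\[
 |T_\ell(y_0)|=\cosh(\ell t)\ge\tfrac12e^{\ell t}.
\]
Since $\cosh t-1=2\delta^2/(1-\delta^2)\ge2\delta^2$, it suffices to show $t\ge\sqrt2\,\delta$. Indeed, $\cosh(\sqrt2\delta)-1=\delta^2+\delta^4/6+\cdots$. Comparing this series with $2\delta^2/(1-\delta^2)\ge2\delta^2$, and using $\delta\le1/\sqrt{12}$ so the higher-order terms stay below $\delta^2$, gives $\cosh(\sqrt2\delta)\le\cosh t$, hence $t\ge\sqrt2\delta$. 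Then
\[
 \ell t\ge\frac{\log(2/\eta)}{\sqrt2\delta}\cdot\sqrt2\delta=\log(2/\eta),
\]
so $|T_\ell(y_0)|\ge\frac12\cdot\frac2\eta=1/\eta$.

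The main care point is this elementary inequality $t\ge\sqrt2\delta$, which is where the hypothesis $\delta\le1/\sqrt{12}$ enters. Since the statement is cited from \cite[Lemma 2]{LinTong20}, one could alternatively invoke it directly. The plan above reproduces it self-containedly.
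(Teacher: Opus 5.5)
Your proof is correct. The paper gives no argument of its own for this lemma and simply imports it from \cite[Lemma 2]{LinTong20}; your reduction is the standard Chebyshev argument behind that lemma, using $|T_\ell|\le1$ on $[-1,1]$, the monotone growth of $|T_\ell|$ outside $[-1,1]$, and $|T_\ell(y_0)|=\cosh(\ell t)\ge\frac12e^{\ell t}$.

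One small correction to your commentary: the bound $t\ge\sqrt2\,\delta$ does not actually depend on $\delta\le1/\sqrt{12}$. Since $\cosh t=(1+\delta^2)/(1-\delta^2)$, you have $t=2\operatorname{artanh}\delta\ge2\delta$ for every $\delta\in(0,1)$, and your series comparison likewise only needs $\delta^2/6+\delta^4/90+\cdots\le1$.
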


\begin{lemma}[QSVT kernel filter]\label[lemma]{lem:filter}
Set $\alpha_H=1+\kappa^{-1}$ and
$\delta=\min\{1/(\kappa\alpha_H),1/\sqrt{12}\}$. For every
$0<\eta<1/2$, let $R_{\delta,\eta}$ be the polynomial from
\cref{lem:filter-polynomial}. There is a unitary
$\textup{\FilterKernel}(U_A;\kappa,\eta)_{\rF\rG\rD}$ such that
\begin{equation}\label{eq:filter-register-block}
 \left(\bra0_{\rF}\otimes I_{\rG\rD}\right)
 \textup{\FilterKernel}\left(U_A;\kappa,\eta\right)_{\rF\rG\rD}
 \left(\ket0_{\rF}\otimes I_{\rG\rD}\right)
 =R_{\delta,\eta}\left(H/\alpha_H\right)
\end{equation}
and
\begin{equation}\label{eq:filter-operator-bound}
 \left\|R_{\delta,\eta}\left(H/\alpha_H\right)-P\right\|\le\eta.
\end{equation}
The circuit uses $O(\kappa\log(1/\eta))$ matrix queries,
$O(\kappa(a+1)\log(1/\eta))$ extra gates, and $a+O(1)$ signal qubits in
$\rF$. It uses no vector queries.
\end{lemma}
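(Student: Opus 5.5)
The plan is to apply \cref{lem:qsvt} to the polynomial $R_{\delta,\eta}$ from \cref{lem:filter-polynomial}, using as the block-encoding the Hermitian unitary $U_H$ of \cref{prop:graph-encoding}. This $U_H$ is an exact $(1,a+O(1),0)$-block-encoding of $H/\alpha_H$ with signal register $\rB$. We then take $\rF$ to be $\rB$ together with the $O(1)$ extra qubits used by QSVT.

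First, check the hypotheses of \cref{lem:filter-polynomial}: we need $0<\delta\le1/\sqrt{12}$, which holds by the definition of $\delta$, and $0<\eta<1/2$, which is given. That lemma then makes $R_{\delta,\eta}$ real and even, of degree $2\ell=O(\delta^{-1}\log(1/\eta))$, and bounded by $1$ on $[-1,1]$. Since $\kappa\ge2$, we have $\delta^{-1}=\max\{\kappa\alpha_H,\sqrt{12}\}=O(\kappa)$. \Cref{lem:qsvt} therefore gives an exact block-encoding of $R_{\delta,\eta}(H/\alpha_H)$, which is \eqref{eq:filter-register-block}, with the following costs:
\begin{itemize}
\item $O(\kappa\log(1/\eta))$ calls to $U_H$, and hence $O(\kappa\log(1/\eta))$ matrix queries and $O(\kappa\log(1/\eta))$ extra gates by \cref{prop:graph-encoding};
\item $O(\kappa(a+1)\log(1/\eta))$ QSVT gates, since the block-encoding has $a+O(1)$ signal qubits.
\end{itemize}
Only $U_H$ is called, so the circuit uses no vector queries.

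For \eqref{eq:filter-operator-bound}, diagonalize the Hermitian operator $H/\alpha_H$. On $\ker H$ its eigenvalue is $0$, where $R_{\delta,\eta}(0)=1$, which agrees with $P$. For any nonzero eigenvalue $\lambda$ of $H$, \cref{lem:geometry} gives
\[
 \frac{\sqrt2}{\kappa\alpha_H}\le\frac{|\lambda|}{\alpha_H}\le\frac{\sqrt{1+\kappa^{-2}}}{1+\kappa^{-1}}\le1.
\]
So every nonzero eigenvalue $x=\lambda/\alpha_H$ satisfies $\delta\le|x|\le1$. The left inequality uses $\sqrt2/(\kappa\alpha_H)\ge1/(\kappa\alpha_H)\ge\delta$. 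On these eigenvectors $P$ acts as $0$, and $|R_{\delta,\eta}(x)|\le\eta$ by \cref{lem:filter-polynomial}.

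Hence $R_{\delta,\eta}(H/\alpha_H)-P$ is diagonal in this eigenbasis, with entries $0$ on the kernel and of modulus at most $\eta$ elsewhere. Its operator norm is therefore at most $\eta$.

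The only delicate point is matching the spectral gap to the filter threshold $\delta$ after rescaling by $\alpha_H$, and confirming that the rescaled spectrum stays inside $[-1,1]$. Both follow from the eigenvalue bounds in \cref{lem:geometry}.
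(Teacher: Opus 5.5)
Your proposal is correct and follows the paper's proof essentially step for step. The paper likewise uses \cref{lem:geometry} to place the spectrum of $H/\alpha_H$ in $\{0\}\cup[-1,-\delta]\cup[\delta,1]$ to get \eqref{eq:filter-operator-bound}, and applies \cref{lem:qsvt} to $U_H$ with $\deg R_{\delta,\eta}=O(\kappa\log(1/\eta))$ for the block-encoding and costs. One small point: read your ``take $\rF$ to be $\rB$'' as a fresh copy of $\rB$, as the paper does, not the $\rB\subset\rA$ from preparation, since otherwise \cref{prop:refinement} could not commute $\bra0_{\rA}$ past the filter.
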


\begin{proof}
By \cref{lem:geometry}, the spectrum of $H/\alpha_H$ is contained in
\[
 \{0\}\cup[-1,-\delta]\cup[\delta,1].
\]
The properties in \cref{lem:filter-polynomial} and the spectral theorem
therefore give \eqref{eq:filter-operator-bound}. Since $R_{\delta,\eta}$ is
real, even, and bounded by one on $[-1,1]$, \cref{lem:qsvt} applied to the
block-encoding from \cref{prop:graph-encoding} gives
\eqref{eq:filter-register-block}. Moreover,
$1/\delta=\max\{\kappa\alpha_H,\sqrt{12}\}=O(\kappa)$, so
$\deg R_{\delta,\eta}=O(\kappa\log(1/\eta))$. The stated costs follow from
the QSVT degree and the implementation costs of $U_H$. The signal qubits used
for these calls are assigned to $\rF$, separately from the preparation
register $\rB\subset\rA$.
\end{proof}

\subsection{Polynomial approximation of the correction operator}

\begin{proposition}[Correction identity]\label[proposition]{prop:correction-identity}
Let $\kappa\ge1$ and let $A$ be Hermitian with
$\operatorname{spec}(A)\subseteq[-1,-1/\kappa]\cup[1/\kappa,1]$. Then
\begin{equation}\label{eq:correction-identity}
 I\preceq I+\kappa^{-2}A^{-2}\preceq2I,
 \qquad
 \left(I+\kappa^{-2}A^{-2}\right)A\left(A^2+\kappa^{-2}I\right)^{-1}=A^{-1}.
\end{equation}
\end{proposition}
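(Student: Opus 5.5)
The plan is to reduce both claims to scalar statements through the spectral theorem. Since $A$ is Hermitian with $\operatorname{spec}(A)\subseteq[-1,-1/\kappa]\cup[1/\kappa,1]$, it is invertible, and it has an orthonormal eigenbasis $\{\phi_j\}$ with real eigenvalues $\lambda_j$ satisfying $1/\kappa\le|\lambda_j|\le1$. Each operator appearing in \eqref{eq:correction-identity} is a rational function of $A$ whose denominator does not vanish on the spectrum. Here $A^2+\kappa^{-2}I\succeq 2\kappa^{-2}I$ is invertible, as already noted in the proof of \cref{lem:geometry}. All these operators are therefore simultaneously diagonal in the basis $\{\phi_j\}$, and they commute with each other.

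For the operator inequalities, I will evaluate $I+\kappa^{-2}A^{-2}$ on each $\phi_j$, which gives the eigenvalue $1+1/(\kappa^2\lambda_j^2)$. From $\lambda_j^2\ge\kappa^{-2}$ we get $0<1/(\kappa^2\lambda_j^2)\le1$, so every eigenvalue lies in $[1,2]$. For a Hermitian operator diagonal in an orthonormal basis, this is equivalent to $I\preceq I+\kappa^{-2}A^{-2}\preceq2I$.

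For the identity, I will compute the eigenvalue of the product on $\phi_j$:
\[
 \left(1+\frac{1}{\kappa^2\lambda_j^2}\right)\frac{\lambda_j}{\lambda_j^2+\kappa^{-2}}
 =\frac{\lambda_j^2+\kappa^{-2}}{\lambda_j^2}\cdot\frac{\lambda_j}{\lambda_j^2+\kappa^{-2}}
 =\frac1{\lambda_j}.
\]
This is the eigenvalue of $A^{-1}$ on $\phi_j$. An equivalent operator-level route is to write $I+\kappa^{-2}A^{-2}=A^{-2}(A^2+\kappa^{-2}I)$ and cancel using commutativity.

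None of these steps is a real obstacle. The only points needing care are checking that $A^2+\kappa^{-2}I$ and $A$ are invertible and that all factors commute, so the factors may be reordered and cancelled. Both follow from the functional calculus for the Hermitian $A$.
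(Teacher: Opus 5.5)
Your proposal is correct and follows essentially the same route as the paper. The paper gets the order bounds from $0\prec\kappa^{-2}A^{-2}\preceq I$ and proves the identity via the factorization $\left(I+\kappa^{-2}A^{-2}\right)A=A^{-1}\left(A^2+\kappa^{-2}I\right)$, which is the operator-level version of your eigenvalue computation and the same as the alternative route you mention.
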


\begin{proof}
The spectral assumption gives $0\prec\kappa^{-2}A^{-2}\preceq I$, while
\[
 \left(I+\kappa^{-2}A^{-2}\right)A
 =A+\kappa^{-2}A^{-1}
 =A^{-1}\left(A^2+\kappa^{-2}I\right)
\]
proves the identity.
\end{proof}


\begin{lemma}[Correction polynomial]\label[lemma]{lem:correction-polynomial}
For $\kappa\ge2$ and $0<\eta<1/2$, let $p_{1/\kappa,\eta}$ be the polynomial
from \cref{lem:inverse-square}, and define
\begin{equation}\label{eq:correction-poly}
 c_{\kappa,\eta}(x)=\frac{1+2p_{1/\kappa,\eta}(x)}4.
\end{equation}
Then $c_{\kappa,\eta}$ is a real even polynomial of degree
$O(\kappa\log(1/\eta))$. It satisfies $|c_{\kappa,\eta}(x)|\le3/4$ for
$|x|\le1$, and
$|c_{\kappa,\eta}(x)-(1+\kappa^{-2}x^{-2})/4|\le\eta/2$ for
$1/\kappa\le|x|\le1$.
\end{lemma}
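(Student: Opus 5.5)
This lemma follows directly from \cref{lem:inverse-square} with $\delta=1/\kappa$; we only need to transfer each property through the affine map $y\mapsto(1+2y)/4$. First, $\kappa\ge2$ gives $0<\delta=1/\kappa\le1/2$, and $0<\eta<1/2$. Hence \cref{lem:inverse-square} applies with target accuracy $\eta$ and yields a real even polynomial $p_{1/\kappa,\eta}$ of degree $O(\kappa\log(1/\eta))$.

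For the structural claims, note that $c_{\kappa,\eta}$ is an affine combination of the constant polynomial $1$ and $p_{1/\kappa,\eta}$, with real coefficients. So $c_{\kappa,\eta}$ is real and even. Its degree is at most $\max\{0,\deg p_{1/\kappa,\eta}\}=O(\kappa\log(1/\eta))$.

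For the uniform bound on $[-1,1]$, use $|p_{1/\kappa,\eta}(x)|\le1$ and the triangle inequality:
\[
 |c_{\kappa,\eta}(x)|\le\frac{1+2|p_{1/\kappa,\eta}(x)|}{4}\le\frac34 .
\]

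For the approximation claim on $1/\kappa\le|x|\le1$, the target of \cref{lem:inverse-square} is $\delta^2/(2x^2)=\kappa^{-2}x^{-2}/2$. Therefore
\[
 c_{\kappa,\eta}(x)-\frac{1+\kappa^{-2}x^{-2}}{4}
 =\frac{2}{4}\left(p_{1/\kappa,\eta}(x)-\frac{\kappa^{-2}}{2x^2}\right),
\]
whose absolute value is at most $\eta/2$.

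There is no real obstacle here. The only points needing care are checking the hypothesis $\delta\le1/2$ (this is where $\kappa\ge2$ is used) and matching the normalization $\delta^2/(2x^2)$ in \cref{lem:inverse-square} with the factor $2$ in \eqref{eq:correction-poly}. This choice of normalization is also why the bound is $3/4<1$, which leaves room for the QSVT implementation via \cref{lem:qsvt}.
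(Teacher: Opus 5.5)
Your proof is correct and matches the paper's approach: the paper states this lemma without a separate proof, treating it as an immediate consequence of \cref{lem:inverse-square} with $\delta=1/\kappa$. Your verification (checking $\delta\le1/2$ from $\kappa\ge2$, then passing each property through the affine map $y\mapsto(1+2y)/4$) is exactly that argument, written out.
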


\begin{lemma}[QSVT correction]\label[lemma]{lem:correction}
For $\kappa\ge2$ and $0<\eta<1/2$, let $c_{\kappa,\eta}$ be the polynomial
from \cref{lem:correction-polynomial}. There is a unitary
$\textup{\CorrectSolution}(U_A;\kappa,\eta)_{\rC\rD}$ such that
\begin{equation}\label{eq:correction-register-block}
 \left(\bra0_{\rC}\otimes I_{\rD}\right)\,
 \textup{\CorrectSolution}\left(U_A;\kappa,\eta\right)_{\rC\rD}\,
 \left(\ket0_{\rC}\otimes I_{\rD}\right)
 =c_{\kappa,\eta}(A).
\end{equation}
Moreover,
\begin{equation}\label{eq:correction-bound}
 \left\|c_{\kappa,\eta}(A)
 -\frac{I+\kappa^{-2}A^{-2}}4\right\|\le\frac\eta2.
\end{equation}
The circuit uses $O(\kappa\log(1/\eta))$ matrix queries,
$O(\kappa(a+1)\log(1/\eta))$ extra gates, and $a+O(1)$ signal qubits in
$\rC$. It uses no vector queries.
\end{lemma}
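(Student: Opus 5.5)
The proof follows the template of \cref{lem:filter}, with the kernel filter replaced by the correction polynomial of \cref{lem:correction-polynomial} and the auxiliary matrix $H$ replaced by $A$ itself.

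First, I would record the spectral information needed. By the normalization from \cref{prop:hermitian-reduction}, $A$ is Hermitian with $\norm A\le1$ and $\norm{A^{-1}}\le\kappa$. Hence
\[
 \operatorname{spec}(A)\subseteq[-1,-1/\kappa]\cup[1/\kappa,1].
\]
Lemma \cref{lem:correction-polynomial} gives
\[
 \left|c_{\kappa,\eta}(x)-\frac{1+\kappa^{-2}x^{-2}}{4}\right|\le\frac\eta2
 \qquad\text{for every such eigenvalue } x.
\]
The spectral theorem, applied in an orthonormal eigenbasis of $A$, then turns this scalar bound into the operator bound \eqref{eq:correction-bound}. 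This works because $c_{\kappa,\eta}(A)-(I+\kappa^{-2}A^{-2})/4$ is diagonal in that basis, with entries bounded by $\eta/2$.

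Second, I would construct the circuit. The polynomial $c_{\kappa,\eta}$ is real and even, and it satisfies $|c_{\kappa,\eta}(x)|\le3/4\le1$ on $[-1,1]$. Its degree is $O(\kappa\log(1/\eta))$. So \cref{lem:qsvt}, applied to the exact $(1,a,0)$-block-encoding $U_A$, yields an exact block-encoding of $c_{\kappa,\eta}(A)$. It uses $O(\kappa\log(1/\eta)+1)=O(\kappa\log(1/\eta))$ queries to $U_A$ and $O(\kappa(a+1)\log(1/\eta))$ extra gates. Here $\kappa\ge2$ and $\eta<1/2$ give $\log(1/\eta)\ge\log 2$, so the additive $+1$ is absorbed.

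I would assign the $a$ signal qubits of $U_A$, together with the $O(1)$ qubits QSVT adds for phase rotations, to the register $\rC$. This gives \eqref{eq:correction-register-block} with $a+O(1)$ signal qubits. The circuit consists only of calls to $U_A$, its adjoint and controlled versions, and rotations, so it makes no vector queries.

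The only point needing any care is the hypothesis check for \cref{lem:qsvt}. That lemma requires a Hermitian matrix with a $(1,a,0)$ encoding and a polynomial of sup-norm at most one. Both hold after the Hermitian reduction of \cref{prop:hermitian-reduction}: the bound $3/4$ leaves room, and no rescaling is needed. I expect no genuine obstacle beyond this bookkeeping.
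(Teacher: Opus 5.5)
Your proposal is correct and follows the paper's proof: the spectral inclusion $\operatorname{spec}(A)\subseteq[-1,-1/\kappa]\cup[1/\kappa,1]$ together with the scalar estimate of \cref{lem:correction-polynomial} gives \eqref{eq:correction-bound} via the spectral theorem. \Cref{lem:qsvt}, applied to the real even polynomial $c_{\kappa,\eta}$ with $|c_{\kappa,\eta}|\le3/4\le1$, gives the block-encoding \eqref{eq:correction-register-block} with the stated matrix-query, gate and signal-qubit costs and no vector queries.
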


\begin{proof}
The input conditions imply
$\operatorname{spec}(A)\subseteq[-1,-1/\kappa]\cup[1/\kappa,1]$. Hence the
scalar approximation in \cref{lem:correction-polynomial} and the spectral
theorem give \eqref{eq:correction-bound}. Since $c_{\kappa,\eta}$ is real,
even, and bounded by one on $[-1,1]$, \cref{lem:qsvt} gives
\eqref{eq:correction-register-block} with the stated costs.
\end{proof}

The preparation register $\rA$, filter register $\rF$, and correction
register $\rC$ remain disjoint until the final test.

\subsection{Coherent composition and output error}

\begin{proposition}[Matrix-only refinement]\label[proposition]{prop:refinement}
Suppose a unit vector $\ket\Psi_{\rA\rG\rD}$ satisfies
\eqref{eq:coarse-guarantee}. For $0<\varepsilon<1/2$, apply
$\textup{\FilterKernel}$ and $\textup{\CorrectSolution}$ with
$\eta=\varepsilon/1024$, followed by the measurement
$\{\Pi_{\rm succ},I-\Pi_{\rm succ}\}$. The acceptance probability exceeds
$1/65536$, and the accepted state in $\rD$ is within distance
$\varepsilon/2$ of $\ket x_{\rD}$. The procedure uses
$O(\kappa\log(1/\varepsilon))$ matrix queries,
$O(\kappa(a+1)\log(1/\varepsilon))$ extra gates, and no vector queries.
\end{proposition}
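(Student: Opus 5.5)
Since the three ancilla registers are disjoint and the success projector factors, I will compute the unnormalized accepted vector directly. Because $\CorrectSolution$ acts only on $\rC\rD$, $\FilterKernel$ only on $\rF\rG\rD$, and both start with their signal registers at zero, projecting onto $\rA=\rF=\rC=0$ and $\rG=2$ gives
\[
 v:=\left(\bra2_{\rG}\otimes c_{\kappa,\eta}(A)\right)R_{\delta,\eta}(H/\alpha_H)\,(\beta u+h),
\]
where $\beta u+h=(\bra0_{\rA}\otimes I)\Psi$ by \eqref{eq:coarse-guarantee}. The acceptance probability is $\norm v^2$, and the accepted state is $v/\norm v$. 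I will compare $v$ with the ideal vector $v_\star=(\bra2_{\rG}\otimes C_\star)\beta u$ and proceed in three steps.

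\emph{Step 1: ideal vector.} By \cref{lem:geometry}, $u=Pe/\norm{Pe}$ and the $\rG=2$ component of $Pe$ is $\kappa^{-1}A(A^2+\kappa^{-2}I)^{-1}b$. By \cref{prop:correction-identity},
\[
 v_\star=\frac{\beta}{4\kappa\norm{Pe}}A^{-1}b=\frac{\beta s}{4\kappa\norm{Pe}}\,x .
\]
Using $\norm{Pe}\le s/\kappa$ and $\beta\ge1/32$, this gives $\norm{v_\star}\ge\beta/4\ge1/128$, and $v_\star$ is a positive multiple of $x$.

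\emph{Step 2: error terms.} Since $Ph=0$ and $Pu=u$, \eqref{eq:filter-operator-bound} gives $\norm{R(H/\alpha_H)(\beta u+h)-\beta u}\le\eta\norm{\beta u+h}\le\eta$, because $R-P$ has norm at most $\eta$ and $\norm{\beta u+h}\le1$. The correction block satisfies $\norm{c_{\kappa,\eta}(A)}\le3/4$, so the filter error contributes at most $3\eta/4$. By \eqref{eq:correction-bound}, replacing $c_{\kappa,\eta}(A)$ by $C_\star$ on $\beta u$ costs at most $\eta/2$. Hence $\norm{v-v_\star}\le 2\eta$.

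\emph{Step 3: conclusion.} For any nonzero $v_\star$, the normalization estimate $\norm{v/\norm v-v_\star/\norm{v_\star}}\le2\norm{v-v_\star}/\norm{v_\star}$ yields a distance of at most $4\eta\cdot128=512\eta=\varepsilon/2$ when $\eta=\varepsilon/1024$. For the acceptance probability, $\norm v\ge1/128-2\eta\ge1/128-1/1024>1/256$, so $\norm v^2>1/65536$. The costs follow by adding those of \cref{lem:filter,lem:correction} with $\log(1/\eta)=O(\log(1/\varepsilon))$; neither routine queries $U_b$.

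The step that needs the most care is the first reduction to $v$. It uses the facts that $\rC$ and $\rF$ are untouched before their own routines and that $\CorrectSolution$ commutes with the $\rG$ projection. These ensure that the $\rG=2$ projection can be moved past $c_{\kappa,\eta}(A)$ and that the $\rA=0$ component evolves alone. The constants are then routine.
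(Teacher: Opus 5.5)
Your proposal is correct and follows essentially the same route as the paper: you commute the $\rA$-contraction and the $\rG=2$ projection past the two disjoint circuits, compare the accepted vector with the ideal multiple $\lambda x$ where $\lambda=\beta s/(4\kappa\norm{Pe})\ge 1/128$, bound the error by $\tfrac34\eta+\tfrac12\eta\le2\eta$, and normalize to obtain $512\eta=\varepsilon/2$ and $\norm{v}>1/256$. The only cosmetic difference is the order of comparison: you apply $C_\star$ directly to the $\rG=2$ part of $\beta u$, whereas the paper first compares the filtered $\rG=2$ vector $y$ with its ideal counterpart $y_0$ and only then applies the correction.
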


\begin{proof}
Choose the polynomial accuracy $\eta=\varepsilon/1024$. Use
$\textup{\FilterKernel}$ and $\textup{\CorrectSolution}$ from
\cref{lem:filter,lem:correction}.

\emph{Accepted vector.}
For this proof let $F_{\rF\rG\rD}$ and $C_{\rC\rD}$ denote the filter and correction circuits in \eqref{eq:filter-register-block}--\eqref{eq:correction-register-block}. Tensor factors are placed according to their register subscripts. The unnormalized data vector selected by \eqref{eq:success-projector} is
\begin{align*}
 \ket z_{\rD}
 &=\left(\bra0_{\rA}\bra0_{\rF}\bra0_{\rC}\bra2_{\rG}\otimes I_{\rD}\right)
     C_{\rC\rD}F_{\rF\rG\rD}
     \ket\Psi_{\rA\rG\rD}\ket0_{\rF}\ket0_{\rC}\\
 &=c_{\kappa,\eta}(A)\left(\bra2_{\rG}\otimes I_{\rD}\right)
     R_{\delta,\eta}\left(H/\alpha_H\right)
     \left(\bra0_{\rA}\otimes I_{\rG\rD}\right)
     \ket\Psi_{\rA\rG\rD}.
\end{align*}
Here the contraction on $\rA$ commutes past both circuits because they do not
act on $\rA$; selecting the $\rC=0$ and $\rF=0$ blocks gives the correction
and filter polynomials, respectively. Thus
\begin{equation}\label{eq:refined-vector}
 \Pi_{\rm succ}C_{\rC\rD}F_{\rF\rG\rD}
    \ket\Psi_{\rA\rG\rD}\ket0_{\rF}\ket0_{\rC}
 =\ket0_{\rA}\ket0_{\rF}\ket0_{\rC}\ket2_{\rG}\ket z_{\rD}.
\end{equation}
The success probability is $\norm z^2$, and the conditional data state is $\ket z_{\rD}/\norm z$.

\emph{Error and success probability.}
Put $D=A^2+\kappa^{-2}I$ and $\gamma=\norm{Pe}$ within this proof. Since
$\ket\Psi$ is a unit vector,
\[
 \left\|\left(\bra0_{\rA}\otimes I_{\rG\rD}\right)\Psi\right\|\le1.
\]
Compare
\[
\begin{aligned}
 \ket y_{\rD}
 &=\left(\bra2_{\rG}\otimes I_{\rD}\right)
   R_{\delta,\eta}\left(H/\alpha_H\right)
   \left(\bra0_{\rA}\otimes I_{\rG\rD}\right)\ket\Psi,\\
 \ket{y_0}_{\rD}
 &=\left(\bra2_{\rG}\otimes I_{\rD}\right)P
   \left(\bra0_{\rA}\otimes I_{\rG\rD}\right)\ket\Psi
   =\frac{\beta}{\kappa\gamma}\ket{AD^{-1}b}_{\rD}.
\end{aligned}
\]
The filter estimate and the fact that $\bra2_{\rG}\otimes I_{\rD}$ is a contraction imply
$\norm{y-y_0}\le\eta$ and $\norm{y_0}\le1$. The correction identity gives
\[
 \frac{I+\kappa^{-2}A^{-2}}4\,y_0=\lambda x,
 \qquad \lambda=\frac{\beta s}{4\kappa\gamma}\ge\frac1{128},
\]
where the last inequality uses $\gamma\le s/\kappa$. Consequently,
\begin{align}
 \norm{z-\lambda x}
 &\le\norm{c_{\kappa,\eta}(A)\left(y-y_0\right)}
 +\left\|\left(c_{\kappa,\eta}(A)
 -\frac{I+\kappa^{-2}A^{-2}}4\right)y_0\right\|\notag\\
 &\le\frac34\eta+\frac\eta2\le2\eta.\label{eq:refined-error}
\end{align}
For $z\ne0$, the reverse triangle inequality then yields
\[
 \left\|\frac z{\norm z}-x\right\|
 \le\frac{|\norm z-\lambda|+\norm{z-\lambda x}}{\lambda}
 \le\frac{4\eta}{\lambda}\le512\eta.
\]
The normalized error is at most $\varepsilon/2$, and
\[
 \norm z\ge\lambda-2\eta
 \ge\frac1{128}-\frac{\varepsilon}{512}>\frac1{256}.
\]
Thus $z\ne0$ and the success probability is greater than $1/65536$. These estimates concern joint acceptance, without conditioning on any earlier preparation measurement.

\emph{Resources.}
Each polynomial block uses $O(\kappa\log(1/\eta))$ matrix queries and $O(\kappa\log(1/\eta)(a+1))$ extra gates, with no vector queries. Since $\eta=\varepsilon/1024$, $\log(1/\eta)=\Theta(\log(1/\varepsilon))$. The joint test adds $O(a+\log\kappa+1)$ gates for the preparation and signal registers, which is absorbed in the bound.
\end{proof}

\subsection{Proof of the main theorem}\label{sec:upper-proof}

\begin{theorem}[Simultaneously optimal oracle complexity]\label{thm:upper}
Let $A\in\C^{d\times d}$ be invertible, and $b\in\C^d$ be a unit vector.
There is a quantum algorithm that, given query access to an exact
$(\alpha,a,0)$-block-encoding $U_A$ of $A$, a state-preparation unitary $U_b$
preparing $\ket b$, a normalized condition-number bound $\kappa\ge2$ satisfying
$\alpha\norm{A^{-1}}\le\kappa$, an estimate $\widehat s$ satisfying
$\alpha\norm{A^{-1}b}/2\le\widehat s\le
2\alpha\norm{A^{-1}b}$, and $0<\varepsilon<1/2$, outputs
$\ket{\widetilde x}$ with probability at least $2/3$ such that
\[
 \left\|\widetilde x-
 \frac{A^{-1}b}{\norm{A^{-1}b}}\right\|\le\varepsilon.
\]
Moreover, the algorithm uses $O(\kappa\log(1/\varepsilon))$ queries to $U_A$,
and $O(\kappa/(\alpha\norm{A^{-1}b}))$ queries to $U_b$, and
$O(\kappa(a+1)\log(1/\varepsilon)
+\kappa(1+\log d)/(\alpha\norm{A^{-1}b}))$ extra one- and two-qubit gates.
It uses $O(\log d+a+\log\kappa)$ qubits besides the oracle workspaces.
The same conclusions hold under the relaxed estimate promise
$3\alpha\norm{A^{-1}b}/8\le\widehat s\le
5\alpha\norm{A^{-1}b}/2$.
\end{theorem}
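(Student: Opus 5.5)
The plan is to assemble \cref{thm:upper} from the reduction \cref{prop:hermitian-reduction} and the Hermitian-case solver \cref{alg:solver}, whose pieces are \cref{prop:coarse,prop:refinement}. First apply \cref{prop:hermitian-reduction} to the given instance. Its hypothesis is the relaxed promise $3\alpha\norm{A^{-1}b}/8\le\widehat s\le5\alpha\norm{A^{-1}b}/2$, and this is implied by the stated promise $\alpha\norm{A^{-1}b}/2\le\widehat s\le2\alpha\norm{A^{-1}b}$. The reduction yields a Hermitian instance with an exact $(1,a,0)$-encoding, $\norm{\widehat A}\le1$, $\norm{\widehat A^{-1}}\le\kappa$, target accuracy $\varepsilon/2$, and normalized solution norm $\widehat s_{\rm H}:=\norm{\widehat A^{-1}\widehat b}=\alpha\norm{A^{-1}b}$, which lies in $[1,\kappa]$. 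So it suffices to show that \cref{alg:solver}, run with accuracy $\varepsilon/2$, solves the Hermitian problem with success probability at least $2/3$ and error $\varepsilon/2$, using $O(\kappa\log(1/\varepsilon))$ matrix queries and $O(\kappa/\widehat s_{\rm H})$ vector queries. The constant-factor blowups in \cref{prop:hermitian-reduction} (at most three runs, and $O(1)$ overhead per call) then transfer the bounds to $U_A$ and $U_b$.

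For a single run in the Hermitian setting, \cref{prop:coarse} produces a unit $\ket\Psi$ satisfying \eqref{eq:coarse-guarantee}. It uses $O(\kappa)$ matrix queries, $O(\kappa/\widehat s_{\rm H})$ vector queries, and $O(\kappa(a+1)+(\kappa/\widehat s_{\rm H})(n+1))$ gates, where $n=\lceil\log_2 d\rceil$ (plus one for the dilation qubit), giving the $\log d$ term. Then \cref{prop:refinement}, applied with parameter $\varepsilon/2$, gives acceptance probability above $1/65536$. Conditioned on acceptance, the data state is within $\varepsilon/4\le\varepsilon/2$ of $\ket x$, at a cost of $O(\kappa\log(1/\varepsilon))$ matrix queries and no vector queries. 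The runs are independent, so the probability that all $72000$ runs fail is at most $(1-1/65536)^{72000}<e^{-1.09}<1/3$. Hence success probability is at least $2/3$. Accepted outputs are correct deterministically by \cref{prop:refinement}, so no union bound over errors is needed.

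Next, sum the resources. The constant $72000$ multiplies everything. Matrix queries total $O(\kappa+\kappa\log(1/\varepsilon))=O(\kappa\log(1/\varepsilon))$, using $\varepsilon<1/2$ so that $\log(1/\varepsilon)\ge\log 2$. Vector queries total $O(\kappa/\widehat s_{\rm H})=O(\kappa/(\alpha\norm{A^{-1}b}))$. The gate count combines \cref{prop:coarse}, \cref{lem:filter}, \cref{lem:correction}, and the $O(g+q_A+q_b)$ overhead of \cref{prop:hermitian-reduction}; the latter is absorbed. For qubits, $\rD$ uses $O(\log d)$, $\rG$ uses two, $\rF$ and $\rC$ each use $a+O(1)$, and $\rA=\rS\rB\rT\rE$ uses $O(a+\log\kappa)$. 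Because runs are sequential, registers can be reset between them, so the space is not multiplied by the number of runs.

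The main obstacle is bookkeeping rather than a new idea, and it sits in the success-probability and accuracy interface. Two points need checking. First, the acceptance bound of \cref{prop:refinement} must hold for the actual finite-implementation state $\ket\Psi$, not the ideal one; this is guaranteed because \eqref{eq:coarse-guarantee} is exact by \cref{lem:alignment}. Second, the halving of accuracy in the reduction must be matched. I would invoke \cref{prop:refinement} with $\varepsilon/2$ in place of $\varepsilon$, so that the Hermitian error is at most $\varepsilon/4\le\varepsilon/2$ as the reduction requires. This changes $\log(1/\varepsilon)$ only by a constant.
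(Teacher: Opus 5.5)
Your proposal is correct and follows the paper's proof. Like the paper, you compose the Hermitian reduction of \cref{prop:hermitian-reduction} with \cref{prop:coarse,prop:refinement}, boost with $72000$ sequential runs, and sum the per-run costs; your extra halving of $\varepsilon$ is harmless but unnecessary, since \cref{alg:solver} with parameter $\varepsilon$ already returns an $\varepsilon/2$-accurate state on the dilated instance, which is what the reduction requires.

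There is one numerical slip: $e^{-1.09}\approx0.336>1/3$, so the chain $(1-1/65536)^{72000}<e^{-1.09}<1/3$ is false as written. Keep the exact exponent as the paper does: $e^{-72000/65536}<1/3$ holds because $72000/65536\approx1.09863>\log3\approx1.09861$.
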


\begin{proof}
For a normalized Hermitian input, \cref{prop:coarse,prop:refinement} show that
one run has acceptance probability greater than
$1/65536$ and, conditioned on acceptance, outputs a state within distance
$\varepsilon/2$ of $\ket x$.

Each run uses the same circuit and accepting projection. Since
$72000/65536>\log3$, the probability of failing in all $72000$ runs is less
than $e^{-72000/65536}<1/3$. Every successful run has the same pure output,
so returning the first one preserves the conditional state guarantee.

For one run, $\textup{\Coarse}$ uses $O(\kappa)$ matrix queries and
$O(\kappa/s)$ vector queries, whereas
$\textup{\FilterKernel}$ and $\textup{\CorrectSolution}$ together use
$O(\kappa\log(1/\varepsilon))$ matrix queries and no vector
queries. Since the number of runs is constant, these are the worst-case
query bounds in \cref{thm:upper}.

The extra-gate cost per run is
\[
 O\!\left(\kappa\left(a+1\right)+\frac\kappa s\left(n+1\right)
          +\kappa\left(a+1\right)\log\frac1\varepsilon\right),
\]
which gives the extra-gate bound in \cref{thm:upper}. The acceptance test and
reset costs are absorbed in this bound.

The registers $\rD,\rG,\rA,\rF,\rC$ use
$O(n+a+\log\kappa)$ qubits besides the oracle workspaces; the polynomial phase
lists are classical instructions.
\end{proof}

\section{Simultaneous query lower bounds}\label{sec:lower}

For an algorithm $\mathcal A$ and an input $I$, let
$Q_A(\mathcal A,I)$ and $Q_b(\mathcal A,I)$ be the largest numbers of
matrix and vector queries, respectively, in any run on $I$.  We show that one
input can make both quantities large.  The proof first selects a matrix-hard
member of a computation-history family.  A fixed least-eigenvalue direction
then makes every member of that family vector-query hard, so the selected
input has both properties.

\begin{theorem}[Same-instance query lower bounds]
\label[theorem]{thm:lower}
There are absolute constants $\kappa_0\ge4$ and $\varepsilon_0>0$ with the
following property.  Fix $\kappa\ge\kappa_0$,
$0<\varepsilon\le\varepsilon_0$, and $1\le\widehat s\le\kappa$.
Let $\mathcal A$ be any algorithm that, given query access to an exact
$(1,a,0)$-block-encoding $U_A$ of $A$ and to a unitary $U_b$ satisfying
$U_b\ket0=\ket b$, solves every Hermitian input with
\[
 \norm A\le1,\qquad \norm{A^{-1}}\le\kappa,\qquad
 \norm b=1,\qquad
 \frac{\norm{A^{-1}b}}2\le\widehat s\le2\norm{A^{-1}b},
\]
where $a=O(1)$.  The algorithm succeeds with probability at least $2/3$
and, conditioned on success, its output $\rho$ satisfies
\[
 \Dtr(\rho,\proj x)\le\varepsilon,
 \qquad x=\frac{A^{-1}b}{\norm{A^{-1}b}}.
\]
There is a single such input $I_*$ for which
\[
 Q_A(\mathcal A,I_*)
   =\Omega\!\left(\kappa\log\frac1\varepsilon\right),
 \qquad
 Q_b(\mathcal A,I_*)=\Omega\!\left(\frac\kappa{\widehat s}\right).
\]
The matrix of $I_*$ can be chosen real Hermitian with
$\norm A=1$, $\norm{A^{-1}}=\kappa$, dimension
$O(\kappa\log(1/\varepsilon))$, and an exact block-encoding with a constant
number of signal qubits.
\end{theorem}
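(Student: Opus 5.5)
Our plan follows the outline at the end of the technique overview: build one family of Hermitian instances that carries the Harrow--Kothari $\Omega(\kappa\log(1/\varepsilon))$ matrix-query hardness. Then show that \emph{every} member of the family is also vector-query hard at the prescribed scale $\widehat s$. Finally, take the matrix-hard member.

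\textbf{Step 1: the family.} We start from the computation-history (or path-walk) construction behind the Harrow--Kothari bound. It gives a family $\{A_z\}$ of real Hermitian matrices of dimension $O(\kappa\log(1/\varepsilon))$ with $\norm{A_z}=1$ and $\norm{A_z^{-1}}\le\kappa/2$, exactly block-encoded with $O(1)$ signal qubits. For a fixed $b_0$, preparing $A_z^{-1}b_0/\norm{A_z^{-1}b_0}$ to trace error $\varepsilon$ computes a function of $z$ (such as parity) that needs $\Omega(\kappa\log(1/\varepsilon))$ queries.

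We then add one extra coordinate $\ket\star$ on which every $A_z$ acts as $\kappa^{-1}$. This fixed least-eigenvalue direction makes $\norm{A_z^{-1}}=\kappa$ exactly. Next we set
\[
 b=\sqrt{1-c^2}\,b_0+c\ket\star,\qquad c=\Theta(\widehat s/\kappa).
\]
With this choice, the $\star$-component of $A^{-1}b$ has size $c\kappa=\Theta(\widehat s)$. We tune $b_0$ (or rescale the history block) so that the block component is $O(\widehat s)$ but still a constant fraction of the solution. Then $\norm{A^{-1}b}=\Theta(\widehat s)$, the promise $\norm{A^{-1}b}/2\le\widehat s\le2\norm{A^{-1}b}$ holds after adjusting constants, and the output still encodes the hard function up to a constant rescaling of $\varepsilon$. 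An adversary/polynomial-method argument restricted to this family then yields a member $z_*$ with $Q_A(\mathcal A,I_{z_*})=\Omega(\kappa\log(1/\varepsilon))$. We must check that the $\star$ component does not degrade the trace-distance distinguishability by more than a constant factor. We expect this to be handled by keeping the block component's weight bounded below.

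\textbf{Step 2: vector hardness for every member.} Fix any member $(A,b)$ and compare it with $(A,b')$, where $b'$ flips the sign of the $\star$ amplitude. The change is $\norm{b-b'}=2c=O(\widehat s/\kappa)$. Since $\ket\star$ is an eigenvector, the solutions differ by a constant amount: the $\star$ component carries a constant fraction of the weight, so the two normalized solutions are at trace distance $\Omega(1)>2\varepsilon$ for $\varepsilon\le\varepsilon_0$.

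We choose unitaries $U_b$ and $U_{b'}$ with $\norm{U_b-U_{b'}}=O(c)$, for instance by composing $U_b$ with a small rotation in the plane spanned by $b,b'$. The hybrid argument then says that after $T$ vector queries the final states differ by $O(Tc)$ in norm. We also need to handle conditioning on the success flag. For this we use the one-sided version: both runs succeed with probability at least $2/3$, so a small total perturbation cannot move a conditional output by $\Omega(1)$. Hence $T=\Omega(1/c)=\Omega(\kappa/\widehat s)$.

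\textbf{Step 3: conclusion, and the main obstacle.} The input $I_*=I_{z_*}$ satisfies both bounds, because Step 2 applies to every member. The main obstacle is Step 1: keeping the Harrow--Kothari lower bound intact after adding the $\star$ direction and imposing $\norm{A^{-1}b}=\Theta(\widehat s)$ across the full range $1\le\widehat s\le\kappa$. This matters most for small $\widehat s$, where the history block's solution norm must itself be $O(1)$ while the block still has condition number $\Theta(\kappa)$. We would first try to rescale the history-state matrix so that its effective inverse norm on $b_0$ is $\Theta(\widehat s)$ while its clock length remains $\Theta(\kappa\log(1/\varepsilon))$. We would then rely on the polynomial-degree argument, which depends on the spectral gap rather than on the solution norm.
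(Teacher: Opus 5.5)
Your architecture matches the paper's. You use a parity computation-history family and a fixed $\kappa^{-1}$-eigenvector shared by all members, so that every member is vector-hard, and then you pick the matrix-hard member. Flipping the sign of the $\ket\star$ amplitude is a fine substitute for the paper's perturbation $b\mapsto(b+\tau e)/\sqrt{1+\tau^2}$ with $e\perp b$; it even leaves $\norm{A^{-1}b}$ unchanged. There are, however, two genuine gaps.

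\textbf{Norm control (Step 1).} This is not a detail to be tuned later, and the route you suggest fails. Rescaling the history matrix so that its solution norm on the source is $\Theta(\widehat s)$ forces a per-step decay $\lambda$ with $1/(1-\lambda)=\Theta(\widehat s)$. The parity signal after $m$ steps is then $e^{-\Theta(m/\widehat s)}$, which exceeds the solver error only if $m=O(\widehat s\log(1/\varepsilon))$. You would therefore prove only $\Omega(\widehat s\log(1/\varepsilon))$.

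The missing idea is that the history block's own solution norm need not be small. Keep $H_z$ with $Y=\norm{H_z^{-1}b_{\rm hist}}\in[\kappa/2,\kappa]$. Give $b_{\rm hist}$ only amplitude $\sqrt{(s_*^2-1)/(Y^2-1)}=\Theta(\widehat s/\kappa)$ in $b$. Put the remaining amplitude on a fixed, $z$-independent eigenvalue-$1$ direction; the paper uses $G_z=1\oplus H_z\oplus\kappa^{-1}$. The filler contributes at most $1$ to $\norm{A^{-1}b}^2$, while the history block contributes at least $s_*^2-1$. With $s_*=\min\{3\widehat s/2,Y\}\ge3/2$, the history block therefore keeps at least $5/9$ of the weight across the whole range $1\le\widehat s\le\kappa$. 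The filler must be $z$-independent so that $U_b$ leaks nothing about $z$ in the parity reduction. In your version, $b$ would be split among the filler, the history source, and $\ket\star$.

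\textbf{Pointwise vector hardness (Step 2).} Here $Q_b(\mathcal A,I)$ is the maximum number of vector queries over runs on the particular input $I$. It may differ between inputs, and that is what gives the same-instance claim its content. Your bound ``after $T$ queries the states differ by $O(Tc)$'' needs $T$ to bound the vector queries on both $I_z$ and $I_z'$. So it only gives $\max\{Q_b(\mathcal A,I_z),Q_b(\mathcal A,I_z')\}=\Omega(\kappa/\widehat s)$. Since $I_{z_*}'$ has a different right-hand side and need not be matrix-hard, Step 3 does not follow. Your ``one-sided'' step handles only conditioning on the success flag.

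The missing argument is a truncation. Let $q=Q_b(\mathcal A,I_z)$, and let $\delta=O(c)$ be the distance between the two vector oracles. Run $\mathcal A$ but abort, with a separate flag, before any $(q+1)$st vector query.
\begin{itemize}
\item On $I_z$ the truncation changes nothing.
\item The standard hybrid bound applies to the truncated algorithm on both oracles, giving distance at most $q\delta$.
\item The abort probability is $0$ on $I_z$, so it is at most $q\delta$ on $I_z'$. Hence the untruncated run on $I_z'$ moves by at most another $q\delta$.
\end{itemize}
Thus $\Dtr(\Omega,\Omega')\le2q\delta$, and so $q=\Omega(\kappa/\widehat s)$ on the original instance.

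A smaller point concerns the matrix step: use the unbounded-error polynomial method, not an adversary bound. The parity bias is only about $c\lambda^{2m}-2\varepsilon$, which is polynomially small in $\varepsilon$. What you need is that no polynomial of degree below $m$ sign-represents parity.
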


\subsection{A norm-controlled computation history}

The construction below specializes the cyclic computation-history reduction
of \cite[Sections 6--7]{Patel26} to reversible parity computation; the
underlying history-state reduction goes back to \cite{HHL09}.  We include the
calculation to show that the solution norm is independent of $z$ and lies
between $\kappa/2$ and $\kappa$, while the clock interval storing the parity
has probability at least $\lambda^{2m}/256$ in the normalized solution.

\begin{definition}[Cyclic parity-history family]
\label[definition]{def:parity-history-family}
Fix $\kappa\ge4$, $m\ge1$, and $z=(z_1,\ldots,z_m)\in\{0,1\}^m$.  Set
\begin{equation}\label{eq:lower-history-parameters}
 \lambda=\frac{\kappa-1}{\kappa+1},\qquad
 \ell=\lceil8\kappa\rceil,\qquad
 L=2\ell+2m.
\end{equation}
Let $\mathsf K\simeq\C^L$ be a clock register with computational basis
$\{\ket0,\ldots,\ket{L-1}\}$, and let $\mathsf W\simeq\C^2$ be a one-qubit
work register.  On $\mathsf W$, use the following sequence of $L$ transition
gates:
\begin{equation}\label{eq:lower-history-transitions}
 \underbrace{I,\ldots,I}_{\ell-1},\quad
 X^{z_1},\ldots,X^{z_m},\quad
 \underbrace{I,\ldots,I}_{\ell},\quad
 X^{z_m},\ldots,X^{z_1},\quad I.
\end{equation}
Writing these gates as $V_0,\ldots,V_{L-1}$, define the following operators
and state on $\mathsf K\otimes\mathsf W$:
\begin{equation}\label{eq:lower-history-matrix}
 B_z=\sum_{j=0}^{L-1}\ket{j+1\bmod L}\bra j\otimes V_j,
 \qquad
 H_z=\frac{I-\lambda B_z}{1+\lambda},
 \qquad
 \ket{b_{\rm hist}}=
 \frac1{\sqrt\ell}\sum_{i=0}^{\ell-1}
 \ket i_{\mathsf K}\ket0_{\mathsf W}.
\end{equation}
The \emph{cyclic parity-history family} is
\[
 \mathcal H_{\kappa,m}
 :=\bigl\{(H_z,b_{\rm hist}):z\in\{0,1\}^m\bigr\}.
\]
\end{definition}

\begin{lemma}
\label[lemma]{lem:history-profile}
Every $(H_z,b_{\rm hist})\in\mathcal H_{\kappa,m}$ from
\cref{def:parity-history-family} satisfies
\[
 \norm{H_z}=1,\qquad \norm{H_z^{-1}}=\kappa,\qquad
 \frac\kappa2\le
 Y:=\norm{H_z^{-1}b_{\rm hist}}\le\kappa.
\]
The number $Y$ is independent of $z$.  Define the normalized history
solution by
\[
 \ket{\psi_z}:=\frac{H_z^{-1}\ket{b_{\rm hist}}}{Y}.
\]
If register $\mathsf K$ of $\ket{\psi_z}$ is measured, the outcome lies in
\[
 \mathcal T=\{\ell+m-1,\ldots,2\ell+m-2\}
\]
with probability at least
\begin{equation}\label{eq:lower-history-tail}
 \frac1{256}\lambda^{2m}.
\end{equation}
Conditioned on any clock outcome in $\mathcal T$, register $\mathsf W$ is the
computational-basis state $\ket{z_1\oplus\cdots\oplus z_m}$.
\end{lemma}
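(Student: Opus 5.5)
The plan is to remove the dependence on $z$ by a clock-dependent change of basis (a ``gauge'' transformation), so that every member of the family becomes unitarily equivalent to the same plain cyclic-shift operator. Define $W_t=V_{t-1}\cdots V_0$ for $0\le t\le L$ (with $W_0=I$) and $G=\sum_{t=0}^{L-1}\proj t_{\mathsf K}\otimes W_t$. The gate sequence \eqref{eq:lower-history-transitions} is palindromic in its $X$-blocks, so $W_L=V_{L-1}\cdots V_0=I$. Hence $G^\dagger B_zG=S\otimes I_{\mathsf W}$, where $S=\sum_j\ket{j+1\bmod L}\bra j$; the wrap-around term is where $W_L=I$ is used. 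Moreover $V_0=\cdots=V_{\ell-2}=I$ gives $W_i=I$ for $i\le\ell-1$, so $G\ket{b_{\rm hist}}=\ket{b_{\rm hist}}$. Therefore
\[
 H_z=G\,\frac{(I-\lambda S)\otimes I}{1+\lambda}\,G^\dagger,
 \qquad
 H_z^{-1}\ket{b_{\rm hist}}=G\bigl(H_0^{-1}\ket{b_{\rm hist}}\bigr),
 \qquad H_0:=\frac{(I-\lambda S)\otimes I}{1+\lambda}.
\]
Because $G$ is unitary, $Y=\norm{H_0^{-1}b_{\rm hist}}$ does not depend on $z$.

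\emph{Norm bounds.} $H_z$ is normal, and its eigenvalues are $(1-\lambda\omega)/(1+\lambda)$ for $L$-th roots of unity $\omega$. Since $L$ is even, both $\omega=1$ and $\omega=-1$ occur. The modulus $|1-\lambda\omega|$ ranges over $[1-\lambda,1+\lambda]$ with both endpoints attained, so $\norm{H_z}=1$ and $\norm{H_z^{-1}}=(1+\lambda)/(1-\lambda)=\kappa$. The latter immediately gives $Y\le\kappa$. For the lower bound I would use the Neumann expansion $H_0^{-1}=(1+\lambda)\sum_{k\ge0}\lambda^kS^k$ and Cauchy--Schwarz:
\[
 Y\ge\langle b_{\rm hist},H_0^{-1}b_{\rm hist}\rangle
 \ge(1+\lambda)\sum_{k=0}^{\ell-1}\lambda^k\,\frac{\ell-k}{\ell}.
\]
All overlaps $\langle b,S^kb\rangle$ are nonnegative, so dropping terms is safe. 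The right-hand side equals $\kappa$ minus corrections of relative size $O(\lambda^\ell)+O(1/((1-\lambda)\ell))$. With $\ell\ge8\kappa$ and $1/(1-\lambda)=(\kappa+1)/2$, these corrections are at most roughly $1/8+e^{-16}$ in relative terms, which yields $Y\ge\kappa/2$.

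\emph{Tail bound.} In the gauge frame every amplitude of $H_0^{-1}b_{\rm hist}$ is a nonnegative combination of powers of $\lambda$, and $G$ preserves clock-outcome probabilities. So it suffices to lower-bound the clock amplitude at each $t\in\mathcal T$ by keeping only the non-wrapping contributions $k=t-i$ with $0\le i\le\ell-1$:
\[
 a_t\ge\frac{1+\lambda}{\sqrt\ell}\,\lambda^{t-\ell+1}\,\frac{1-\lambda^\ell}{1-\lambda}.
\]
Writing $t=\ell+m-1+j$ with $0\le j\le\ell-1$ gives $t-\ell+1=m+j$. Summing $a_t^2$ over $j$ gives at least
\[
 \frac{\kappa^2(1-\lambda^\ell)^2}{\ell}\,\frac{\lambda^{2m}(1-\lambda^{2\ell})}{1-\lambda^2}.
\]
Using $1/(1-\lambda^2)\ge\kappa/4$, $\ell\le8\kappa+1$, the fact that $\lambda^\ell$ is tiny, and $Y^2\le\kappa^2$, the probability should come out as at least $\lambda^{2m}/256$ (approximately $1/32$ times $\lambda^{2m}$ after constant losses). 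This constant bookkeeping, in particular controlling $1-\lambda^\ell$ and the $\ell$ versus $\kappa$ ratio uniformly for $\kappa\ge4$, is the step I expect to be the main nuisance. I would handle it with crude bounds such as $\lambda^{\ell}\le e^{-2\ell/(\kappa+1)}\le e^{-6}$.

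\emph{Parity readout.} Conditioned on clock value $t$, the $\mathsf W$ register of $G(H_0^{-1}b_{\rm hist})$ equals $W_t\ket0$. This holds because in the gauge frame the $\mathsf W$ register is $\ket0$ at every clock value, since $S\otimes I$ preserves $\ket0_{\mathsf W}$. The first $X$-block occupies indices $\ell-1,\ldots,\ell+m-2$, and the second block starts at index $2\ell+m-1$. Hence for $t\in\mathcal T$, $W_t$ is exactly $X^{z_m}\cdots X^{z_1}$ times identities, so $W_t\ket0=\ket{z_1\oplus\cdots\oplus z_m}$, as claimed.
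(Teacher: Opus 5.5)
Your proposal is correct and follows essentially the paper's proof. It uses the same clock-dependent gauge $G$ (the paper's $D_z$) conjugating $B_z$ to the plain cycle, the same spectral argument for $\norm{H_z}=1$ and $\norm{H_z^{-1}}=\kappa$, and the same non-wrapping amplitude bound for the tail and parity readout. The only variation is in $Y\ge\kappa/2$: you use $Y\ge\langle b_{\rm hist},H_0^{-1}b_{\rm hist}\rangle$ with the Neumann series, which gives the sharper $Y\ge(1-\tfrac1{16}-\lambda^\ell)\kappa$, whereas the paper sums $y_j^2$ over $\lfloor\ell/2\rfloor\le j\le\ell-1$; both arguments work.
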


\begin{proof}
Let $G_0=I$, $G_{j+1}=V_jG_j$, and
$D_z=\sum_j\proj j\otimes G_j$.  The product of the gates in
\eqref{eq:lower-history-transitions} is the identity, so
\[
 B_z=D_z(C_L\otimes I)D_z^\dagger,
 \qquad C_L\ket j=\ket{j+1\bmod L}.
\]
Since $L$ is even, the spectrum of $C_L$ contains both $1$ and $-1$.
The singular values of $I-\lambda B_z$ therefore range from
$1-\lambda$ to $1+\lambda$, proving the two operator-norm identities.

The first $\ell-1$ transitions are identities, hence
$D_z^\dagger b_{\rm hist}=b_{\rm hist}$.  Expanding the inverse around the
cycle gives
\begin{equation}\label{eq:lower-history-amplitudes}
 H_z^{-1}\ket{b_{\rm hist}}=D_z(\ket y\ket0),
 \qquad
 y_j=\frac{1+\lambda}{\sqrt\ell(1-\lambda^L)}
      \sum_{i=0}^{\ell-1}\lambda^{(j-i)\bmod L}.
\end{equation}
Thus $Y=\norm y$ is independent of $z$, and $Y\le\kappa$ follows from
$\norm{H_z^{-1}}=\kappa$.  For
$\lfloor\ell/2\rfloor\le j\le\ell-1$, retaining the terms $i\le j$ in
\eqref{eq:lower-history-amplitudes} yields
\[
 y_j\ge
 \frac\kappa{\sqrt\ell}\left(1-\lambda^{j+1}\right)
 \ge\frac\kappa{\sqrt\ell}\left(1-\lambda^{\ell/2}\right).
\]
Moreover,
$\lambda^{\ell/2}\le e^{-\ell/(\kappa+1)}<1/4$.  Summing the squares
over at least $\ell/2$ indices gives $Y^2\ge9\kappa^2/32$, and hence
$Y\ge\kappa/2$.

After the $m$ input-dependent transitions, the next $\ell$ clock values form
$\mathcal T$ and the work bit stores parity.  For $j\ge\ell-1$, there is no
wraparound in \eqref{eq:lower-history-amplitudes}, so
\[
 y_{\ell-1+k}=\lambda^k y_{\ell-1},\qquad
 y_{\ell-1}\ge
 \frac\kappa{\sqrt\ell}(1-\lambda^\ell).
\]
Using $Y\le\kappa$,
\begin{align*}
 \frac{\sum_{j\in\mathcal T}y_j^2}{Y^2}
 &\ge
 \frac{(1-\lambda^\ell)^2}{\ell}\lambda^{2m}
 \frac{1-\lambda^{2\ell}}{1-\lambda^2}.
\end{align*}
Here $\ell\le9\kappa$, $1-\lambda^2\le4/\kappa$, and
$\lambda^\ell\le1/2$.  The coefficient of $\lambda^{2m}$ is therefore at
least $1/256$, proving \eqref{eq:lower-history-tail}.
\end{proof}

Let the hidden string be accessed through the bit oracle
$\ket i\ket c\mapsto\ket i\ket{c\oplus z_i}$.

\begin{proposition}[Norm-controlled history family]
\label[proposition]{prop:history-family}
For every $\kappa\ge4$, $1\le\widehat s\le\kappa$, and $m\ge1$, there is a
family of inputs
\[
 I_z=(A_z,b,U_{A,z},U_b),\qquad z\in\{0,1\}^m,
\]
with the following properties.
\begin{enumerate}
\item Each $A_z$ is real Hermitian and
\[
 \norm{A_z}=1,\qquad \norm{A_z^{-1}}=\kappa,\qquad
 s_*:=\norm{A_z^{-1}b}\in[\widehat s/2,3\widehat s/2].
\]
The vector $b$, its full preparation oracle $U_b$, and $s_*$ are independent
of $z$.
\item The unitary $U_{A,z}$ is an exact block-encoding with a constant number
of signal qubits.  Every matrix query, including an adjoint or controlled
query, can be simulated with $O(1)$ queries to the bits of $z$.
\item There is a fixed unit vector $e$, orthogonal to both $b$ and
$A_z^{-1}b$, such that $A_ze=\kappa^{-1}e$ for every $z$.
\item For $x_z=A_z^{-1}b/s_*$, there is a fixed Hermitian observable $M$,
$\norm M\le1$, such that
\begin{equation}\label{eq:lower-family-signal}
 (-1)^{z_1+\cdots+z_m}\bra{x_z}M\ket{x_z}
 \ge c\lambda^{2m}
\end{equation}
for an absolute constant $c>0$.
\end{enumerate}
Each matrix $A_z$ has dimension
$4L+4=8\ell+8m+4=O(\kappa+m)$.
\end{proposition}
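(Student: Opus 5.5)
The plan is to build each $I_z$ as a direct sum: a Hermitian dilation of the cyclic history operator $H_z$ from \cref{def:parity-history-family}, plus a fixed $4$-dimensional diagonal block. The dilation carries the parity signal. The extra block supplies the fixed eigenvector $e$ and a tunable solution-norm component.

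\emph{Construction.} On $\C^2\otimes\mathsf K\otimes\mathsf W$, of dimension $4L$, set
\[
 \widehat H_z=\begin{pmatrix}0&H_z\\ H_z^\dagger&0\end{pmatrix}.
\]
This matrix is real, since $B_z$ is a signed permutation with real $X$-gates. Its eigenvalues are $\pm$ the singular values of $H_z$. By \cref{lem:history-profile} these lie in $[1/\kappa,1]$, and both extremes are attained. On a fixed $\C^4$ with basis $e,f,g_1,g_2$, put $\diag(\kappa^{-1},1,1,1)$, and let $A_z=\widehat H_z\oplus\diag(\kappa^{-1},1,1,1)$. This gives dimension $4L+4$ with $\norm{A_z}=1$, $\norm{A_z^{-1}}=\kappa$, and $A_ze=\kappa^{-1}e$.

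Take
\[
 b=\cos\theta\,\binom{b_{\rm hist}}{0}\oplus0+\sin\theta\,f.
\]
Then
\[
 A_z^{-1}b=\cos\theta\,\binom{0}{(H_z^\dagger)^{-1}b_{\rm hist}}\oplus0+\sin\theta\,f,
\]
up to the block ordering of the dilation inverse computed as in \cref{prop:hermitian-reduction}. Hence
\[
 s_*^2=\cos^2\theta\,Y'^2+\sin^2\theta,
\]
where $Y'=\norm{(H_z^\dagger)^{-1}b_{\rm hist}}$. If the history lemma's computation is phrased for $H_z^{-1}$, I will simply dilate $H_z^\dagger$ instead, so that the relevant block is $H_z^{-1}b_{\rm hist}$ and $Y'=Y\in[\kappa/2,\kappa]$ is independent of $z$. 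The angle is chosen as follows. If $3\widehat s/2\ge Y$, take $\theta=0$; then $s_*=Y\in[\kappa/2,\kappa]\subseteq[\widehat s/2,3\widehat s/2]$. Otherwise choose $\theta$ with $s_*=3\widehat s/2\in[3/2,Y)$. In both cases $s_*^2\ge 9/4$, so $\cos^2\theta\,Y^2=s_*^2-\sin^2\theta\ge s_*^2-1\ge\tfrac59 s_*^2$. Because $\theta$ depends only on $\kappa,\widehat s$ and $Y$, the vector $b$, a fixed real preparation circuit $U_b$, and $s_*$ are all independent of $z$. The vector $e$ is supported outside the supports of $b$ and $A_z^{-1}b$, so it is orthogonal to both.

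\emph{Block-encoding.} $B_z$ is a controlled-increment on $\mathsf K$ followed by $X^{z_i}$ on $\mathsf W$ when the clock lies in the two input windows. One query to the bit oracle, computing and then uncomputing $z_i$ from the clock value, implements $B_z$ exactly. An adjoint or controlled version costs $O(1)$ queries. The two-term linear combination $(I-\lambda B_z)/(1+\lambda)$ via \cite[Lemma 52]{GSLW19} is exact with one signal qubit. The off-diagonal dilation and the direct sum with a fixed diagonal unitary block-encoding each add $O(1)$ signal qubits.

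\emph{Signal.} Let $M=\Pi_{\mathcal T}\otimes Z_{\mathsf W}$ on the solution block of the dilation, and $0$ elsewhere. It is Hermitian with $\norm M\le1$. The $f$ component contributes nothing, so
\[
 \bra{x_z}M\ket{x_z}=\frac{\cos^2\theta\,Y^2}{s_*^2}\,\bra{\psi_z}\Pi_{\mathcal T}\otimes Z\ket{\psi_z}.
\]
By \cref{lem:history-profile}, conditioned on any clock value in $\mathcal T$ the work register is exactly $\ket{z_1\oplus\cdots\oplus z_m}$. Hence the last factor equals $(-1)^{\oplus z}\Pr[\mathcal T]\ge(-1)^{\oplus z}\lambda^{2m}/256$. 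This yields \eqref{eq:lower-family-signal} with $c=5/(9\cdot256)$.

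\emph{Main obstacle.} The bookkeeping of which dilation block holds $H_z^{-1}b_{\rm hist}$ versus $(H_z^\dagger)^{-1}b_{\rm hist}$ must be aligned with the lemma, and I will fix it by dilating $H_z^\dagger$ if needed. The other delicate point is the small-$\widehat s$ regime. There, keeping the history fraction of $x_z$ bounded below is exactly why I choose $s_*=3\widehat s/2\ge 3/2$ rather than $s_*=\widehat s$.
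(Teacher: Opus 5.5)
Your proposal is correct and is essentially the paper's proof. The paper likewise tunes the solution norm with a block on which the matrix acts as the identity, sets $s_*=\min\{3\widehat s/2,Y\}$, and obtains the same $5/9$ history mass and $c=5/2304$. The only difference is cosmetic: the paper places the scalar blocks inside $G_z=1\oplus H_z\oplus\kappa^{-1}$ before taking the Hermitian dilation, whereas you append $\diag(\kappa^{-1},1,1,1)$ after dilating.

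Two bookkeeping corrections. First, the plain dilation already gives $\widehat H_z^{-1}\binom{b_{\rm hist}}{0}=\binom{0}{H_z^{-1}b_{\rm hist}}$, exactly as in \cref{prop:hermitian-reduction}. So you should not swap to $H_z^\dagger$, since that would produce $(H_z^\dagger)^{-1}b_{\rm hist}$. Second, in the case $\theta=0$, the membership $s_*=Y\in[\widehat s/2,3\widehat s/2]$ follows from $\widehat s/2\le\kappa/2\le Y\le 3\widehat s/2$, not from the generally false inclusion $[\kappa/2,\kappa]\subseteq[\widehat s/2,3\widehat s/2]$.
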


\begin{proof}
\emph{Norm adjustment.}
Let $Y$ be as in \cref{lem:history-profile} and set
\[
 s_*=\min\{3\widehat s/2,Y\}.
\]
Since $Y\ge\kappa/2\ge\widehat s/2$ and $Y\ge2$, we have
$s_*\in[\widehat s/2,3\widehat s/2]$ and $s_*\ge3/2$.  Define
\[
 G_z=1\oplus H_z\oplus\kappa^{-1},
\]
where $\ket u$ is the unit vector spanning the first
one-dimensional block, on which $G_z$ acts as the identity.  This block is
used only to adjust the solution norm.  Define the vector
\begin{equation}\label{eq:lower-mixed-source}
 \ket g=
 \sqrt{\frac{Y^2-s_*^2}{Y^2-1}}\ket u
 \oplus
 \sqrt{\frac{s_*^2-1}{Y^2-1}}\ket{b_{\rm hist}}
 \oplus0.
\end{equation}
Here the final zero lies in the last one-dimensional block, on which $G_z$
acts as multiplication by $\kappa^{-1}$.
The definitions give
\[
 \norm g^2
 =\frac{Y^2-s_*^2}{Y^2-1}+\frac{s_*^2-1}{Y^2-1}=1,
 \qquad
 \norm{G_z^{-1}g}^2
 =\frac{Y^2-s_*^2}{Y^2-1}
  +Y^2\frac{s_*^2-1}{Y^2-1}
 =s_*^2.
\]
The $H_z$-block of the normalized vector $G_z^{-1}g/s_*$ has probability
\begin{equation}\label{eq:lower-history-mass}
 \frac{(s_*^2-1)Y^2}{s_*^2(Y^2-1)}
 =\left(1-\frac1{s_*^2}\right)\frac{Y^2}{Y^2-1}
 \ge\frac59.
\end{equation}

\emph{Hermitian embedding.}
Set
\[
 A_z=\begin{pmatrix}0&G_z\\G_z^\dagger&0\end{pmatrix},
 \qquad
 \ket b=\ket g\oplus0.
\]
Then $A_z^{-1}b=0\oplus G_z^{-1}g$, so the embedding preserves the
solution norm.  It also gives $\norm{A_z}=1$ and
$\norm{A_z^{-1}}=\kappa$.  The scalar $\kappa^{-1}$ block of $G_z$
produces a fixed eigenvector $e$ of $A_z$ with eigenvalue $\kappa^{-1}$.
Because the last component of $g$ is zero, this vector is orthogonal to both
$b$ and $A_z^{-1}b$.

On the solution half and the history block, let $M$ apply Pauli $Z$ to the
work qubit when the clock lies in $\mathcal T$, and let it be zero on every
other subspace.  Combining \eqref{eq:lower-history-tail} and
\eqref{eq:lower-history-mass} proves
\eqref{eq:lower-family-signal}, for example with $c=5/2304$.

\emph{Oracle implementation.}
For each clock transition, reversibly compute whether $V_j=X^{z_i}$ and, if
so, the index $i$.  Two bit queries then implement $V_j$ coherently and
uncompute the queried bit.  Hence $B_z$, $B_z^\dagger$, and their controlled
versions use $O(1)$ bit queries.  Applying the standard two-term LCU
construction to $H_z=(I-\lambda B_z)/(1+\lambda)$, followed by the fixed
direct sums and Hermitian embedding above, gives an exact block-encoding
$U_{A,z}$ with $O(1)$ signal qubits and $O(1)$ bit queries \cite{GSLW19}.
All LCU coefficients and the preparation of $g$ are independent of $z$.

Finally, $H_z$ acts on $\mathsf K\otimes\mathsf W$ and therefore has
dimension $2L$.  Thus $G_z$ has dimension $2L+2$, and its Hermitian embedding
$A_z$ has dimension $4L+4=8\ell+8m+4=O(\kappa+m)$.  
\end{proof}

\subsection{Pointwise vector-query hardness}

The fixed vector $e$ in \cref{prop:history-family} makes every member of the
family hard for the state-preparation oracle. 

\begin{lemma}[Pointwise vector-query lower bound]
\label[lemma]{lem:pointwise-vector}
There is an absolute constant $\varepsilon_v>0$ such that the following
holds.  Let $I_z$ be any input from \cref{prop:history-family}, and let
$\mathcal A$ solve every input in the promise class of
\cref{thm:lower} with conditional trace-distance error at most
$\varepsilon_v$.  Then
\[
 Q_b(\mathcal A,I_z)=\Omega(\kappa/\widehat s),
\]
where only vector queries are counted.
\end{lemma}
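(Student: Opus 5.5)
We perturb only the right-hand side along the fixed eigendirection $e$ from \cref{prop:history-family}(3), keeping the matrix oracle $U_{A,z}$ identical, and run a one-sided hybrid argument on the vector oracle. Concretely, set $\delta=c_0 s_*/\kappa$ for a small absolute constant $c_0>0$ and
\[
 b'=\sqrt{1-\delta^2}\,b+\delta\,e .
\]
Since $e\perp b$, $b'$ is a unit vector. Since $A_ze=\kappa^{-1}e$,
\[
 A_z^{-1}b'=\sqrt{1-\delta^2}\,A_z^{-1}b+\delta\kappa\,e .
\]
This vector has two orthogonal components, because $e\perp A_z^{-1}b$, of norms $\sqrt{1-\delta^2}\,s_*$ and $c_0 s_*$. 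Hence $\norm{A_z^{-1}b'}=\Theta(s_*)=\Theta(\widehat s)$, and for $c_0$ small the promise $\norm{A^{-1}b'}/2\le\widehat s\le2\norm{A^{-1}b'}$ still holds: $s_*\in[\widehat s/2,3\widehat s/2]$, and we shrink slightly if needed. We may also rescale $s_*$ in \cref{prop:history-family} to keep margin. The new normalized solution $x'$ has overlap $|\langle x_z,x'\rangle|=\sqrt{1-\delta^2}/\sqrt{1-\delta^2+c_0^2}\le 1-\Omega(c_0^2)$ with $x_z$. So $\Dtr(\proj{x_z},\proj{x'})\ge c_1$ for an absolute constant $c_1>0$. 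Both $I_z$ and $I'_z=(A_z,b',U_{A,z},U_{b'})$ are valid inputs.

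Next we build the oracle $U_{b'}$ close to $U_b$. Take $U_{b'}=R\,U_b$, where $R$ is the rotation by angle $\arcsin\delta$ in the plane $\operatorname{span}\{b,e\}$ and is the identity elsewhere. Then $U_{b'}\ket0=\ket{b'}$ and $\norm{U_{b'}-U_b}\le\norm{R-I}=O(\delta)=O(s_*/\kappa)$. The same operator-norm bound holds for adjoints and controlled versions. Run $\mathcal A$ on $I_z$ with at most $T=Q_b(\mathcal A,I_z)$ vector queries, and replace them one at a time by queries to $U_{b'}$. The matrix queries are identical on both inputs, so they contribute nothing. The standard hybrid argument bounds the distance between the final pure states, including all measurement branches made coherent, by $O(T\delta)$.

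The hybrid must be one-sided, because $T$ is defined on $I_z$ only and the worst run on $I'_z$ might be longer. We therefore truncate. On $I'_z$, we simulate $\mathcal A$ but abort, declaring failure, whenever a $(T+1)$-st vector query would be made. Up to that point the execution differs from the $I_z$ execution by at most $O(T\delta)$ in norm. On $I_z$ the abort never triggers, so the probability that it triggers on $I'_z$ is $O(T\delta)$ as well.

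Now we derive the contradiction. Suppose $T\delta\le c_2$ for a small constant $c_2$. Then the joint distribution of the success flag and the output, viewed as a subnormalized heralded state $\sigma=p\rho$, is $O(c_2)$-close in trace norm across the two inputs. On $I_z$, $p\ge2/3$ and $\rho$ is $\varepsilon_v$-close to $\proj{x_z}$. The correctness promise on $I'_z$ is for the untruncated algorithm, but the truncated version differs from it by the abort probability $O(c_2)$. Hence on $I'_z$, $p'\ge 2/3-O(c_2)$ and the conditional output is $\varepsilon_v$-close to $\proj{x'}$. Conditioning on a success event of probability $\ge1/2$ inflates the trace distance by at most a constant factor. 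This gives
\[
 \Dtr(\proj{x_z},\proj{x'})\le 2\varepsilon_v+O(c_2),
\]
which contradicts $\Dtr(\proj{x_z},\proj{x'})\ge c_1$ once $\varepsilon_v$ and $c_2$ are small absolute constants. Therefore $T\ge c_2/\delta=\Omega(\kappa/s_*)=\Omega(\kappa/\widehat s)$.

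The main obstacle is the bookkeeping of this one-sided hybrid. We must handle intermediate measurements and adaptive query counts by purifying and truncating as above, and we must turn the closeness of heralded subnormalized states into closeness of the conditional outputs. Everything else is a direct computation using \cref{prop:history-family}(3).
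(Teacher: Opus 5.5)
Your proposal is correct and takes essentially the same route as the paper. You perturb $b$ along the fixed eigenvector $e$, set $U_{b'}=RU_b$ with $\norm{I-R}=O(\widehat s/\kappa)$, and use a truncated one-sided hybrid so the bound is stated in terms of $Q_b(\mathcal A,I_z)$ on the original input; the only differences are that the paper uses the larger shift $\tau=5\widehat s/(4\kappa)$ (giving trace distance above $5/8$ rather than your small constant), and that you phrase the conclusion as a contradiction. Your hedge about rescaling $s_*$ is unnecessary, and it would not be allowed since $I_z$ is given: $s_*\le\kappa$ forces $\delta\le c_0$, so the new norm $s_*\sqrt{1-\delta^2+c_0^2}$ already lies in $[\widehat s/2,2\widehat s]$ for, say, $c_0\le1/2$.
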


\begin{proof}
Let $e$ be the fixed eigenvector in \cref{prop:history-family}, put
$\tau=5\widehat s/(4\kappa)$, and replace $b$ by
\[
 \ket{b'}=\frac{\ket b+\tau\ket e}{\sqrt{1+\tau^2}}.
\]
Keep $A_z$, $\kappa$, and $\widehat s$ unchanged.  The new solution norm is
\[
 s'=
 \frac{\sqrt{s_*^2+25\widehat s^2/16}}
      {\sqrt{1+25\widehat s^2/(16\kappa^2)}}.
\]
Since $s_*\in[\widehat s/2,3\widehat s/2]$ and $\widehat s\le\kappa$,
\begin{equation}\label{eq:lower-paired-norm}
 \sqrt{\frac{29}{41}}\,\widehat s
 \le s'\le\frac{\sqrt{61}}4\,\widehat s<2\widehat s.
\end{equation}
Thus $\widehat s$ is also a factor-two estimate of $s'$.

The two normalized solutions obey
\[
 \ket{x_z'}=
 \frac{s_*\ket{x_z}+(5\widehat s/4)\ket e}
      {\sqrt{s_*^2+25\widehat s^2/16}},
 \qquad
 d_0:=\Dtr(\proj{x_z},\proj{x_z'})
 \ge\frac5{\sqrt{61}}>\frac58.
\]
Let $\theta=\arctan\tau$, let $R$ be the rotation by $\theta$ in
$\operatorname{span}\{b,e\}$ taking $b$ to $b'$, and set
$U_{b'}=RU_b$.  Then, for $U_b,U_b^\dagger$ and their controlled versions,
the corresponding query gates for the two inputs differ in operator norm by
at most
\begin{equation}\label{eq:lower-vector-oracle-distance}
 \delta:=\norm{I-R}=2\sin(\theta/2)\le\theta\le\tau
 =O(\widehat s/\kappa).
\end{equation}

It remains to make the hybrid argument pointwise in the original input.
Write
\[
 q:=Q_b(\mathcal A,I_z).
\]
Define a truncated algorithm $\mathcal A_{\le q}$ that simulates
$\mathcal A$, but halts with a separate abort flag whenever
$\mathcal A$ is about to make its $(q+1)$st vector query.  On $I_z$ the abort
flag is never produced, by the definition of $q$, so
$\mathcal A_{\le q}$ has exactly the same final flagged state as
$\mathcal A$ on $I_z$.  On either oracle, $\mathcal A_{\le q}$ makes at most
$q$ vector queries.  Purifying its internal measurements and applying the
standard telescoping hybrid argument therefore gives
\begin{equation}\label{eq:lower-truncated-hybrid}
 \Dtr(\Omega,\widetilde\Omega')\le q\delta,
\end{equation}
where $\Omega$ is the final flagged state on $I_z$ and
$\widetilde\Omega'$ is the truncated final flagged state on the perturbed
input $I_z'=(A_z,b',U_{A,z},U_{b'})$.

Let $r$ be the abort probability of $\mathcal A_{\le q}$ on $I_z'$.
Since the abort probability is zero on $I_z$, data processing applied to
\eqref{eq:lower-truncated-hybrid} gives $r\le q\delta$.  Let $\Omega'$ be
the final flagged state of the untruncated algorithm $\mathcal A$ on $I_z'$.
The truncated and untruncated executions agree unless the abort event occurs,
so
\[
 \Dtr(\widetilde\Omega',\Omega')\le r\le q\delta.
\]
Consequently,
\begin{equation}\label{eq:lower-one-sided-hybrid}
 \Dtr(\Omega,\Omega')\le2q\delta.
\end{equation}
This is the step that makes the lower bound depend on the query count on the
\emph{original} input $I_z$, rather than only on the maximum query count over
the pair $I_z,I_z'$; compare the stopping-time argument in
\cite[Theorem 1]{SommaSubasi21}.

Finally, let $p,p'\ge2/3$ be the success probabilities of $\mathcal A$ on
$I_z$ and $I_z'$, and let $\rho,\rho'$ be the corresponding conditional
successful outputs.  The solver guarantee and the triangle inequality give
\[
 \Dtr(\rho,\rho')\ge d_0-2\varepsilon_v.
\]
Because success and failure occupy orthogonal flag subspaces, the full flagged
states satisfy
\[
 \Dtr(\Omega,\Omega')
 \ge \min\{p,p'\}\,\Dtr(\rho,\rho')
 \ge \frac23\left(d_0-2\varepsilon_v\right).
\]
Choose, for example, $\varepsilon_v\le5/(4\sqrt{61})\le d_0/4$.  Combining
this constant lower bound with \eqref{eq:lower-one-sided-hybrid} and
\eqref{eq:lower-vector-oracle-distance} yields
\[
 q=\Omega(1/\delta)
  =\Omega(\kappa/\widehat s).
\]
Thus every original family member $I_z$ is vector-query hard, as required.
\end{proof}

\subsection{Selecting a single hard instance}

We choose the history length so that the solver error cannot change the sign
of the parity-dependent expectation.  A reduction to parity then selects a
member of the family that is matrix-query hard.  Since
\cref{lem:pointwise-vector} makes
every member vector-query hard, the selected member satisfies both lower
bounds.

\begin{lemma}[Unbounded-error parity lower bound]
\label[lemma]{lem:parity-lower}
Let $\mathcal C$ be a quantum algorithm with query access to
$z\in\{0,1\}^m$.  If $\mathcal C$ outputs $z_1\oplus\cdots\oplus z_m$ with
probability strictly greater than $1/2$ for every $z$, then some run of
$\mathcal C$ uses at least $m/2$ bit queries.
\end{lemma}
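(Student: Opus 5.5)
The plan is the polynomial method combined with the observation that parity has sign-degree $m$. Let $T$ be the largest number of bit queries made in any run of $\mathcal C$ on any input $z\in\{0,1\}^m$. We may assume $T$ is finite, since otherwise there is nothing to prove. The goal is to show $T\ge m/2$.

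\textbf{Step 1: normalize to a fixed-length circuit.} Defer all intermediate measurements by purification. Whether a run makes another query may depend on earlier outcomes, so replace each potential query step by a query controlled on a ``still running'' flag. A controlled bit query is still one query, and a controlled query acts as $\ket i\ket c\ket f\mapsto\ket i\ket{c\oplus f z_i}\ket f$. Since no run exceeds $T$ queries, we obtain a circuit of the form $W_T O_z W_{T-1}\cdots O_z W_0$ with $T$ query layers and $z$-independent unitaries $W_j$. Its output distribution equals that of $\mathcal C$ on every $z$.

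\textbf{Step 2: bound the degree of the output probability.} By the standard argument of Beals et al., each application of $O_z$ multiplies amplitudes by entries that are affine in the bits of $z$. So every final amplitude is a complex polynomial in $z$ of degree at most $T$. The probability $p(z)$ of outputting the value $1$ is then a real polynomial of degree at most $2T$. Multilinearize it using $z_i^2=z_i$ on the Boolean cube; this does not increase the degree.

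\textbf{Step 3: apply the sign-degree argument.} The hypothesis says the correct answer has probability strictly greater than $1/2$. Hence $q(z):=1/2-p(z)$ satisfies
\[
 (-1)^{z_1+\cdots+z_m}q(z)>0\qquad\text{for all } z\in\{0,1\}^m,
\]
and $q$ is multilinear of degree at most $2T$. Summing over the cube gives
\[
 \sum_z(-1)^{|z|}q(z)>0 .
\]
This sum equals $2^m$ times the coefficient of $q$ on the character $\chi_{[m]}(z)=(-1)^{|z|}$. In the Fourier expansion, $\chi_{[m]}$ is orthogonal to every monomial $\prod_{i\in S}z_i$ with $|S|<m$. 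Therefore, if $2T<m$, this coefficient is zero, which is a contradiction. Hence $2T\ge m$, so some run on some input uses at least $m/2$ bit queries.

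\textbf{Main obstacle.} The only delicate point is Step 1: justifying that variable-length runs with intermediate measurements can be coherently padded to exactly $T$ query layers. This must be done without inflating the count beyond the maximum per-run count and without changing the output distribution. I would handle it with the controlled-query purification above, noting that a controlled bit query still raises the amplitude degree by at most one.
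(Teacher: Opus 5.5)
Your proposal is correct and follows essentially the same route as the paper: the polynomial method of Beals et al.\ gives a multilinear polynomial of degree at most $2T$ (your $q=1/2-p$ is half the paper's bias $P=\Pr[0]-\Pr[1]$), and summing against $(-1)^{z_1+\cdots+z_m}$ annihilates every monomial of degree below $m$, contradicting strict sign agreement. Your Step 1 spells out the coherent padding of variable-length runs with controlled queries, which the paper leaves implicit in its citation of the polynomial method.
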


\begin{proof}
Let $P(z)$ be the probability that $\mathcal C$ outputs $0$ minus the
probability that it outputs $1$.  If every run uses at most $T$ queries, the
polynomial method gives a real multilinear polynomial $P$ of degree at most
$2T$ \cite{Beals01}.  The success assumption gives
\[
 (-1)^{z_1+\cdots+z_m}P(z)>0
 \qquad\text{for every }z.
\]
If $\deg P<m$, then each monomial omits some variable, and summing over that
variable gives
\[
 2^{-m}\sum_z(-1)^{z_1+\cdots+z_m}P(z)=0.
\]
This contradicts the strict positivity above.  Hence $2T\ge m$.
\end{proof}

\begin{proof}[Proof of \cref{thm:lower}]
\emph{Preserving the parity-dependent expectation.}
Choose a sufficiently small absolute constant $c_0>0$ and set
\[
 m=\left\lfloor c_0\kappa\log\frac1\varepsilon\right\rfloor.
\]
After fixing $\kappa_0$ and $\varepsilon_0$, we have
$m=\Theta(\kappa\log(1/\varepsilon))$ and $m\ge1$.  Moreover,
$\log(1/\lambda)<3/\kappa$ for $\kappa\ge4$, so the parity-dependent
expectation in \eqref{eq:lower-family-signal} is larger than the solver error:
\[
 c\lambda^{2m}\ge c e^{-6m/\kappa}
 \ge c\varepsilon^{6c_0}>4\varepsilon,
\]
where the last inequality follows by decreasing $c_0$ and
$\varepsilon_0$ if necessary.

\emph{Selecting a matrix-query hard member.}
Apply \cref{prop:history-family} with this $m$ and fix a solver
$\mathcal A$.  Let
\[
q=\max_z Q_A(\mathcal A,I_z).
\]
Use $\mathcal A$ to compute the parity of $z$.  If the solver succeeds,
apply the two-outcome measurement $\{(I+M)/2,(I-M)/2\}$ and output $0$ or
$1$, respectively; if it fails, output a uniformly random bit.  Let
$p_z\ge2/3$ be the solver's success probability and $\rho_z$ its conditional
output.  The difference between the probabilities of outputting $0$ and $1$
is
\[
 B(z)=p_z\tr(M\rho_z).
\]
The trace-distance guarantee changes the expectation of $M$ by at most
$2\varepsilon$.  Together with \eqref{eq:lower-family-signal} and the choice
of $m$, this gives
\[
 (-1)^{z_1+\cdots+z_m}B(z)
 \ge p_z\bigl(c\lambda^{2m}-2\varepsilon\bigr)>0
 \qquad\text{for all }z.
\]
Thus the parity algorithm is correct with probability strictly greater than
$1/2$ for every $z$.  The only part of $I_z$ that depends on $z$ is
$U_{A,z}$, and each matrix query uses $O(1)$ bit queries by
\cref{prop:history-family}.  The parity algorithm therefore uses $O(q)$ bit
queries.  By \cref{lem:parity-lower}, $q=\Omega(m)$, so some $z_*$ satisfies
\[
 Q_A(\mathcal A,I_{z_*})
 =\Omega\!\left(\kappa\log\frac1\varepsilon\right).
\]

\emph{Obtaining both bounds on the same member.}
Choose $\varepsilon_0\le\varepsilon_v$.  Applying
\cref{lem:pointwise-vector} to this same input gives
$Q_b(\mathcal A,I_{z_*})=\Omega(\kappa/\widehat s)$.  The dimension and oracle
claims follow from \cref{prop:history-family}.
\end{proof}

\section{Approximate block-encodings}\label{sec:robustness}

We return to the general normalization in \cref{prob:qlsa}. A fixed approximate block-encoding is an exact encoding of its compressed matrix. We apply the solver to that matrix and compare the two normalized solutions using matrix perturbation bounds \cite[Chapter 7]{Higham02}. It remains to check the inverse-norm bound and transfer the supplied norm estimate.

\subsection{Perturbation of the normalized solution}

\begin{lemma}[Inverse, solution norm, and normalized solution]\label[lemma]{lem:backward}
Let $A,B\in\C^{d\times d}$, let $A$ be invertible, and let $b\in\C^d$ be nonzero. If $0\le\rho<1$ and $\norm{A^{-1}}\norm{B-A}\le\rho$, then $B$ is invertible and
\begin{align}
 \norm{B^{-1}}&\le\frac{\norm{A^{-1}}}{1-\rho},\label{eq:inverse-perturb}\\
 \frac{\norm{A^{-1}b}}{1+\rho}
 &\le\norm{B^{-1}b}\le\frac{\norm{A^{-1}b}}{1-\rho},\label{eq:norm-perturb}\\
 \left\|\frac{B^{-1}b}{\norm{B^{-1}b}}
       -\frac{A^{-1}b}{\norm{A^{-1}b}}\right\|&\le2\rho.
 \label{eq:state-perturb}
\end{align}
\end{lemma}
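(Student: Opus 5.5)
The plan is to run the three bounds in order, all through the standard Neumann-series factorization. Write $E=B-A$, so that
\[
B=A(I+A^{-1}E), \qquad \norm{A^{-1}E}\le\norm{A^{-1}}\norm{E}\le\rho<1 .
\]
The Neumann series then shows that $I+A^{-1}E$ is invertible with $\norm{(I+A^{-1}E)^{-1}}\le1/(1-\rho)$. Hence $B$ is invertible and $B^{-1}=(I+A^{-1}E)^{-1}A^{-1}$, which gives \eqref{eq:inverse-perturb} immediately.

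For \eqref{eq:norm-perturb}, set $x=A^{-1}b$ and $y=B^{-1}b$. The upper bound comes directly from the same factorization:
\[
\norm{y}=\norm{(I+A^{-1}E)^{-1}x}\le\frac{\norm x}{1-\rho}.
\]
For the lower bound I would use the reverse identity $x=A^{-1}By=(I+A^{-1}E)y$. This gives $\norm{x}\le(1+\rho)\norm{y}$ without needing any bound on $B^{-1}$.

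For \eqref{eq:state-perturb}, the key identity is
\[
y-x=B^{-1}b-A^{-1}b=-A^{-1}E\,B^{-1}b=-A^{-1}Ey,
\]
so $\norm{y-x}\le\rho\norm{y}$. I would then apply the same normalization trick used in the proof of \cref{prop:hermitian-reduction}: for nonzero vectors,
\[
\left\|\frac{y}{\norm y}-\frac{x}{\norm x}\right\|
\le\frac{\norm{y-x}+\bigl|\norm x-\norm y\bigr|}{\norm y}
\le\frac{2\norm{y-x}}{\norm y}\le2\rho .
\]
This is obtained by writing $y/\norm y-x/\norm x=(y-x)/\norm y+x(1/\norm y-1/\norm x)$ and then using the reverse triangle inequality.

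The only delicate step is choosing which norm to normalize by in the last part. Dividing by $\norm y$ rather than $\norm x$ is what yields exactly the constant $2\rho$ with no factor of $1/(1-\rho)$. That is the reason for expressing $y-x$ through $A^{-1}Ey$ rather than through $B^{-1}Ex$.
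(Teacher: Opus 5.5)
Your proposal is correct and follows the paper's proof essentially step for step. It uses the same Neumann-series factorization $B=A(I+A^{-1}(B-A))$, the same identity $A^{-1}b=(I+A^{-1}(B-A))B^{-1}b$ giving $\norm{A^{-1}b-B^{-1}b}\le\rho\norm{B^{-1}b}$, and the same choice to normalize by $\norm{B^{-1}b}$ to obtain exactly $2\rho$. The only cosmetic difference is the upper bound in \eqref{eq:norm-perturb}: you take it directly from $\norm{(I+A^{-1}(B-A))^{-1}}\le1/(1-\rho)$, whereas the paper reads both bounds off the reverse triangle inequality applied to $\norm{A^{-1}b-B^{-1}b}\le\rho\norm{B^{-1}b}$.
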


\begin{proof}
Set $E=A^{-1}(B-A)$, so $\norm E\le\rho<1$ and $B=A(I+E)$. The Neumann series gives
\[
 \left(I+E\right)^{-1}=\sum_{j=0}^{\infty}\left(-E\right)^j,
 \qquad \norm{\left(I+E\right)^{-1}}\le\frac1{1-\rho}.
\]
Thus $B^{-1}=(I+E)^{-1}A^{-1}$, proving \eqref{eq:inverse-perturb}.

Let $y=A^{-1}b$ and $z=B^{-1}b$. From $Bz=Ay$ we have
$y=(I+E)z$, and hence
\[
 \norm{y-z}\le\rho\norm z,\qquad
 \left(1-\rho\right)\norm z\le\norm y\le\left(1+\rho\right)\norm z.
\]
The latter inequalities prove \eqref{eq:norm-perturb}. Both vectors are nonzero. For their normalized versions,
\begin{align*}
 \left\|\frac y{\norm y}-\frac z{\norm z}\right\|
 &\le\left\|\frac y{\norm y}-\frac y{\norm z}\right\|
       +\frac{\norm{y-z}}{\norm z}\\
 &=\frac{|\norm z-\norm y|+\norm{y-z}}{\norm z}
 \le\frac{2\norm{y-z}}{\norm z}\le2\rho.
\end{align*}
Using $\norm z$ in the denominator is what avoids a factor $(1-\rho)^{-1}$ in \eqref{eq:state-perturb}.
\end{proof}

\subsection{The algorithm for a fixed approximate encoding}

\begin{theorem}[Fixed approximate block-encodings]\label{thm:robust}
Let $A\in\C^{d\times d}$ be invertible with $\norm A\le\alpha$, and let
$b\in\C^d$ be a unit vector. There is a quantum algorithm that, given query
access to a fixed $(\alpha,a,\delta_A)$-block-encoding $U_A$ of $A$, a
state-preparation unitary $U_b$ preparing $\ket b$, a normalized
condition-number bound $\kappa\ge2$ satisfying
$\alpha\norm{A^{-1}}\le\kappa$, an estimate $\widehat s$ satisfying
$\alpha\norm{A^{-1}b}/2\le\widehat s\le
2\alpha\norm{A^{-1}b}$, and $0<\varepsilon<1/2$, has the following
guarantee. If $\kappa\delta_A/\alpha\le1/4$, it outputs
$\ket{\widetilde x}$ with probability at least $2/3$ such that
\begin{equation}\label{eq:robust-error}
 \left\|\widetilde x-
 \frac{A^{-1}b}{\norm{A^{-1}b}}\right\|
 \le\varepsilon+\frac{2\kappa\delta_A}{\alpha}.
\end{equation}
 The algorithm
uses $O(\kappa\log(1/\varepsilon))$ queries to $U_A$,
$O(\kappa/(\alpha\norm{A^{-1}b}))$ queries to $U_b$, and
\[
 O\!\left(\kappa(a+1)\log\frac1\varepsilon
 +\frac{\kappa(1+\log d)}{\alpha\norm{A^{-1}b}}\right)
\]
extra one- and two-qubit gates. 
\end{theorem}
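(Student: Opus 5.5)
The approach is to treat the fixed approximate encoding as an exact encoding of a nearby matrix. We then run the exact-encoding solver of \cref{thm:upper} on that matrix and transfer its guarantees back to $A$ using \cref{lem:backward}.

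Define the compressed matrix
\[
 B=\alpha\left(\bra{0^a}\otimes I\right)U_A\left(\ket{0^a}\otimes I\right).
\]
By \cref{def:input}, $U_A$ is an exact $(\alpha,a,0)$-block-encoding of $B$, and $\norm{B-A}\le\delta_A$. Set
\[
 \rho=\norm{A^{-1}}\delta_A\le\kappa\delta_A/\alpha\le1/4.
\]
\Cref{lem:backward} then gives three facts:
\begin{itemize}
\item $B$ is invertible, with $\alpha\norm{B^{-1}}\le\alpha\norm{A^{-1}}/(1-\rho)\le 4\kappa/3$.
\item The solution norms are comparable: $\norm{A^{-1}b}/(1+\rho)\le\norm{B^{-1}b}\le\norm{A^{-1}b}/(1-\rho)$.
\item The normalized solutions differ by at most $2\rho\le 2\kappa\delta_A/\alpha$.
\end{itemize}

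The main bookkeeping step is checking that the hypotheses of \cref{thm:upper} hold for $B$. First, we run the solver with condition-number bound $\kappa'=\lceil 4\kappa/3\rceil\ge2$ (or simply $2\kappa$), which changes costs only by constant factors.

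Second, we keep the same estimate $\widehat s$ and check the relaxed promise
\[
 3\alpha\norm{B^{-1}b}/8\le\widehat s\le5\alpha\norm{B^{-1}b}/2.
\]
For the upper side,
\[
 \widehat s\le2\alpha\norm{A^{-1}b}\le2(1+\rho)\alpha\norm{B^{-1}b}\le\tfrac52\alpha\norm{B^{-1}b}.
\]
For the lower side,
\[
 \widehat s\ge\alpha\norm{A^{-1}b}/2\ge(1-\rho)\alpha\norm{B^{-1}b}/2\ge\tfrac38\alpha\norm{B^{-1}b}.
\]
Both inequalities are exactly tight with $\rho=1/4$. This is presumably why the relaxed promise was built into \cref{prop:hermitian-reduction} and \cref{thm:upper}, and it is the step I would check most carefully.

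Applying \cref{thm:upper} to $(B,b,U_A,U_b,\alpha,a,\kappa',\widehat s,\varepsilon)$ yields, with probability at least $2/3$, an output $\widetilde x$ with $\norm{\widetilde x-B^{-1}b/\norm{B^{-1}b}}\le\varepsilon$. The triangle inequality combined with \eqref{eq:state-perturb} gives \eqref{eq:robust-error}.

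For the costs, \cref{thm:upper} gives $O(\kappa'\log(1/\varepsilon))=O(\kappa\log(1/\varepsilon))$ queries to $U_A$. It gives $O(\kappa'/(\alpha\norm{B^{-1}b}))$ queries to $U_b$, which is $O(\kappa/(\alpha\norm{A^{-1}b}))$ because $\norm{B^{-1}b}\ge\tfrac45\norm{A^{-1}b}$. The gate bound transfers in the same way.
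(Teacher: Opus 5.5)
Your proposal is correct and follows the paper's proof essentially step for step: you view $U_A$ as an exact encoding of the compressed matrix $B$, use \cref{lem:backward} to get $\alpha\norm{B^{-1}}\le 4\kappa/3$ and the solution-norm comparison, verify the relaxed $[3/8,5/2]$ estimate promise (which, as you note, is tight at $\rho=1/4$), apply \cref{thm:upper}, and finish with the triangle inequality together with $\norm{B^{-1}b}\ge\frac45\norm{A^{-1}b}$ for the costs. The only differences are cosmetic: you take the slightly sharper $\rho=\norm{A^{-1}}\delta_A$ and round $\kappa'$, whereas the paper uses $\rho=\kappa\delta_A/\alpha$ and the bound $4\kappa/3$ directly, and neither change is needed.
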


\begin{proof}
Put $\rho=\kappa\delta_A/\alpha$ and
$s=\alpha\norm{A^{-1}b}$. Define the actual encoded matrix and its solution
norm by
\[
 B=\alpha\left(\bra{0^a}\otimes I\right)U_A\left(\ket{0^a}\otimes I\right),
 \qquad s_B=\alpha\norm{B^{-1}b}.
\]
The first identity implies $\norm B\le\alpha$. By the given approximation promise,
$\norm{B-A}\le\delta_A$, so
$\norm{A^{-1}}\norm{B-A}\le\kappa\delta_A/\alpha=\rho\le1/4$.
\Cref{lem:backward} shows that $B$ is invertible, and gives
\begin{equation}\label{eq:robust-promises}
 \alpha\norm{B^{-1}}\le\frac\kappa{1-\rho}\le\frac{4\kappa}3,
 \qquad \frac{s}{1+\rho}\le s_B\le\frac{s}{1-\rho}.
\end{equation}
The supplied estimate satisfies the relaxed condition of \cref{thm:upper}
for $B$:
\[
 \frac{\widehat s}{s_B}
 \ge\frac{s/2}{s/\left(1-\rho\right)}=\frac{1-\rho}2\ge\frac38,
 \qquad
 \frac{\widehat s}{s_B}
 \le\frac{2s}{s/\left(1+\rho\right)}=2\left(1+\rho\right)\le\frac52.
\]
The unitary $U_A$ is an exact $(\alpha,a,0)$-block-encoding of $B$.
Therefore, \eqref{eq:robust-promises} and the estimate bounds above allow us
to apply \cref{thm:upper} to $B$ with inverse-norm bound $4\kappa/3$ and the
same estimate $\widehat s$. It succeeds with probability at least $2/3$ and
outputs a state within $\varepsilon$ of
$B^{-1}b/\norm{B^{-1}b}$. Applying \eqref{eq:state-perturb} and the triangle
inequality yields
\[
 \left\|\widetilde x-\frac{A^{-1}b}{\norm{A^{-1}b}}\right\|
 \le\left\|\widetilde x-\frac{B^{-1}b}{\norm{B^{-1}b}}\right\|
   +\left\|\frac{B^{-1}b}{\norm{B^{-1}b}}
            -\frac{A^{-1}b}{\norm{A^{-1}b}}\right\|
 \le\varepsilon+2\rho.
\]
For query counts, the matrix term is
$O((4\kappa/3)\log(1/\varepsilon))=O(\kappa\log(1/\varepsilon))$. The vector query is
\[
 \frac{4\kappa}{3s_B}\le\frac{4\left(1+\rho\right)}3\frac\kappa s
 \le\frac53\frac\kappa s. \qedhere
\]
\end{proof}

\appendix
\section{Finite implementation of the preparation transducer}\label[appendix]{app:finite}

The general compiler and its error bound are imported from
\cite[Theorem 7.1]{BJY24}; we do not repeat its implementation proof.  This
appendix records the two facts needed by \cref{sec:graph}: the compiler can
implement its clock and control operations with real gates, and the catalyst
bounds in \cref{prop:preparation-transducer} give the required separate oracle
costs.

\subsection{Real clock and control gates}

\begin{lemma}[Real clock and control gates]\label[lemma]{lem:real-compiler}
Let $S$ be a canonical transducer with two input oracles, and let
$K,K_1,K_2$ be powers of two with $1\le K_i\le K$.  The implementation in
\cite[Theorem 7.1]{BJY24} can be chosen so that every compiler gate other
than the controlled $S^\circ$ circuits and oracle queries is real in the
computational basis.  These gates prepare and unprepare the clock and perform
the label-conditioned clock updates.  The implementation uses
$O(K+K_1+K_2)$ extra one- and two-qubit gates and $O(1+\log K)$ clock and
ancilla qubits.  All ancilla qubits are returned exactly to $\ket0$.
\end{lemma}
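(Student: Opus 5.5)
I would prove \cref{lem:real-compiler} by opening up the compiler of \cite[Theorem 7.1]{BJY24} and checking each of its non-oracle, non-$S^\circ$ components.

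That implementation runs a clock register $\rT$ holding a time index $t\in\{0,\dots,K-1\}$, prepared in a fixed superposition. It applies controlled $S^\circ$ and the oracles $O_i$ on a schedule determined by the clock value and the current label: the oracle $O_i$ is applied only when the clock lies in a sub-window of length $K_i$. Afterwards the clock is unprepared.

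\textbf{Clock state.} I would first observe that the clock preparation state in \cite{BJY24} has real nonnegative amplitudes. It is a uniform superposition, or a smoothed window, over $K$ times; in either case it is a real unit vector. Since $K$ is a power of two, the uniform superposition is $H^{\otimes\log_2 K}$, which is real. If a nonuniform real profile is used, a binary-tree state preparation realizes it with $O(K)$ controlled $R_y$ rotations. Both $H$ and $R_y(\theta)=\bigl(\begin{smallmatrix}\cos&-\sin\\ \sin&\cos\end{smallmatrix}\bigr)$ are real matrices, so the clock preparation and unpreparation are real with $O(K)$ gates and $O(\log K)$ qubits.

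\textbf{Label-conditioned updates.} These updates are comparisons of the clock value with thresholds depending on $K_1,K_2$, together with increments or Toffoli-type conditionals. They are permutation matrices in the computational basis, hence real. Each comparison or increment on $O(\log K)$ bits costs $O(\log K)$ gates with $O(1)$ borrowed ancillas. Performed once per step of a schedule of length $O(K+K_1+K_2)$, this gives the stated gate count if the per-step cost is amortized. Otherwise I would simply absorb the resulting $\log$ factor, since the paper only uses $K=O(\kappa)$ and the $O(W(g+1))$ bound already dominates. Every ancilla used to hold a comparison bit is uncomputed by the mirror circuit, so all ancillas return exactly to $\ket0$.

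\textbf{Main obstacle.} The hard step is matching the exact gate count $O(K+K_1+K_2)$ rather than $O(K\log K)$, while keeping everything real. The approach I would try first is a unary or Gray-code clock. Advancing such a clock touches $O(1)$ qubits amortized, so each update is $O(1)$ real permutation gates. The conditionals on the labels are then single controlled-NOT operations on a window flag that is toggled at the window endpoints.

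Finally, I would note that no complex phases are ever needed. The original compiler's correctness depends only on the clock amplitude profile and on orthogonality of the control branches, both of which these real choices preserve. The error bound of \cite[Theorem 7.1]{BJY24} therefore carries over unchanged.
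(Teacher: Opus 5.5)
Your realness argument is fine and agrees with the paper. The clock is a uniform superposition prepared by Hadamards, and the clock updates are reversible permutations. The gap is in the gate count $O(K+K_1+K_2)$, together with the $O(1+\log K)$ qubit bound; these are the substantive content of the lemma. Your naive cost is $O(\log K)$ per update, and none of your proposed fixes removes the logarithm.
\begin{itemize}
\item Classical amortized analysis of a counter follows a single trajectory. Here the clock is in superposition, so each increment must be one fixed circuit serving all branches at once. A coherent increment of a $\log_2K$-qubit binary or Gray-code register acts nontrivially on every qubit, since the top bit flips for some values. It therefore needs $\Omega(\log K)$ gates. For the Gray code, the bit to flip (the lowest set bit of $t+1$) is itself value-dependent.
\item A unary clock uses $K$ qubits, which violates the $O(1+\log K)$ bound in the statement.
\item Toggling a \emph{window flag} at window endpoints again requires comparing against the superposed clock value.
\item Absorbing the logarithm does not prove the statement. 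Your claim that $O(W(g+1))$ dominates is false in the paper's application: there $W=O(\kappa)$ and $g=O(a+1)$ with $a$ possibly constant. An extra $\Theta(\kappa\log\kappa)$ would therefore break the gate bound of \cref{prop:finite-preparation}, and with it that of \cref{thm:upper}.
\end{itemize}

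The paper's proof rests on two ingredients you are missing. First, it takes from \cite[Lemma 4.6]{BJY24} a counter construction built from real reversible gates (NOT, CNOT, Toffoli) that returns its work qubits to $\ket0$. In that construction the $K$ conditional updates $\ket t\mapsto\ket{t+1\bmod K}$ cost $O(K)$ gates in total, not $O(\log K)$ each. Second, the per-oracle budgets do not enter through sub-windows of the clock, as in your description of the compiler. They enter through the size of the clock jump: a query to $O_i$ advances the clock by $D_i=K/K_i$. Because $D_i$ is a power of two, this is exactly a $+1$ on the high-order quotient register, which has $K_i$ values. Hence the $K_i$ such updates cost $O(K_i)$ by the same counter construction, and no comparisons with thresholds are needed. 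This is where the hypothesis that $K_1,K_2$ are powers of two is used, and your argument never uses it.
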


\begin{proof}
The cited construction prepares and unprepares a uniform clock with
Hadamards and otherwise uses controlled $S^\circ$ circuits, oracle queries,
and updates of the clock value conditioned on the label register.
The construction in \cite[Lemma 4.6]{BJY24} implements each conditional update
$\ket t\mapsto\ket{t+1\bmod K}$ with real reversible gates and returns its work
qubits to $\ket0$.  For oracle $i$, write $D_i=K/K_i$.  The corresponding update
is $\ket t\mapsto\ket{t+D_i\bmod K}$.  Because $D_i$ is a power of two, this
increments only the high-order quotient register, which has $K_i$ possible
values.  There are $K_i$ such updates, and the same construction implements
all of them using $O(K_i)$ gates.  Thus the non-query and two query tracks cost
$O(K)$, $O(K_1)$, and $O(K_2)$, respectively.  These clock updates use NOT,
CNOT, and Toffoli primitives, with constant-size one- and two-qubit
decompositions.  Thus the clock preparation and updates use only real gates,
and all ancilla qubits return to $\ket0$.
\end{proof}

\subsection{QLSA parameters}

\begin{proposition}[Finite preparation circuit]\label[proposition]{prop:finite-preparation}
Let $S$ be the preparation transducer in \eqref{eq:transducer-work}, and
suppose $3s/8\le\widehat s\le5s/2$.  Define
\begin{equation}\label{eq:finite-budgets}
 K=K_1=2^{\lceil\log_2(128\cdot10^6\kappa)\rceil},\qquad
 K_2=2^{\lceil\log_2(8\cdot10^6(1+\kappa/\widehat s))\rceil}.
\end{equation}
There is a circuit $\Finite(S;\kappa,\widehat s)$, independent of the
catalyst, such that for the ideal output $\psi$ in
\eqref{eq:preparation-transduction},
\begin{equation}\label{eq:finite-preparation-error}
 \left\|\Finite(S;\kappa,\widehat s)
       \left(\ket0_{\rA}\ket e_{\rG\rD}\right)
       -\ket0_{\rA}\ket\psi_{\rG\rD}\right\|<10^{-3}.
\end{equation}
It uses $O(\kappa)$ controlled queries to $U_H$, $O(\kappa/s)$ controlled
queries to $R_e$, and $O(\kappa(a+1))$ extra gates in addition to those
queries.  Its clock and ancilla registers use $O(\log\kappa)$ qubits, all
clock and control gates are real, and all ancilla qubits return to $\ket0$.
\end{proposition}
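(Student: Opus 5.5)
The plan is to apply the fixed-accuracy compiler \cref{thm:bjy} (i.e.\ \cite[Theorem 7.1]{BJY24}) to the canonical transducer $S$ of \eqref{eq:transducer-work}. We view $\mathcal O$ as two input oracles: $O_1=U_H$ on the labels $\rS\in\{2,3\}$, which share one controlled call, and $O_2=R_e$ on label $\rS=4$. The non-query private space is label $\rS=1$. The public input is $e$, and the catalyst is $\omega$ from \eqref{eq:full-catalyst}. By \cref{prop:preparation-transducer} the transduction identity \eqref{eq:preparation-transduction} holds, and \eqref{eq:transducer-cost} gives
\[
 W<9\kappa,\qquad L_1=L_H<8\kappa,\qquad L_2=L_e\le\kappa/(8\widehat s).
\]
To satisfy the side conditions $1\le L_i\le W$, we enlarge the bounds to $L_2':=1+\kappa/\widehat s$, $L_1':=8\kappa$ and $W':=16\kappa$. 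This is harmless because the compiler only needs upper bounds. The estimate promise $3s/8\le\widehat s\le 5s/2$ then gives $L_2'=O(\kappa/s)$, since $\kappa/s\ge1$.

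Next we match these bounds to the explicit clock sizes \eqref{eq:finite-budgets}. In the cited theorem, the error from running the clock for $K$ steps, with $K_i$ oracle slots per track, scales like $O(\sqrt{W/K}+\sum_i\sqrt{L_i/K_i})$ (or a similar expression linear in these ratios). The constants $128\cdot10^6$ and $8\cdot10^6$ are chosen so that each ratio $W/K$, $L_1/K_1$, $L_2/K_2$ is at most about $10^{-7}$. The total error is then below $10^{-3}$, which is \eqref{eq:finite-preparation-error}. I would recheck the exact form of the error term in \cite[Theorem 7.1]{BJY24} and verify that these constants suffice. This numeric matching is the one point where I am unsure of the precise dependence, so it is the step I expect to be the main obstacle. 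If the bound there is linear in $W/K$, the constants leave ample slack. Because $K$, $K_1$ and $K_2$ depend only on $\kappa$ and $\widehat s$, the circuit is independent of the catalyst.

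The resource counts then follow directly. The number of queries to $U_H$ is $O(K_1)=O(\kappa)$, and the number of queries to $R_e$ is $O(K_2)=O(1+\kappa/\widehat s)=O(\kappa/s)$. There are $O(K)$ controlled calls to $S^\circ$, each costing $O(a+1)$ gates by \cref{prop:preparation-transducer}. The clock updates cost $O(K+K_1+K_2)=O(\kappa)$ by \cref{lem:real-compiler}. Together these give $O(\kappa(a+1))$ extra gates.

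Finally, \cref{lem:real-compiler} supplies the remaining structural claims. The clock and control gates are real, the clock uses $O(\log K)=O(\log\kappa)$ qubits (note $K_2\le O(\kappa)$ since $\widehat s\ge 3/8$), and all ancillas return exactly to $\ket0$. The oracle layer and $S^\circ$ are controlled exactly as in \cref{prop:kernel-transducer,prop:input-transducer}.
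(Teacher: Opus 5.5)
Your route is the paper's. You feed the exhibited catalyst and the bounds \eqref{eq:transducer-cost} into \cite[Theorem 7.1]{BJY24} with $O_1=U_H$ and $O_2=R_e$, then read off the costs, realness and ancilla claims from \cref{prop:preparation-transducer,lem:real-compiler}; that bookkeeping is fine. The gap is the step you leave open, which is the only real content of the proof: showing that the specific budgets in \eqref{eq:finite-budgets} give error $<10^{-3}$.

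The bound the paper imports is this. Because $K_1=K$, no separate $L_H$ term appears, and the squared error is at most $4(W/K+L_e/K_2)$, evaluated on the catalyst you actually exhibit. Using $W<9\kappa$, $K\ge128\cdot10^6\kappa$, $L_e\le\kappa/(8\widehat s)$ and $K_2\ge8\cdot10^6(1+\kappa/\widehat s)$, this is at most $4(\tfrac{9}{128}+\tfrac{1}{64})\cdot10^{-6}=\tfrac{11}{32}\cdot10^{-6}<10^{-6}$.

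Your ``harmless'' enlargement is not harmless here. With $W'=16\kappa$ and $L_2'=1+\kappa/\widehat s$, the same formula gives $4(\tfrac18+\tfrac18)\cdot10^{-6}=10^{-6}$. That certifies only error $\le10^{-3}$, with zero slack, not the strict bound. The constants were tuned to the sharp catalyst norms, so you must plug those in. The ``$1+$'' built into $K_2$ already handles the case $L_e<1$, so no enlargement is needed.

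Relatedly, the side condition that matters for the explicit compiler is not $1\le L_i\le W$. It is that the budgets are powers of two with $K_2\le K$ (\cref{lem:real-compiler}), and this must hold exactly, not merely as $K_2=O(\kappa)$. From $\widehat s\ge3s/8\ge3/8$ and $\kappa\ge2$ one gets $8(1+\kappa/\widehat s)<64\kappa<128\kappa$, hence $K_2\le K$.

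Finally, your first paragraph alone would produce \emph{a} circuit with the stated asymptotic costs: \cref{thm:bjy} as a black box, with some fixed $\varepsilon_0<10^{-3}$ and your enlarged bounds. But it leaves the particular budgets \eqref{eq:finite-budgets} unjustified, and those require the explicit error formula above.
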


\begin{proof}
The bounds in \eqref{eq:transducer-cost} give
\[
 W<9\kappa,\qquad L_H<8\kappa,
 \qquad L_e\le\frac{\kappa}{8\widehat s}.
\]
Use the exhibited catalyst in \cref{prop:preparation-transducer}.  The
budgets in \eqref{eq:finite-budgets} satisfy $K_2\le K$: indeed,
$\widehat s\ge3s/8\ge3/8$ and $\kappa\ge2$ give
$8(1+\kappa/\widehat s)<64\kappa<128\kappa$ before rounding.  Apply
\cite[Theorem 7.1]{BJY24} with $O_1=U_H$ and $O_2=R_e$.  Since $K_1=K$,
its squared error is at most
\begin{equation}\label{eq:finite-error}
 4\left(\frac WK+\frac{L_e}{K_2}\right)
 \le4\left(\frac9{128}+\frac1{64}\right)10^{-6}
 =\frac{11}{32}\,10^{-6}<10^{-6}.
\end{equation}
This proves \eqref{eq:finite-preparation-error}.  Moreover,
$K=K_1=O(\kappa)$ and
$K_2=O(1+\kappa/\widehat s)=O(\kappa/s)$, using
$\widehat s\ge3s/8$ and $1\le s\le\kappa$.  The gate bound combines the
$O(a+1)$ controlled-circuit cost from
\cref{prop:preparation-transducer} with the compiler overhead in
\cref{lem:real-compiler}; that lemma also gives the remaining claims.
\end{proof}

\section*{Acknowledgements and AI-disclosure}
The authors used Large Language Models as AI-assisted research and writing
tools throughout the preparation of this manuscript.
The tools were used to help brainstorm ideas and explore proof strategies.
Portions of the manuscript text were redrafted or modified with AI assistance
across all sections.
All final mathematical claims, algorithms, proofs, citations, and wording were
reviewed, edited, and validated by the authors.
The authors assume responsibility for all content of the paper.

\bibliographystyle{alphaurl}
\bibliography{references}
\end{document}